\newif\ifsubmit                 %
\ifsubmit \usepackage[letterpaper]{geometry}
\newcommand{\setcover}{\text{Set Cover}\xspace} %
\newcommand{\cip}{\text{CIP}\xspace} %
\newcommand{\cipi}{\ensuremath{\text{CIP}_\infty}\xspace} %
\newcommand{\kncover}{\text{Knapsack Cover}\xspace} %
\providecommand{\cost}{c\optsub} %
\providecommand{\capacity}{\cost}  %
\providecommand{\weight}{w\optsub} %r
\providecommand{\excess}{\hat{b}\optsub} %
\providecommand{\kcost}{\operatorname{cost}\optpar} %
\providecommand{\ksize}{\operatorname{size}\optpar} %
\providecommand{\kcosts}{\mathcal{C}\optsub} %
\providecommand{\kratios}{\mathcal{R}\optsub} %
\providecommand{\apxkcost}{\widetilde{\operatorname{cost}}\optpar} %
\newcommand{\leaves}{\mathcal{L}} %
 \newcommand{\varj}{\hat{\jmath}}
\newcommand{\alphas}{\mathcal{A}}
\providecommand{\apxpm}{\parof{1 \pm \bigO{1}}}
\providecommand{\citet}[1]{?? \cite{#1}}
\providecommand{\citep}{\cite} 
\newcommand{\cflp}{Carr et al.\ \cite{cflp-00}\xspace}
\newcommand{\ky}{Kolliopoulos and Young \cite{ky-05}\xspace}
\newcommand{\basiclp}{\text{Basic-LP}\xspace}
\newcommand{\score}{\rho\optsub} %
\newcommand{\columncount}{\Delta_0} %
\newcommand{\columnsum}{\Delta_1} %
\newcommand{\rowcount}{\Gamma_0} %
\newcommand{\rowsum}{\Gamma_1} %
\newcommand{\bmin}{b_{\min}}
\newcommand{\pess}{\Phi\optpar} %
\newcommand{\pessrow}[1]{\psi_{#1}\optpar} %
\newcommand{\pessrowA}[1]{\psi_{#1}'\optpar} %
\newcommand{\pessrowB}[1]{\psi_{#1}''\optpar}
\newcommand{\floor}[1]{\lfloor #1 \rfloor}
\newcommand{\ceil}[1]{\lceil #1 \rceil}
\newcommand{\N}{\mathcal{N}}
\newcommand{\M}{\mathcal{M}}
\begin{document}

\title{On Approximating (Sparse) Covering Integer Programs\footnote{This work
    is partially supported by NSF grant CCF-1526799. University of
    Illinois, Urbana-Champaign, IL 61801. {\tt
      \{chekuri,quanrud2\}@illinois.edu}.}}

\author{Chandra Chekuri \and Kent Quanrud}
% \author[1]{Chandra Chekuri} %
% \author[2]{Kent Quanrud} %
% \affil[1]{University of Illinois at Urbana-Champaign,
% USA. \url{chekuri@illinois.edu}. } \affil[2]{University of Illinois
% at Urbana-Champaign, USA. \url{quanrud2@illinois.edu}. }

\maketitle

\begin{abstract}
  We consider approximation algorithms for covering integer programs
  of the form
  \begin{math}
    \text{min } \rip{\cost}{x} \text{ over } x \in \nnintegers^n
    \text{ s.t.\ } A x \geq b \text{ and } x \leq d;
  \end{math}
  where $A \in \nnreals^{m \times n}$,
  $b \in \nnreals^m$, and $\cost, d \in \nnreals^n$ all have
  nonnegative entries. We refer to this problem as \cip, and the
  special case without the multiplicity constraints $x \le d$ as
  \cipi. These problems generalize the well-studied
  \setcover problem. We make two algorithmic contributions.

  First, we show that a simple algorithm based on randomized rounding
  with alteration improves or matches the best known approximation
  algorithms for \cip and \cipi in a wide range of parameter settings,
  and these bounds are essentially optimal. As a byproduct of the
  simplicity of the alteration algorithm and analysis, we can
  derandomize the algorithm without any loss in the approximation
  guarantee or efficiency. Previous work by Chen, Harris and
  Srinivasan \cite{chs-16} which obtained near-tight bounds is based
  on a resampling-based randomized algorithm whose analysis is
  complex.

  Non-trivial approximation algorithms for \cip are based on solving
  the natural LP relaxation strengthened with \emph{knapsack cover}
  (KC) inequalities \cite{cflp-00,ky-05,chs-16}. Our second
  contribution is a fast (essentially near-linear time) approximation
  scheme for solving the strengthened LP with a factor of $n$ speed up
  over the previous best running time \cite{cflp-00}.  To achieve this
  fast algorithm we combine recent work on accelerating the
  multiplicative weight update framework with a partially dynamic
  approach to the knapsack covering problem.

  Together, our contributions lead to near-optimal (deterministic)
  approximation bounds with near-linear running times for \cip and
  \cipi.
\end{abstract}

\thispagestyle{empty}

\newpage

\setcounter{page}{1}

\section{Introduction}
\setcover is a fundamental problem in discrete optimization with many
applications and connections to other problems.  A number of variants
and generalization of \setcover have been studied over the years. In
this paper we consider a general problem that captures many of these
as special cases. This is the minimum cost covering integer program
problem (\cip for short). It is a class of integer programs of the
form
\begin{align*}
  \text{minimize }
  \rip{\cost}{x}
  \text{ over }
  x \in \integers_{\ge 0}^n
  \text{ s.t.\ }
  A x \geq b
  \text{ and }
  x \leq d,
  \labelthisequation[\cip]{mc}
\end{align*}
where $A \in \nnreals^{m \times n}$, $b \in \nnreals^m$, and
$\cost, d \in \nnreals^n$ all have nonnegative coefficients. We let
$N = \norm{A}_0 + \norm{b}_0 + \norm{\cost}_0 + \norm{d}_0 > m + n$
denote the total number of nonzeroes in the input. Let $C \geq 1$ be such
that $A_{i,j} \in \setof{0} \cup [1/C, C]$ for all $i$ and $j$;
$\log C$ reflects the number of bits required to write down a
coefficient of $A_{i,j}$.  Since we are interested in integer
solutions we can assume, without loss of generality, that
$d \in \integers^n$ and $A_{i,j} \leq b_i$ for all $i,j$.
An important special case of \cip is when there are no
multiplicity constraints, in other words, $d_j = \infty$ for all $j$.
We refer to this problem as \cipi.

\cip and \cipi can be understood combinatorially as multiset
multicover problems, particularly if we assume for convenience that
$A$ and $b$ have integer entries. The elements that need to be covered
correspond to the rows of $A$, (say) the element $e_i$ for row $i$,
for a total of $m$ elements. For each element $e_i$, $b_i$ is the
requirement on the number of times $e_i$ needs to be covered.  Each
column $j$ of $A$ correspond to a multiset $S_j$. For each element
$e_i$ and multiset $S_j$, $A_{i,j}$ is the number of times $S_j$
covers $e_i$. Multiplicity constraints are specified by $d$; $d_j$ is
the maximum number of copies of $S_j$ that can be chosen. The cost of
one copy of $S_j$ is $c_j$. The goal is to pick a minimum cost
collection of multisets (with copies allowed) that together cover all the
requirements of the elements, while respecting the multiplicity bounds
on the sets.

\setcover is a special case of \cipi where $A$ is a
$\setof{0,1}$-matrix and $b = \ones$; $A_{i,j} = 1$ implies that
element $e_i$ is in set $S_j$. In this setting the multiplicity bounds
are irrelevant since at most one copy of a set is ever
needed. \setcover is NP-Hard and its approximability has been
extensively studied.  A simple greedy algorithm achieves an
approximation of $H_k \le (1 + \ln k) \le (1+\ln m)$ where
$k = \max_i |S_i|$ is the maximum set size \cite{v-13,ws-11}; here
$H_k = 1 + 1/2 + \ldots + 1/k$ is the $k$'th harmonic
number\footnote{We will be concerned with the setting of arbitrary
  costs for the sets. Unit-cost \setcover admits some improved bounds
  and those will not be the focus here.}.  Unles $P = NP$ there is no
$(1-\delta) \ln m$ approximation for any fixed $\delta > 0$
\cite{m-15,feige-98} where $m$ is the number of elements\footnote{It
  is common to use $n$ for the number of elements and $m$ for the
  number of sets. However, in the setting of covering integer
  programs, it is natural to use our notation in accordance with the
  usual notation for optimization where $n$ is the number of decision
  variables and $m$ is the number of constraints. See \cite{ky-05}
  among others.}. Another approximation bound for \setcover is $f$
where $f$ is the maximum frequency \cite{h-82}; the frequency of an
element is the number of sets that contain it. This bound is achieved
via the natural LP relaxation.  For any fixed $f$, \setcover instances
with maximum frequency $f$ are hard to approximate to within a
$f-\eps$ factor under UGC \cite{bk-10}, and to within $f-1-\eps$ under
$P \neq NP$ \cite{dgkr-05}.

The greedy algorithm for \setcover can be easily generalized to
\cip. Dobson \cite{dobson} analyzed this extension and showed that,
when all entries of $A,b$ are \emph{integers}, it has an approximation
bound of $H_d$, where
$d = \max_{1 \le j \le n} \sum_{i} A_{i,j} \geq C$ is the maximum
column sum\footnote{Wolsey's analysis \cite{wolsey} for the Submodular
  Set Cover problem further generalizes Dobson's result.}. The
analysis for the greedy bound is tight, and the dependency on the
maximum coordinate $C$ is undesirable for several reasons.  In
particular, the entries in $A$ and $b$ can be rational, and the greedy
algorithm's approximation ratio can be as large as $m$ \cite{dobson}
\footnote{Dobson also described a variant of greedy for rational data
  whose approximation ratio is
  $1+\max_j (H_{d_j} + \ln \sum_{i}A_{i,j})$; $d_j$ is the number of
  non-zeroes in column $j$ and the entries of $A$ are assumed to be
  scaled such that the minimum non-zero entry of each row is at least
  $1$.}.  The question of obtaining an improved approximation ratio
that did not depend on $C$ was raised in \cite{dobson}.  For \cipi,
when there are no multiplicity constraints, \citeauthor{rt-87}, in
their influential work on randomized rounding, used the LP relaxation
of \refequation{mc} (which we refer to as \basiclp) to obtain an
$O(\log m)$ approximation \cite{rt-87}. Subsequent work has refined
and improved this bound, and later we will describe recent
approximation bounds by \citet{chs-16} that are much tighter w/r/t the
sparsity of a given instance.

\paragraph{Stronger LP Relaxation:}
In the presence of multiplicity constraints, \basiclp has an unbounded
integrality gap even when $m=1$, which corresponds to the \kncover
problem. The input to this problem consists of $n$ items with item $i$
having cost $c_i$ and size $a_i$, and the goal is to find a minimum
cost subset of the items whose total size is at least a given quantity
$b$. To illustrate the integrality gap of the LP relaxation consider
the following simple example from \cite{cflp-00}.
\begin{align*}
  \min x_1                      %
  \text{ over } x_1,x_2 \geq 0 %
  \text{ s.t.\ } Bx_1 + (B-1)x_2 \ge B                         %
  \text{ and }  x_1, x_2 \le 1.
\end{align*}
It is easy to see that the optimum integer solution has value $1$
while the LP relaxation has value $1/B$, leading to an integrality gap
of $B$. The example shows that the integrality gap is large even when
$d = \ones$, a natural and important setting.

To overcome this gap, \cflp suggested the use of knapsack cover (KC)
inequalities to strengthen the LP.  We describe the idea. For each
$S \subseteq [n]$, one can consider the residual covering constraints
if we force $x_i = d_i$ for all $i \in S$. The residual system is
called the \emph{knapsack covering constraint} for $S$ and written as
$A_S x \geq b_S$, where $b_S \in \reals^n$ is defined by
\begin{math}
  b_{S,i} = \max{0,b_i - \sum_{j \in S}A_{i,j}d_j},
\end{math}
and for $j \in [m]$, $A_{S,i,j}$ is defined by
\begin{align*}
  A_{S,i,j} =
  \begin{cases}
    0 &\text{if } j \in S, \\
    \min{A_{i,j}, b_{S,i}} &\text{otherwise}.
  \end{cases}
\end{align*}
That is, for $S \subseteq [n]$, we compute the residual demand
$b_{S}$, zero the coefficients in $A$ of any contracted coordinate
$j \in S$, and reduce each remaining coefficient in $A$ to be at most
the residual covering demand.

A feasible integral solution $x$ to \refequation{mc} satisfies
$A_S x \geq b_S$ for all $S \subseteq [n]$. The following LP, then, is
a valid linear relaxation of the integer program \refequation{mc}.
\begin{align*}
  \text{minimize }
  \rip{\cost}{x}
  \text{ over }
  x \in \nnreals^n
  \text{ s.t.\ }
  x \leq d
  \text{ and }
  A_S x \geq b_S
  \text{ for all }
  S \subseteq [n].
  \labelthisequation[KC-LP]{kc}
\end{align*}
Note that the knapsack cover constraints made the packing constraints
$x \leq d$ redundant and expendable; \refequation{kc} is a pure
covering problem.  Given a feasible solution $x$ to \refequation{kc},
one can (randomly) round $x$ to a feasible integer solution $y$ to
\refequation{mc}. Kolliopoulos and Young \cite{ky-05} obtain an
$O(\log \columncount)$ approximation via \refequation{kc} where
$\columncount$ is the maximum number of non-zeroes in any column of
$A$; note that $\columncount \le m$. Recent tighter bounds
\cite{chs-16} will be discussed shortly.

\paragraph{Sparsity bounds and motivation:}
We are motivated by the following high-level question. Can one obtain
near-tight approximation bounds for \cip and \cipi that are efficient,
simple and deterministic? \setcover is the model here where a simple
greedy algorithm or a simple primal-dual algorithm gives provably
optimal worst-case approximation ratios with near-linear running time.
We briefly discuss some existing results before stating our results.

The first issue is regarding the approximation ratios for \cip and
\cipi. Currently the best bounds in terms of column sparsity are from
the recent work of \citet*{chs-16}. To
describe the bounds, we assume, without loss of generality, that the
problem is normalized such that entries of $A$ are in $[0,1]$ and
$b \ge \ones$. Following \cite{chs-16}, we let $\columncount$ denote
the maximum number of non-zeroes in any column of $A$, and let
\begin{math}
  \columnsum = \max_{j \in [n]} \sum_{i=1}^m A_{i,j}
\end{math}
denote the maximum column sum. $\columncount$ and $\columnsum$ are the
$\ell_0$ and $\ell_1$ measures of column sparsity of $A$. Similarly we
let $\rowcount$ and $\rowsum$ denote the corresponding measures for
row sparsity of $A$. In the context of \setcover, $\columncount$ is
the maximum set size and $\rowcount$ is the maximum frequency.  Under
the normalization\footnote{Note that the result of Dobson is based on
  a very different normalization.} that
$A_{i,j} \in [0,1]^{m \times n}$, we have
$\columnsum \le \columncount \le m$, and in some cases
$\columnsum \ll \columncount$.  We let $\bmin$ denote
$\min_{i \in [m]} b_i$, and this measures the so-called ``width'' of
the system. As $\bmin$ increases, the problem gets easier in the case
of \cipi. We summarize the relevant approximation ratios from
\cite{chs-16}, all of which are randomized\footnote{These bounds were
  based on the latest version of \cite{chs-16} at the time of our
  work.  %Since this work was first published in \citep{arxiv-v1},
  \citeauthor{chs-18-arxiv} have improved their bounds in a recently
  updated preprint \citep{chs-18-arxiv}, essentially replacing the
  second order $\sqrt{\cdots}$ terms with $\bigO{\ln \ln{\cdots}}$
  terms.  We plan to do a careful comparison with the results in
  \cite{chs-18-arxiv} in a future version of this work.}.
\begin{itemize}
\item A
  $\parof{1 + \ln \columncount + \bigO{\sqrt{\ln
        \columncount}}}$-approximation for \cip via \refequation{kc}.
\item A
  \begin{math}
    \parof{1 + \frac{\ln (1+\columnsum)}{\bmin} + 4\sqrt{ \frac{\ln
          (1+\columnsum)}{\bmin}}}
  \end{math}
  approximation for \cipi via \basiclp.
\item A bicriteria algorithm for \cip: given $\eps > 0$, the algorithm
  outputs an integer solution $z$ with cost at most
  \begin{math}
    \parof{1+ %
      4\frac{\ln (1+\columnsum)}{\bmin \cdot \eps} + %
      5\sqrt{ \frac{\ln (1+\columnsum)}{\bmin \cdot \eps}} %
    }                                                 %
  \end{math}
  times the LP value, and satisfies the multiplicity constraints to
  within a $(1+\eps)$-factor, that is $z \le \ceil{(1+\eps)d}$.
\end{itemize}
The algorithmic framework of \cite{chs-16} is based on resampling,
which is motivated by the developments on the constructive version of
the Lov\'asz Local Lemma starting with the work of Moser and Tardos
and continuing through several subsequent developments. Although the
high-level algorithmic idea is not that complicated, the analysis is
technically involved and randomization seems inherently necessary. One
of the significant and novel contributions of \cite{chs-16} is to
show, for the first time, that approximation bounds based on
$\columnsum$ are feasible.  Not only can $\columnsum$ be much smaller
than $\columncount$, it is also more robust to noise and
perturbation. Noise is typically not an issue for combinatorial
instances such as those arising in \setcover but can be relevant in
instances of \cip that arise from data with real numbers.  We note
that there are two regimes of interest for $\columnsum$. One regime is
when $\columnsum$ is large (at least some fixed constant) in which
case the approximation bounds tend towards $\ln \columnsum$ plus lower
order terms.  The other regime is when $\columnsum$ is small and
tends to $0$; in this regime the approximation ratio guarantees from
preceding bounds for \cipi tend to $1 + O(\sqrt{\columnsum})$.
Approximation bounds in terms of row sparsity are also known for
\cip. Pritchard and Chakrabarty \cite{pc-11} describe a $\rowcount$
approximation for \cip (previous results obtained a similar bound in
more restricted settings).  For \cipi a bound of $(1+\rowsum)$ is
implicit in \cite{pc-11} (see Proposition 7).

The second issue is with regards to efficiency.  A
$(1-\eps)$-approximation for the \basiclp can be obtained in
near-linear $\apxO(N/\eps^2)$ time \cite{young-14} or
$\apxO{N/\eps + m /\eps^2 + n/\eps^3}$ randomized time \cite{cq-18}
(and more efficiently if there are no multiplicity constraints
\cite{ky-efpc-14,wrm-16}). On the other hand, \refequation{kc} is not
as simple to solve because of the exponential number of implicit
constraints. \cflp describe two methods to solve \refequation{kc}.
The first is to use the Ellipsoid method via an approximate separation
oracle\footnote{The LP is not solved exactly but the fractional
  solution output by the algorithm provides a lower bound and suffices
  for the current randomized rounding algorithms.}.  The other is to
use a Lagrangean relaxation based approximation scheme which yields a
running time of $\bigO{n N \poly{\frac{1}{\eps}} \log C}$ for a
$(1+\eps)$-approximate solution.  \ky explicitly raise the question of
a fast approximation algorithm for \cip. The recent work of
\citet{chs-16} discussed above shows that a randomized rounding
technique via a resampling framework yields near-optimal
approximations, and they run in expected near-linear time. This is in
contrast to some previous rounding algorithms \cite{s-99,s-06} that
were rather complex and slow. Hence the bottleneck for \cip is solving
the LP relaxation \refequation{kc}.

\subsection{Our Results}
In this paper we address both the approximability and efficiency of
\cip and \cipi.

Our first set of results is on rounding the fractional solution to the
\basiclp and \refequation{kc}. Our main contribution is to show that a
very simple combination of randomized rounding followed by alteration,
that has been previously considered for covering integer programs
\cite{s-01,ss-12,gn-14}, yields clean and (in some cases) improved
bounds when compared to those of \cite{chs-16}. Our focus in this
paper is in the regime where $\columncount$ and $\columnsum$ are
larger than some fixed (modest) constant. Under this assumption, we
obtain the following improved approximation bounds:
\begin{itemize}
\item A
  \begin{math}
    \parof{\ln \columncount + \ln \ln \columncount + \bigO{1}}
  \end{math}-approximation for \cip via \refequation{kc}.
\item A \begin{math}
    \parof{\ln \columnsum + \ln \ln \columnsum + O(1)}
  \end{math}-approximation for \cipi via \basiclp. When $\bmin$ is
  large
  the ratio improves to \begin{math}
    \parof{\frac{\ln \columnsum}{\bmin} + \ln \frac{\ln
        \columnsum}{\bmin} + O(1)}
  \end{math} under the assumption that $\frac{\ln \columnsum}{\bmin}$
  is sufficiently large.
\item A bicriteria algorithm for \cip via \refequation{kc}: given an
  error parameter $\eps > 0$, the algorithm outputs a solution with
  cost at most
  \begin{math}
    \parof{\frac{\ln \columnsum}{\bmin} + \ln \frac{\ln
        \columnsum}{\bmin} + O\parof{\ln \reps}}
  \end{math}
  times the LP value, and satisfies the multiplicity constraints to
  within a $(1+\eps)$-factor\footnote{It is not hard to show that
    bounds based on $\columnsum$ are not feasible for \cip if
    multiplicity constraints are not violated.}.
\end{itemize}

When $\columncount$ is large, \cite{chs-16} established a hardness
lower bound of the form $\ln \columncount - c \ln \ln \columncount$
for some constant $c$ by extending a result of \citet{t-01} for
\setcover. Thus our improved bounds in this regime get closer to the
lower bound in the second order term. We obtain a more substantial
improvement for the bicriteria approximation as a function of $\eps$,
and it is important to observe that it is based on \refequation{kc}
while \cite{chs-16} uses \basiclp. Perhaps of greater interest than
the precise improvements is the fact that the alteration algorithm is
simple and easy to analyze.  All we really need is a careful use of
the lower tail of the standard Chernoff bound. A significant
consequence of the simple analysis is that we are able to easily and
efficiently derandomize the algorithm via the standard method of
conditional expectations. This leads to simple \emph{deterministic}
algorithms without loss in the approximation bounds. We also believe
that our analysis is insightful for the bound based on $\columnsum$;
it is not easy to see why such a bound should be feasible in the first
place. Finally we note that our bound for \cip based on $\columncount$
is quite elementary and direct, and does not rely on the more involved
analysis for the bound based on $\columnsum$; this is not the case in
\cite{chs-16}.

\begin{remark}
  For \cipi when $\columnsum$ is sufficiently small we can show that
  the alteration approach gives an approximation ratio of
  \begin{math}
    \bigO{1 + \bigO{\sqrt{\columnsum \log \frac{1}{\columnsum}}}},
  \end{math}
  which tends to $1$ as $\columnsum \rightarrow 0$. This is slightly
  weaker than the bound from \cite{chs-16}, which gives an
  approximation ratio of $\bigO{1 + O(\sqrt{\columnsum})}$. We believe
  that our analysis is of the alteration algorithm is not tight.
\end{remark}

\begin{remark}
  The alteration based algorithms have a single parameter $\alpha$
  that controls the scaling of the variables in the randomized
  rounding step. Depending on the regime of interest we choose an
  appropriate $\alpha$ to obtain the best theoretical guarantee. One
  can try essentially all possible values of $\alpha$ (by appropriate
  discretization) and take the best solution. This makes the algorithm
  oblivious to the input parameters. The fixing step in the alteration
  algorithm depends on the objective function $c$ and is
  deterministic. One can make the fixing step oblivious to $c$ via
  randomization and known results on the \kncover problem\footnote{In
    retrospect, the algorithms in \cite{chs-16} appear to be
    randomized and oblivious fixing schemes. However, the analysis is
    involved for various technical reasons.}.
\end{remark}

Our second result is a fast approximation scheme for solving
\refequation{kc}, improving upon the previous bound in \cite{cflp-00}
by a factor of $n$. For polynomially-bounded $C$ (which is a
reasonable assumption in many settings) the running time is
near-linear for any fixed $\eps$. The precise result is stated in the
theorem below.  To achieve the result we develop an incremental
dynamic data structure for the \kncover problem and combine it with
other data structures following our recent line of work on speeding up
MWU based approximation schemes for implicit positive linear programs.

\begin{theorem}
  \labeltheorem{kc-ltas} Let $\eps > 0$ be fixed, and consider an
  instance of \refequation{kc} and let $\opt$ be the value of an
  optimum solution. There is a deterministic algorithm that in
  $\apxO{\frac{N \log C}{\eps^3} + \frac{\parof{m + n} \log
      C}{\eps^5}}$ time outputs $x \in \reals^n$ such that
  \begin{math}
    \rip{c}{x} \leq \opt,
  \end{math}
  \begin{math}
    x \leq d,
  \end{math}
  and for all $S \subseteq [n]$, $A_S x \geq \epsless b_S$.
\end{theorem}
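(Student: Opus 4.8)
The plan is to reduce the optimization problem to a short sequence of feasibility tests and to solve each test with a width-independent multiplicative-weights-update (MWU) solver in which the exponentially many knapsack-cover constraints are touched only implicitly, through an incremental data structure for the \kncover problem. For the reduction, fix a guessed budget $\lambda$ and set $P_\lambda = \{\, x \in \nnreals^n : x \le d,\ \rip{c}{x} \le \lambda \,\}$; this is a box intersected with one packing halfspace, and maximizing a nonnegative linear functional over it is a fractional-knapsack computation costing $\apxO{n}$ after an initial sort by cost ratios. The feasibility test, for a given $\lambda$, decides whether $P_\lambda$ contains a point with $A_S x \ge (1-\eps) b_S$ for all $S \subseteq [n]$ and, if so, returns one. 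Since $\opt$ lies in $P_{\opt}$ and satisfies every knapsack-cover inequality exactly, a geometric search over $\lambda$ (whose range has bit-length polynomial in the input size, contributing only a logarithmic factor absorbed into $\apxO{\cdot}$) returns some $\lambda \le \opt$ for which the test succeeds; the returned $x$ then satisfies $\rip{c}{x} \le \lambda \le \opt$, lies in $P_\lambda$ and hence obeys $x \le d$, and satisfies $A_S x \ge (1-\eps) b_S$ for all $S$ --- which is the claimed guarantee after renaming $\eps$.

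For a fixed $\lambda$, the feasibility test runs the MWU framework for pure covering LPs in its accelerated, width-independent form. Conceptually the method maintains for each covering constraint $A_S x \ge b_S$ a nonnegative multiplier $y_S$ proportional to $\exp(\eta (b_S - A_S x))$ for the current (monotonically increasing) iterate $x$, and repeatedly raises the coordinates that make the most weighted covering progress, rescaling to stay inside $P_\lambda$; a standard potential argument shows that after $\apxO{(m+n)\,\eps^{-\bigO{1}}}$ coordinate updates, either the final iterate is $(1-\eps)$-approximately feasible for the covering system or a packing certificate shows $P_\lambda$ to be knapsack-cover-infeasible. The obstacle is that there are exponentially many sets $S$, so neither the $y_S$ nor the slacks $b_S - A_S x$ can be stored or scanned. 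The new ingredient is an incremental data structure that maintains $x$ under the coordinate increases and, on query, returns an approximation to the update direction an MWU step would take --- equivalently, identifies the most under-covered knapsack-cover constraints and the coordinates that most cheaply repair them --- in amortized time $\eps^{-\bigO{1}}$ times a polylogarithmic factor, instead of the $\Theta(N)$ cost of recomputing the separation oracle of \cflp from scratch.

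Two structural facts make such a structure possible. First, for each row $i$ the knapsack-cover inequalities it generates form a chain-like family indexed by the residual demand level $b_{S,i}$ (which only decreases as $S$ grows), and within this family the constraint relevant to a given iterate $x$ is pinned down by simple rules on the $x_j$ (essentially, contract the coordinates that have become effectively integral); hence per row it suffices to maintain a small collection of truncated covering sums $\sum_j \min(A_{i,j}, \beta)\, x_j$, indexed by a polynomial set of thresholds $\beta$, each kept in a balanced search tree bucketed by the scale of $A_{i,j} \in \{0\} \cup [1/C, C]$ (whence the $\log C$ factor). Second, since accelerated MWU needs only an approximately best update direction and is robust to bounded oracle error, it is enough to maintain these residual demands and truncated sums to within a $(1 \pm \eps)$ factor and to rebuild lazily: because $x$ only increases, a coordinate is touched only when its value grows by another $(1+\eps)$ factor, so it is touched $\apxO{\eps^{-1}}$ times over the whole run, charging $\apxO{N\,\eps^{-1}\log C}$ to the nonzeros of $A$; folding in the extra accuracy the truncated sums require to drive MWU correctly adds further $\eps^{-1}$ factors, and tracking all the $\eps$-dependences yields the $\apxO{N\log C/\eps^3}$ term, while the per-event bookkeeping that does not touch nonzeros of $A$ accumulates to the $\apxO{(m+n)\log C/\eps^5}$ term. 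The whole procedure is deterministic, since covering-LP MWU needs no randomization and the data structures are deterministic.

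The main difficulty, and where the argument is most delicate, is twofold: (i) implementing the incremental \kncover structure so that it supports coordinate increases and approximate most-violated-constraint queries within the stated amortized bounds, while correctly handling the truncation $\min(A_{i,j}, b_{S,i})$ that couples the choice of the contracted set $S$ with the residual demand it induces (a naive re-separation per step costs $\Theta(N)$ and is far too slow); and (ii) proving that the accelerated width-independent MWU solver still converges in the claimed number of updates when it is driven by these lazily maintained, two-sidedly approximate oracle answers --- i.e., that the LP-level $(1 \pm \eps)$ slack and the data-structure-level $(1 \pm \eps)$ error compose without inflating the iteration count. The exact cost guarantee $\rip{c}{x} \le \opt$, as opposed to $(1 + \eps)\opt$, is then automatic from the outer budget search: each successful feasibility test produces a point of cost at most the current $\lambda$, and the smallest $\lambda$ for which the test succeeds is at most $\opt$.
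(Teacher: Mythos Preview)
Your proposal and the paper share the high-level recipe—width-independent MWU plus an incremental knapsack-cover oracle—but they run MWU on opposite sides of the LP, and that choice drives all the concrete mechanics. The paper applies MWU to the \emph{dual} of \refequation{kc}, a pure packing LP with exactly one constraint per primal coordinate $j\in[n]$. Hence there are only $n$ weights $w_j$, all explicit and monotone increasing, and no outer budget search is needed (a primal solution is recovered from the weight vector itself). The per-iteration subproblem is to find the pair $(S,i)$ maximizing $b_{S,i}/\sum_j w_j A_{S,i,j}$; following \cflp this reduces, for each row $i$, to a one-parameter family of \kncover instances indexed by the residual demand $\alpha$, discretized to $O(\eps^{-1}\log C)$ values. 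The paper then makes Lawler's \kncover FPTAS dynamic under increases to the item costs (the $w_j$) via a balanced binary tree over the full set of possible cost-to-size ratios, and handles the second bottleneck—propagating a weight update to every $w_j$ with $j\in S$ when $S$ may have size $\Theta(n)$—by observing that the returned $S$ is always a greedy prefix plus $O(1/\eps)$ expensive items, hence decomposes into $O(\log n)$ canonical subtree intervals over which the \algo{lazy-inc} machinery of \cite{cq-17-soda,cq-17-focs} batches the update in amortized $\apxO{1}$ time.

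Your route instead runs covering MWU on the primal side, with multipliers $y_S$ on the exponentially many constraints, which forces the outer $\lambda$-search and leaves a real gap in how a single MWU step is even defined: the update direction is $\sum_S y_S\, A_{S,\cdot}$, a sum over all $S$, and collapsing it to ``the most under-covered constraint and the coordinates that most cheaply repair it'' replaces a softmax by a hardmax without arguing that width-independent convergence survives. Your structural fact (i) is also looser than what is actually true: for fixed $i$ the KC inequalities are indexed by all $S\subseteq[n]$, not a chain, and the relevant $S$ is not ``contract the effectively integral $x_j$''—it is the \cflp reformulation (search over $\alpha$ and solve a \kncover instance) that collapses the exponential family to a polynomial one. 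Finally, you do not isolate the analogue of the paper's second bottleneck: your lazy-rebuild accounting (``touch $x_j$ only when it grows by a $(1+\eps)$ factor'') bounds the cost of \emph{maintaining} the data structure under monotone changes to $x$, but says nothing about the cost of \emph{enacting} an MWU step whose chosen direction touches $\Theta(n)$ coordinates at once—this is precisely the issue the paper's canonical-interval plus \algo{lazy-inc} construction is engineered to resolve.
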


Together, our results yield deterministic and fast approximation
algorithms for \cip and \cipi that are near-optimal in a wide range of
parameter settings. Our analysis demonstrates that random rounding
plus alteration provides near-tight bounds for \setcover, \cip and
\cipi. The rounding algorithm and analysis can be considered textbook
material.

\subsection{Techniques and other related work}
Approximation algorithms and hardness results for \setcover, its
generalizations, and important special cases have been extensively
studied in the literature. We refer the reader to approximation books
for standard and well-known results \cite{v-13,ws-11}.  \cip and \cipi
have been primarily addressed in previous work via LP relaxations and
randomized rounding starting with the well-known work of Raghavan and
Thompson \cite{rt-87}. Srinivasan has used sophisticated probabilistic
techniques based on the Lov\'asz Local Lemma (LLL) and its
derandomization via pessimistic estimators to obtain bounds that
depend on the sparsity of $A$ \cite{s-99,s-06}. These were also used,
in a black box fashion, for \cip by Kolliopoulos and Young
\cite{ky-05}.  The more recent work by \citet{chs-16} is inspired by
the ideas surrounding the Moser-Tardos resampling framework
\cite{mt-10,hs-13} that led to constructive versions of LLL. They were
the first to consider $\ell_1$-sparsity based bounds. Srinivasan
\cite{s-01} used randomized rounding with alteration for covering and
packing problems. For packing problems the alteration approach has led
to a broad framework on contention resolution with several
applications \cite{bkns-12,cjv-14}.  For covering we are also inspired
by the paper of Saket and Sviridenko \cite{ss-12} who described an
alteration based algorithm for \setcover that achieves an
approximation ratio of
$\rowcount (1-e^{\frac{\ln \columncount}{\rowcount-1}})$ that
addresses both column sparsity and row sparsity in a clean and unified
fashion.  The known hardness of approximation result of \setcover that
we already mentioned carries over to \cip when $\columncount$ is the
parameter of interest. \citet{chs-16}, building upon the results for
\setcover, showed near-optimal integrality gaps and hardness results
for \cip and \cipi in various sparsity regimes as a function of the
parameter $\columnsum$. Together these results show that the current
upper bounds for \cip and \cipi in terms of column and row sparsity
are essentially optimal up to lower order terms.

KC inequalities are well-known in integer programming, and since the
work of \cflp, there have been several uses in approximation
algorithms.  MWU based Lagrangean relaxation methods have been
extensively studied to derive FPTAS's for solving packing, covering
and mixed packing and covering linear programs. We refer the reader to
some recent papers \cite{young-14,mrwz-16,cq-17-soda,cq-18} for
pointers. Building upon some earlier work of Young \cite{young-14}, we
have recently demonstrated, via several applications
\cite{cq-17-soda,cq-17-focs,cq-18}, some techniques to speed up MWU
based approximation schemes for a variety of explicit and implicit
problems. The key idea is the use of appropriate data structures that
mesh with the analysis and flexibility of the high-level MWU based
algorithm. There are certain general principles in this approach and
there are problem specific parts. In this paper our technical
contribution is to adapt the FPTAS for the \kncover problem and
make it dynamic in a manner that is suitable to the needs of the MWU
based updates. This allows us to speed up the basic approach outlined
in \cflp and obtain a fast running time. Note that the rounding step
for \cip loses a large factor in the approximation, and hence solving
the LP to high-precision is not the main focus. Moroever, the large
dependence on $\eps$ is mainly due to the FPTAS for knapsack cover.  A
different line of work based on Nesterov's accelerated gradient
descent achieve running time with a better dependence on $O(1/\eps)$
for solving positive LPs \cite{bi-06,ce-05}.  Until recently, the
running time of these algorithms did not have a good dependence on the
combinatorial parameters. The work of Allen-Zhu and Orecchia has
remedied this for explicit pure packing and covering LPs
\cite{ao-15,wrm-16}. However, it is not clear how well these
techniques can be applied to implicit problems like the ones we
consider here.

\paragraph{Organization:} The paper has several technical results, and
broadly consists of two parts. \refsection{rounding} describes
approximation results obtained via randomized rounding plus alteration
and its derandomization.  \refsection{fast} describes our fast
approximation scheme to solve \refequation{kc} and proves
\reftheorem{kc-ltas}. The two parts can be read independently.  We
made the paper modular to keep it both readable and detailed, and
suggest that the reader skip to section of interest rather than read
the paper sequentially.

\section{Randomized Rounding with Alteration}
\labelsection{rounding} In this section we formally describe two
versions of randomized rounding with alteration and analyze it for
\cipi and \cip. The algorithms are simple and have been proposed and
analyzed previously. We analyze them in a tighter fashion, especially
in terms of the $\ell_1$ sparsity of $A$.  Recall that we have
normalized the problem such that all entries of $A$ are in $[0,1]$ and
$b \ge \ones$. It is convenient to simplify the problem further and
assume that $b = \ones$. This can be done by scaling each row $i$ by
$b_i$. This setting captures the essence of the problem and simplifies
notation and the analysis.

As a preliminary step we state some known facts about the \kncover
problem (which is the special case of \cip with $m=1$) in the lemma
below.

\begin{lemma}[\cflp]
  \labellemma{knapsack-cover} Consider an instance of \kncover of the
  form $\min \rip{c}{x}$ subject to $\sum_{j=1}^n a_j x_j \ge 1$,
  $x \le d$, and $x \in \nnintegers^n$.
  \begin{itemize}
  \item Suppose $y$ is a feasible solution to the \basiclp
    relaxation. Then there is an integer solution $z$ in the support of $y$
    such that
    $\rip{c}{z} \le 2 \rip{c}{y}$ and $y_j \le \ceil{2x_j}$.
  \item Suppose $y$ is a feasible solution to \refequation{kc}
    (satisfying knapsack cover inequalities). Then there is an integer
    solution $z$ in the support of $y$ such that
    $\rip{c}{z} \le 2 \rip{c}{y}$ and $z \le d$.
  \end{itemize}
  Given $y$, an integer vector $z$ satisfying the
  stated conditions can be found in near-linear time.
\end{lemma}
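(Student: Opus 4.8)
Here is how I would approach proving \reflemma{knapsack-cover}.

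\medskip
\noindent\emph{Setup and reduction.} Since we are in the \kncover case there is a single covering constraint $\sum_j a_j x_j \ge 1$, and after the normalization all $a_j \in [0,1]$. First I would discard every coordinate $j$ with $y_j = 0$, so that anything we output is automatically supported within the support of $y$, and then sort the remaining coordinates by cost--efficiency, $c_1/a_1 \le c_2/a_2 \le \cdots$; this sorted order is the order in which the rounding will ``buy'' copies of items. Both bullets are then proved by the same greedy rounding of $y$, differing only in how the coordinates with large fractional value are handled.

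\medskip
\noindent\emph{Handling the large coordinates.} For the \basiclp bullet I would round every coordinate with $y_j \ge 1/2$ up to $\ceil{y_j}$, using the elementary inequality $\ceil{y_j} \le 2y_j$ valid for $y_j \ge 1/2$; this costs at most $2\sum_j c_j y_j$ on those coordinates, and the permitted headroom $\ceil{2y_j}\ge\ceil{y_j}$ keeps $z$ within the stated bound $z_j \le \ceil{2y_j}$. The point of allowing up to $\ceil{2y_j}\ge 2y_j$ copies (rather than $\ceil{y_j}$) is exactly to rescue the factor in the case where no single item covers the residual, as in the integrality-gap example of the introduction. For the \refequation{kc} bullet I would instead \emph{contract} the set $S$ of coordinates whose fractional value is close to $d_j$, setting $z_j = d_j$ there (so $z\le d$ is preserved, and again the cost on these coordinates is within a factor $2$), and then pass to the residual knapsack-cover instance $A_S x \ge b_S$; because $y$ satisfies the KC inequality for $S$, the restriction of $y$ to $[n]\setminus S$ still fractionally covers $b_S$, now with every fractional value bounded away from the corresponding $d_j$.

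\medskip
\noindent\emph{The greedy residual step (the crux).} In either case we are reduced to covering a residual demand $r\le 1$ integrally, using coordinates each of which (i) we are permitted to increment at least once, and (ii) satisfies $\sum_j \bar a_j y_j \ge r$ for the clipped coefficients $\bar a_j = \min(a_j, r)$. Sweeping coordinates in increasing order of $c_j/\bar a_j$ and buying copies until the residual is met, let $j^\star$ be the last coordinate touched. The cost estimate then combines (a) a bound on the total clipped mass used by the sweep --- at most $r$ from the coordinates strictly before $j^\star$ plus at most $\bar a_{j^\star}\le r$ from $j^\star$ itself --- with (b) the fact that $y$ fractionally covers $r$ together with the sorted-efficiency order, which forces $c_{j^\star}/\bar a_{j^\star} = O(\tfrac1r \sum_j c_j y_j)$; multiplying (a) and (b) and adding the cost already charged to the large/contracted coordinates yields the claimed $2\sum_j c_j y_j$. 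I expect the main obstacle to be precisely making this accounting tight: the naive combination of (a) and (b) loses a factor $4$ rather than $2$, so to land on the stated constant one must be careful --- e.g.\ first replace $y$ by a basic optimal solution of the single-constraint LP restricted to the support of $y$ (which has at most one strictly-fractional coordinate) and round only that coordinate up, choosing the large/small threshold in sync with this. This is the analysis of \cflp, which I would ultimately invoke for the constant while recording the support- and multiplicity-bookkeeping that the two bullets require.

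\medskip
\noindent\emph{Efficiency.} Everything above is one sort of the (at most $N$) nonzero entries of the row by $c_j/a_j$, a prefix-sum pass to identify the contracted/large coordinates and then $j^\star$, and $O(1)$ work per coordinate; computing $b_S$ and the clipped coefficients in the \refequation{kc} case is again a single pass. Hence $z$ can be produced in $O(N\log N)$ time, which is near-linear, as claimed.
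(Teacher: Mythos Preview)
The paper does not actually prove this lemma: it is stated with the attribution ``[\cflp]'' and used as a black box throughout (in the proofs of \reftheorem{l0-alteration} and \reflemma{l1-fixing-cost}), with no proof supplied. So there is no ``paper's own proof'' to compare against; the paper simply cites Carr et al.\ for the result.

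Your sketch is a reasonable reconstruction of the standard argument and lines up with what Carr et al.\ do: pass to a basic optimal LP solution supported on $\support(y)$ (which, for a single covering constraint with box constraints, has at most one coordinate strictly between $0$ and $d_j$), round the integral coordinates as-is, and pay for the one fractional coordinate separately. You correctly flag that the naive ``sort by $c_j/a_j$ and sweep'' accounting loses a factor~$4$ and that the basic-solution trick is what recovers the constant~$2$; since you ultimately invoke \cflp for that constant anyway, your write-up and the paper are doing the same thing --- deferring to the cited work. One minor point: the statement as printed has what looks like a typo ($y_j \le \ceil{2x_j}$ should almost certainly read $z_j \le \ceil{2y_j}$), and you have silently corrected it in your argument, which is the right call.
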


The first algorithm we present is called \algo{round-and-fix}, and
pseudocode is given in \reffigure{round-and-fix}. The input consists
of a nonnegative matrix $A \in \nnreals^{m \times n}$, a cost vector
$c \in \nnreals^n$, a nonnegative vector $x \in \nnreals^n$ such that
$A x \geq \ones$, and a positive parameter $\alpha > 0$. The goal is
to output an integer vector $z \in \nnintegers^n$ such that
$A x \geq \ones$ with cost $\rip{c}{z}$ comparable to $\rip{c}{x}$.
The algorithm consists of a \emph{randomized rounding step}, step
\refstep{rounding-loop} in \reffigure{round-and-fix}, and an
\emph{alteration step}, step \refstep{alteration-loop}.

In the randomized rounding step, \algo{round-and-fix} scales up the
fractional solution $x$ by $\alpha$ and independently rounds each
coordinate $i$ to $\floor{\alpha x_i}$ or $\ceil{\alpha x_i}$ such
that the expectation is $\alpha x_i$.  Let $z$ be the random vector
picked in the first step. $z$ may leave several covering constraints
$i \in [m]$ unsatisfied (that is $(Az)_i < 1$).  These constraints are
fixed in the subsequent \emph{alteration step}. In the alteration
step, each unsatisfied covering constraint $i$ is addressed
separately. Let $y^{(i)}$ be an approximately optimum solution to
cover the $i$'th constraint by itself; we can find a constant factor
approximation since this is the knapsack cover problem.  Letting
$U \subseteq [m]$ denote the subset of unsatisfied constraints, we
output the solution $z'$ where
$z'_j = \max\{z_j, \max_{i \in U} y^{(i)}_j\}$ for each $j \in [n]$.

\begin{remark}
  The fixing step for an unsatisfied constraint $i$ can be done more
  carefully. The residual requirement after the randomized step is
  $1 - (Az)_i$, and we can solve a \kncover problem to satisfy
  this requirement. This is crucial when $\columnsum$ is
  small but makes the analysis complex. For the bounds we
  seek in the regime when $\columncount$ and $\columnsum$ are
  sufficiently large constants, it suffices to fix each unsatisfied
  constraint ignoring the contribution from the first step.
\end{remark}

The alterations guarantee that all of the covering constraints are
met. The cost of the solution is a random variable.  Large values of
$\alpha$ decrease the expected cost for alterations but increase the
initial cost of randomized rounding; small values of $\alpha$ decrease
the initial cost of randomized rounding but increase the expected cost
of alterations.  A careful choice of $\alpha$ leads to an
approximation guarantee independent of $m$ and relative to either the
\emph{$\ell_0$-column sparsity of $A$},
\begin{math}
  \columncount \defeq \max_{j \in [n]} \, \sizeof{\setof{i \in [m]
      \where A_{i,j} \neq 0}},
\end{math}
or the \emph{$\ell_1$-column sparsity}
of $A$,
\begin{math}
  \columnsum \defeq \max_{j \in [n]} \sum_{i=1}^m A_{i,j}.
\end{math}
% We also obtain bounds that depend on row-sparsity.

In the sequel we will assume that $\columncount$ and $\columnsum$ are
greater than some fixed constant.  We believe that this is the main
regime of interest. Moreover, the assumption avoids notational and
technical complexity. In \refappendix{small-columnsum} we consider the
case when $\columnsum$ is small.

\begin{figure}
  \begin{framed}
    \input{algos/round-and-fix}
    \caption{An alteration based rounding algorithm for covering
      programs.\labelfigure{round-and-fix}}
  \end{framed}
\end{figure}

\begin{figure}
  \begin{framed}
    \input{algos/contract-round-fix}
    \caption{An alteration based rounding algorithm for covering
      programs with knapsack covering
      constraints. \labelfigure{contract-round-fix}}
  \end{framed}
\end{figure}

We now state the specific theorems for \cip and \cipi and prove them
in subsequent sections.  The first theorem analyzes the performance of
\algo{round-and-fix} in terms of $\ell_1$ sparsity of $A$. Note that
the theorem is directly relevant for \cipi. The bounds on the
multiplicities in \reftheorem{l1-alteration} are insufficient for
\cip, but will prove useful for \cip when discussing the second
algorithm later.

\begin{theorem}
  \labeltheorem{l1-alteration} Let $A \in [0,1]^{m \times n}$,
  $c \in \nnreals^n$, and $x \in \nnreals^n$ such that
  $A x \geq \ones$. Let
  % \begin{math}
  %   \columnsum = \max_{j \in [n]} \sum_{i=1}^m A_{i,j},
  % \end{math}
  % and let
  \begin{math}
    \alpha = \ln \columnsum + \ln \ln \columnsum + \bigO{1}.
  \end{math}
  In $\bigO{\norm{A}_0}$ time,
  \algo{round-and-fix($A$,$c$,$x$,$\alpha$)} returns a randomized
  integral vector $z \in \nnintegers^n$ with coverage
  \begin{math}
    A z \geq \ones,
  \end{math}
  expected cost
  \begin{math}
    \evof{\rip{c}{z}} %
    \leq %
    \parof{\alpha + \bigO{1}} \rip{c}{x},
  \end{math}
  and multiplicities
  \begin{math}
    z < \ceil{\alpha x}.
  \end{math}
\end{theorem}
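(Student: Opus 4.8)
The plan is to verify the four conclusions of \reftheorem{l1-alteration} --- coverage, running time, the multiplicity bound, and the expected cost bound --- in order of difficulty, the last being the only nontrivial one.

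\emph{Coverage, running time, multiplicities.} The alteration loop (step \refstep{alteration-loop}) processes every violated constraint $i$ and only raises coordinates of $z$, so on termination $A z \geq \ones$; once a constraint is satisfied it stays satisfied. The truncation and rounding steps cost $\bigO{n}$, and by \reflemma{knapsack-cover} the fix for constraint $i$ is computable in time near-linear in the number of nonzeroes of row $i$, so the alteration loop costs $\bigO{\norm{A}_0}$ in total. For each violated $i$ we feed \reflemma{knapsack-cover} a fractional cover of constraint $i$ obtained by scaling $x$ (restricted to row $i$) up by a constant, so $y^{(i)}$ is supported on $\operatorname{supp}(x)$ with $y^{(i)}_j \le \ceil{\bigO{1}\cdot x_j}$. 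After step \refstep{rounding-loop} each $\hat z_j \in \setof{\floor{\alpha x_j},\floor{\alpha x_j}+1}$, hence $\hat z_j \le \ceil{\alpha x_j}$, and since $\alpha$ exceeds a sufficiently large constant the updates $z \gets z \vee y^{(i)}$ preserve $z \le \ceil{\alpha x}$.

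\emph{Expected cost.} Let $\hat z$ denote the vector after step \refstep{rounding-loop}, so that $\evof{\rip{c}{z}} = \evof{\rip{c}{\hat z}} + \evof{\rip{c}{z-\hat z}}$; since $\evof{\hat z_j} = \alpha x_j$ the first term is exactly $\alpha\rip{c}{x}$, and it remains to bound $\evof{\rip{c}{z-\hat z}}$ by $\bigO{1}\rip{c}{x}$. Writing $U$ for the random set of constraints violated by $\hat z$, we have $z-\hat z \le \sum_{i\in U} y^{(i)}$ and hence $\evof{\rip{c}{z-\hat z}} \le \sum_i \Pr[i\in U]\,\rip{c}{y^{(i)}}$. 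Two ingredients control this sum. First, up to a deterministic offset $(A\hat z)_i$ is a sum of independent $[0,1]$-valued random variables with total mean $\alpha(Ax)_i \ge \alpha$, so the \emph{sharp} lower-tail Chernoff bound (the one with the $\ln$ factor, not the weaker $e^{-\delta^2\mu/2}$ form) gives $\Pr[i\in U] \le \bigO{\alpha e^{-\alpha}}$; crucially, the bounded-coefficient refinement makes $\Pr[i\in U]$ exponentially smaller still for rows all of whose coefficients are small --- exactly the rows that can inflate the $\ell_0$-column sparsity $\columncount$ but not the $\ell_1$-column sparsity $\columnsum$. Second, by \reflemma{knapsack-cover}, $\rip{c}{y^{(i)}}$ is within a constant factor of the knapsack-cover optimum for row $i$, which in turn is at most a constant times the $\rip{c}{\cdot}$-mass that $x$ places on the small-valued coordinates of row $i$ (those carry at least a $1-2/\alpha$ fraction of constraint $i$'s demand whenever $i\in U$, so scaling them up by $\bigO{1}$ yields a feasible fractional cover).

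\emph{Calibration and charging.} The choice $\alpha = \ln\columnsum + \ln\ln\columnsum + \bigO{1}$ is calibrated precisely so that $\alpha e^{-\alpha} \le 1/(e\columnsum)$. With this, one charges $\Pr[i\in U]\,\rip{c}{y^{(i)}}$ to the $\rip{c}{\cdot}$-shares $c_j x_j$ of the coordinates $j$ whose coefficient $A_{ij}$ is within a constant factor of the largest coefficient in row $i$; for such $j$ the sharp Chernoff bound gives $\Pr[i\in U] \le (\alpha e^{-\alpha})^{c_0/A_{ij}}$ for a constant $c_0>0$, so, grouping the constraints by the geometric scale of their maximum coefficient and using $\sum_i A_{ij} \le \columnsum$, the total charge received by a fixed $j$ is a geometric series in the scale parameter dominated by its first term, of size $\bigO{(\alpha e^{-\alpha})\columnsum}\cdot c_j x_j = \bigO{1}\cdot c_j x_j$; summing over $j$ gives $\evof{\rip{c}{z-\hat z}} = \bigO{1}\rip{c}{x}$. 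The main obstacle is exactly this charging step --- making $\columnsum$, rather than $\columncount$ or $m$, emerge from $\sum_i \Pr[i\in U]\,\rip{c}{y^{(i)}}$. This is what forces the use of the Chernoff lower tail in its sharp, coefficient-sensitive form (so dense rows of tiny coefficients contribute doubly-exponentially small failure probabilities) together with careful bookkeeping of coefficient magnitudes (so a coordinate is charged only by constraints in which its coefficient nearly attains the row maximum, and the series over scales telescopes); everything else --- the rounding term, coverage, running time, and multiplicities --- is routine.
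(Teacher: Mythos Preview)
Your routine claims (coverage, running time, multiplicities, and the $\alpha\rip{c}{x}$ rounding cost) are fine, but the bound on the expected alteration cost has a real gap. You assert that the knapsack-cover optimum for row $i$ is at most a constant times the $c$-mass that $x$ places on its ``small-valued'' coordinates, arguing that ``those carry at least a $1-2/\alpha$ fraction of constraint $i$'s demand whenever $i\in U$.'' But the distribution of demand among coordinates of $x$ is deterministic and cannot change upon conditioning on the random event $i\in U$, and the fixing solution $y^{(i)}$ is likewise chosen deterministically; nothing prevents a row whose small-coefficient coordinates carry almost none of the demand in $x$ and yet has $\Pr[i\in U]>0$. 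Your charging paragraph then switches to coordinates $j$ with $A_{ij}$ near the row \emph{maximum} $a_i$ and invokes $\Pr[i\in U]\le(\alpha e^{-\alpha})^{c_0/A_{ij}}$, which requires every coefficient in row $i$ to be $\le A_{ij}$ --- valid only at the row maximum, not for the ``small'' coordinates of the previous paragraph. Even granting that bound, your geometric series controls $\sum_{i:\,A_{ij}\approx a_i}\Pr[i\in U]$ for each fixed $j$, but closing the argument needs $\rip{c}{y^{(i)}}\le O\bigl(\sum_{j:\,A_{ij}\approx a_i}c_jx_j\bigr)$, which is false in general (e.g., a row whose unique max-coefficient coordinate has $x$-value zero while the actual coverage comes from many tiny-coefficient coordinates).

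The paper's device is a per-row \emph{weighted median coefficient} $\score{i}\in(0,1]$ chosen so that both $\sum_{A_{ij}\ge\score{i}}A_{ij}x_j\ge\tfrac12$ and $\sum_{A_{ij}\le\score{i}}A_{ij}x_j\ge\tfrac12$. The fixing vector is then $2x$ restricted to the $\score{i}$-\emph{large} coordinates, giving cost at most $\frac{4}{\score{i}}\sum_j c_jA_{ij}x_j$ (since $c_jx_j\le c_jA_{ij}x_j/\score{i}$ on that set); Chernoff is applied only to the $\score{i}$-\emph{small} coordinates (each bounded by $\score{i}$, mean at least $\alpha/2$), yielding $\Pr[i\in U]\le C\,\score{i}/\columnsum$ for the stated $\alpha$. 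The factors of $\score{i}$ cancel exactly, and $\sum_i\sum_j c_jA_{ij}x_j=\sum_j c_jx_j\sum_iA_{ij}\le\columnsum\,\rip{c}{x}$ finishes the bound. What is missing from your approach is precisely this row-specific threshold that simultaneously calibrates the fixing-cost bound (via the large side) and the Chernoff exponent (via the small side); the row maximum cannot play both roles at once.
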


The following theorem is an easy corollary of the preceding theorem
since $\columnsum \le \columncount$ when entries of $A$ are from
$[0,1]$. Nevertheless we state it separately since its proof is
simpler and more direct and there is a tighter bound on the additive
constant. We encourage a reader new to the alteration analysis to read
the proof of this theorem before that of \reftheorem{l1-alteration}.

\begin{theorem}
  \labeltheorem{l0-alteration} Let $A \in [0,1]^{m \times n}$, $c \in
  \nnreals^n$, and $x \in \nnreals^n$ such that $A x \geq \ones$. Let
  $\alpha = \ln \columncount + \ln \ln \columncount + \bigO{1}$.  In
  $\bigO{\norm{A}_0}$ time, \algo{round-and-fix($A$,$c$,$x$)} returns
  a randomized integral vector $z \in \nnintegers^n$ with coverage
  \begin{math}
    A z \geq \ones,
  \end{math}
  expected cost
  \begin{math}
    \evof{\rip{c}{z}} \leq %
    \parof{\alpha + \bigO{1}} \rip{c}{x},
  \end{math}
  and multiplicities $z \leq \roundup{\alpha x}$.
\end{theorem}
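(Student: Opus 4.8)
The plan is to check coverage and the multiplicity bound directly, reduce the expected‑cost estimate to a single concentration inequality, and then establish that inequality via a Chernoff lower tail; the last step is the only substantive one. Let $z^{\mathrm{rr}}$ denote the vector right after the rounding loop (step~\refstep{rounding-loop}), and let $U = \setof{i \where (A z^{\mathrm{rr}})_i < 1}$ be the rows repaired in step~\refstep{alteration-loop}. Coverage $A z \ge \ones$ is immediate: step~\refstep{alteration-loop} only raises coordinates of $z$, so rows already covered stay covered, and for each $i \in U$ the update $z \gets z \vee y^{(i)}$ forces $(A z)_i \ge (A y^{(i)})_i \ge 1$ because $y^{(i)}$ is a feasible knapsack‑cover solution for row $i$. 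For the multiplicity bound, after steps~\refstep{truncate}--\refstep{rounding-loop} each coordinate satisfies $z^{\mathrm{rr}}_j \in \setof{\floor{\alpha x_j},\, \floor{\alpha x_j}+1}$ and hence $z^{\mathrm{rr}}_j \le \ceil{\alpha x_j}$. I would produce $y^{(i)}$ by applying the \basiclp part of \reflemma{knapsack-cover} to the fractional point $x$ restricted to the support of row $i$ (which is feasible for that knapsack cover since $(Ax)_i \ge 1$); this gives an integral $y^{(i)}$ supported within $\setof{j \where A_{i,j}\neq 0}$ with $y^{(i)}_j \le \ceil{2 x_j}$. Because $\columncount$ is past a fixed constant we may take $\alpha \ge 2$, so $\ceil{2 x_j} \le \ceil{\alpha x_j}$, and taking coordinatewise maxima over $U$ preserves $z \le \ceil{\alpha x}$.

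For the expected cost, since $z \le z^{\mathrm{rr}} \vee \bigvee_{i \in U} y^{(i)}$ coordinatewise and all vectors are nonnegative, $\rip{c}{z} \le \rip{c}{z^{\mathrm{rr}}} + \sum_{i \in U}\rip{c}{y^{(i)}}$. The rounding is unbiased, so $\evof{z^{\mathrm{rr}}_j} = \alpha x_j$ and $\evof{\rip{c}{z^{\mathrm{rr}}}} = \alpha\,\rip{c}{x}$. By \reflemma{knapsack-cover}, $\rip{c}{y^{(i)}} \le 2\sum_{j : A_{i,j}\neq 0} c_j x_j$, so
\[
  \evof{\textstyle\sum_{i \in U}\rip{c}{y^{(i)}}}
  \;\le\; 2\sum_i \Pr[(A z^{\mathrm{rr}})_i < 1]\sum_{j : A_{i,j}\neq 0} c_j x_j
  \;=\; 2\sum_j c_j x_j \sum_{i : A_{i,j}\neq 0}\Pr[(A z^{\mathrm{rr}})_i < 1].
\]
Each inner sum over $i$ has at most $\columncount$ terms, so it suffices to show $\Pr[(A z^{\mathrm{rr}})_i < 1] \le \gamma/\columncount$ for a constant $\gamma$ we can make arbitrarily small; the expected alteration cost is then at most $2\gamma\,\rip{c}{x}$, and the total expected cost is $\parof{\alpha + \bigO{1}}\rip{c}{x}$.

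The hard part is this tail bound, and the obstacle is that the summands $A_{i,j} z^{\mathrm{rr}}_j$ of $(A z^{\mathrm{rr}})_i$ need not lie in $[0,1]$ (columns with $\alpha x_j > 1$ round to values $\ge 2$), so Chernoff does not apply to $(A z^{\mathrm{rr}})_i$ directly. I would split $z^{\mathrm{rr}}_j = \floor{\alpha x_j} + B_j$ with independent $B_j \sim \mathrm{Bernoulli}(\alpha x_j - \floor{\alpha x_j})$, so that $(A z^{\mathrm{rr}})_i = \beta_i + Y_i$ with $\beta_i = \sum_j A_{i,j}\floor{\alpha x_j}$ deterministic and $Y_i = \sum_j A_{i,j} B_j$ a sum of independent $[0,1]$‑valued variables with mean $\mu_i = \alpha(Ax)_i - \beta_i$. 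If $\beta_i \ge 1$ the failure probability is $0$; otherwise $\mu_i \ge \alpha - \beta_i \ge \alpha - 1$, and the multiplicative lower tail $\Pr[Y_i \le a] \le e^{a - \mu_i}(\mu_i/a)^a$ applied with $a = 1 - \beta_i \in (0,1]$, combined with $-a\ln a \le 1/e$ and the observation that $a(1+\ln\mu) - \mu$ is decreasing in $\mu$ for $\mu > a$ (hence maximized over $\mu \ge \mu_i$ at $\mu = \alpha - 1 + a$), yields $\Pr[(A z^{\mathrm{rr}})_i < 1] \le e^{\bigO{1}}\,\alpha e^{-\alpha}$. Substituting $\alpha = \ln\columncount + \ln\ln\columncount + c_0$ and using $\alpha \le 2\ln\columncount$ (valid once $\columncount$ exceeds a fixed constant) gives $\alpha e^{-\alpha} \le 2 e^{-c_0}/\columncount$, so a sufficiently large constant $c_0$ makes $\Pr[(A z^{\mathrm{rr}})_i < 1] \le \gamma/\columncount$ for any desired $\gamma$, completing the cost bound. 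Finally, the running time is $\bigO{\norm{A}_0}$: steps~\refstep{truncate}--\refstep{rounding-loop} cost $\bigO{n}$, computing $U$ costs $\bigO{\norm{A}_0}$, and each call to \reflemma{knapsack-cover} on row $i$ runs in time near‑linear in the number of nonzeros of that row, summing to $\bigO{\norm{A}_0}$ overall.
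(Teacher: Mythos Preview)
Your proof is correct and follows essentially the same route as the paper: the coverage and multiplicity arguments, the fixing cost bound $2\sum_{j:A_{i,j}\neq 0} c_j x_j$ via \reflemma{knapsack-cover}, the exchange of sums, and the $\columncount$ count are all identical to the paper's proof. The only real difference is in how the Chernoff step is carried out: the paper invokes \reflemma{normalized-chernoff} directly on $(Az)_i$ with mean $\alpha(Ax)_i\ge\alpha$ (implicitly treating each $A_{i,j}z_j$ as a sum of $\lfloor\alpha x_j\rfloor$ degenerate $[0,1]$ variables plus one Bernoulli term), obtaining $\Pr[(Az)_i<1]\le e^{1+\ln\alpha-\alpha}$ and hence $\le 1/(2\columncount)$ for the stated $\alpha$; you instead split off the deterministic floor part $\beta_i$ explicitly, apply Chernoff only to the Bernoulli residues, and then optimize over $a=1-\beta_i$ to reach the same $e^{\bigO{1}}\alpha e^{-\alpha}$ bound. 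Your treatment is a bit more laborious but also more transparent about why the $[0,1]$ hypothesis of the Chernoff lemma is met; both yield the same final estimate.
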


We prove \reftheorem{l0-alteration} in \refsection{l0-alteration} and
we prove \reftheorem{l1-alteration} in \refsection{l1-alteration}.

The upper bounds on the multiplicities, $z < \alpha x + \ones$, are
useful for handling multiplicity constraints via knapsack covering
constraints.  The second algorithm, called \algo{contract-round-fix}
and given in \reffigure{contract-round-fix}, adds a preprocessing step
to \algo{round-and-fix}. The basic idea is by now standard and first
proposed by \cflp to take advantage of the KC inequalities.  At a high
level, we want to simulate the standard randomized rounding without
violating the multiplicity constraints. Note that if
$\alpha x \leq d$, then \algo{round-and-fix} already returns an
integral solution meeting the multiplicity constraints.  If
$\ceil{\alpha x_j} \ge d_j$ for some coordinate $j$, then the
algorithm deterministically sets $z_j = d_j$. After contracting the
large coordinates, the knapsack covering constraints ensure that the
remaining small coordinates still satisfy the residual covering
problem.  We apply \algo{round-and-fix} to the residual
instance.  We prove the following theorems on the performance of
\algo{contract-round-fix} in terms of the $\ell_0$ and $\ell_1$
sparsity of $A$.

\begin{theorem}
  \labeltheorem{l0-alteration-multiplicity} Let
  $A \in [0,1]^{m \times n}$, $c \in \nnreals^n$,
  $d \in \naturalnumbers^n$ and $x \in \nnreals^n$ such that $x$
  covers all the knapsack covering constraints w/r/t $A$ and $d$. Let
  % \begin{math}
  %   \columncount = \max_{j \in [n]} %
  %   \sizeof{\setof{i \in [m] \where A_{i,j} \neq 0}}
  % \end{math}
  % and
  % let
  $\alpha = \ln \columncount + \ln \ln \columncount + \bigO{1}$.

  In $\norm{A}_0$ time,
  \algo{contract-round-fix($A$,$c$,$d$,$x$,$\alpha$,0)} returns a
  randomized integral vector $z \in \nnintegers^n$ with coverage
  \begin{math}
    A z \geq \ones,
  \end{math}
  expected cost
  \begin{math}
    \evof{\rip{c}{z}} \leq %
    \parof{\ln \columncount + \ln \ln \columncount + \bigO{1}} %
    \rip{c}{x}
  \end{math}
  and multiplicities
  \begin{math}
    z \leq d.
  \end{math}
\end{theorem}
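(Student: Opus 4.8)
The plan is to read \algo{contract-round-fix} with $\eps=0$ as a direct reduction to \reftheorem{l0-alteration}. The algorithm fixes the ``large'' coordinates $S=\setof{j\in[n]\where\alpha x_j\ge d_j}$ to their upper bounds (so $z''=d\land S$), forms the residual covering system on the remaining coordinates $\N=[n]\setminus S$ and rows $\M$, and calls \algo{round-and-fix} on it with scaling parameter $\alpha$. Granting \reftheorem{l0-alteration}, I would only have to establish two things: that the knapsack covering hypothesis on $x$ makes the residual system feasible, so the recursive call is legitimate; and that the multiplicity guarantee $z'\le\ceil{\alpha x}$ supplied by \reftheorem{l0-alteration} is exactly calibrated to the definition of $S$, so that $z=z'+z''$ stays below $d$.

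For feasibility I would match the algorithm's data $A',b$ against the formal knapsack covering constraint for $S$. With the residual vector $b=\ones-Az''$ (so $b_i=1-\sum_{j\in S}A_{i,j}d_j$) and the normalisation $b=\ones$ of \refsection{rounding}, the residual demand of \refequation{kc} is $\max\setof{0,b_i}$, which on $\M=\setof{i\where b_i>0}$ equals $b_i$; here $\eps=0$ is exactly what forces the rows of $\M$ to have \emph{strictly} positive residual, so dividing by $b_i$ is safe. For $j\in\N$ the residual coefficient of \refequation{kc} is $\min\setof{A_{i,j},b_i}=b_i\cdot\min\setof{A_{i,j}/b_i,1}=b_i\,A'_{i,j}$, i.e.\ $b_i$ times the matrix $A'$ built by the algorithm. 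Dividing the hypothesis $A_S x\ge b_S$ through by $b_i>0$ on each row $i\in\M$ therefore says precisely $A'(x\land\N)\ge\ones$, so \reftheorem{l0-alteration} applies to \algo{round-and-fix}$(A',c\land\N,x\land\N,\alpha)$. Every column of $A'$ has support inside that of the corresponding column of $A$, so the $\ell_0$-column sparsity of $A'$ is at most $\columncount$ and $\alpha=\ln\columncount+\ln\ln\columncount+\bigO{1}$ is a legitimate parameter (monotonicity of $t\mapsto\ln t+\ln\ln t$). The call returns $z'$, supported on $\N$, with $A'z'\ge\ones$, $\evof{\rip{c}{z'}}\le\parof{\alpha+\bigO{1}}\rip{c}{x\land\N}$, and $z'_j\le\ceil{\alpha x_j}$ for $j\in\N$.

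The three conclusions then follow by short calculations. Coverage: a row $i\notin\M$ has $(Az'')_i\ge1$ by definition, hence $(Az)_i\ge1$; a row $i\in\M$ has $A'_{i,j}\le A_{i,j}/b_i$, so $1\le(A'z')_i\le(Az')_i/b_i$, whence $(Az')_i\ge b_i=1-(Az'')_i$ and $(Az)_i=(Az')_i+(Az'')_i\ge1$. Multiplicities: $z$ equals $d_j$ on $S$ and $z'_j\le\ceil{\alpha x_j}$ on $\N$, and since each $j\in\N$ has $\alpha x_j<d_j$ with $d_j$ integral, $\ceil{\alpha x_j}\le d_j$, so $z\le d$. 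Cost: $d_j\le\alpha x_j$ for $j\in S$ gives $\rip{c}{z''}\le\alpha\sum_{j\in S}c_jx_j$, and adding $\evof{\rip{c}{z'}}\le\parof{\alpha+\bigO{1}}\sum_{j\in\N}c_jx_j$ gives $\evof{\rip{c}{z}}\le\parof{\alpha+\bigO{1}}\rip{c}{x}=\parof{\ln\columncount+\ln\ln\columncount+\bigO{1}}\rip{c}{x}$. The running time is dominated by forming $b$ and $A'$ and by the recursive call, all $\bigO{\norm{A}_0}$.

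I do not expect any genuinely hard step; the statement is essentially a corollary of \reftheorem{l0-alteration}, and the content is the verification above. The place that needs care is the bookkeeping in the feasibility step --- lining $A',b$ up with $A_S,b_S$, and using $\eps=0$ to guarantee the residual rows are strictly positive --- together with the dual observation that the ``$+1$'' slack of $\ceil{\cdot}$ in the multiplicity bound of \reftheorem{l0-alteration} is absorbed at no cost because $d$ is integral while every $j\in\N$ has $\alpha x_j<d_j$. A minor edge case is when the residual column sparsity drops below the constant threshold assumed in \refsection{rounding} (in particular $\M=\emptyset$, a vacuous residual): there the recursive call is just unconstrained randomized rounding with $\evof{z'_j}=\alpha x_j$ and $z'_j\le\ceil{\alpha x_j}$, and the three conclusions hold a fortiori.
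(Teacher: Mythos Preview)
Your proposal is correct and follows essentially the same approach as the paper: split off the large coordinates $S$, invoke the knapsack-cover hypothesis to verify $A'x'\ge\ones$ on the residual system, apply \reftheorem{l0-alteration}, and then combine the cost, coverage, and multiplicity bounds for $z'$ and $z''$ exactly as you do. Your feasibility verification (matching $A',b$ against $A_S,b_S$ and noting that $\eps=0$ forces strictly positive residuals on $\M$) is in fact spelled out more carefully than in the paper, which simply asserts $A'x'\ge\ones$ in one line.
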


\begin{theorem}
  \labeltheorem{l1-alteration-multiplicity} Let
  $A \in [0,1]^{m \times n}$, $c \in \nnreals^n$, $\eps \in (0,1]$,
  $d \in \naturalnumbers^n$, and $x \in \nnreals^n$ such that $x$
  covers all the knapsack covering constraints w/r/t $A$ and
  $x \leq d$. Let
  % \begin{math}
  %   \columnsum = \max_{j \in [n]} \sum_{i=1}^m A_{i,j}
  % \end{math}
  % and
  \begin{math}
    \alpha = \ln \columnsum + \ln \ln \columnsum + \bigO{\ln{\reps}}.
  \end{math}

  In $\norm{A}_0$ time,
  \algo{contract-round-fix($A$,$c$,$d$,$x$,$\alpha$,$\eps$)} returns a
  randomized integral vector $z \in \nnintegers^n$ with coverage
  \begin{math}
    \epsmore A z \geq \ones,
  \end{math}
  expected cost
  \begin{math}
    \evof{\rip{c}{z}} %
    \leq %
    \parof{\ln \columnsum + \ln \ln \columnsum + \bigO{\ln{1/\eps}}} %
    \rip{c}{x},
  \end{math}
  and multiplicities
  \begin{math}
    z \leq d.
  \end{math}
\end{theorem}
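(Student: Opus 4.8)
The plan is to reduce \reftheorem{l1-alteration-multiplicity} to \reftheorem{l1-alteration}, applied to the residual instance that \algo{contract-round-fix} builds internally. Recall that the returned vector decomposes as $z = z' + z''$, where $z'' = d \land S$ saturates the ``large'' coordinates $S = \setof{j \where \alpha x_j \geq d_j}$ deterministically, and $z'$ is the output of \algo{round-and-fix} on the residual data $A'$, $c \land \N$, $x \land \N$, $\alpha$, with $\N = [n] \setminus S$, $b = \ones - A z''$, $\M = \setof{i \where b_i > \eps}$ (more precisely $\setof{i \where b_i > \eps/2}$; see below), and $A'_{i,j} = \min{A_{i,j}/b_i, 1}$ on $\M \times \N$. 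The proof then splits into three tasks: (i) show that $x \land \N$ is a feasible fractional solution of $A' y \geq \ones$; (ii) bound the $\ell_1$-column sparsity of $A'$ so that the prescribed $\alpha$ is an admissible parameter for \reftheorem{l1-alteration} on $A'$; and (iii) lift the coverage, cost, and multiplicity guarantees for $z'$ to those for $z$.

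For (i) I would use the hypothesis that $x$ satisfies every knapsack covering constraint, in particular the one for the set $S$: $A_S x \geq b_S$. Since the problem is normalized so that $b = \ones$, the residual demand is $b_{S,i} = \max{0, 1 - \sum_{j \in S} A_{i,j} d_j} = \max{0, b_i}$, and $A_{S,i,j} = \min{A_{i,j}, b_{S,i}}$ for $j \notin S$. For each row $i \in \M$ we have $b_i > 0$, hence $b_{S,i} = b_i$ and $A_{S,i,j} = b_i A'_{i,j}$; dividing the $i$-th inequality of $A_S x \geq b_S$ by $b_i$ gives exactly $\sum_{j \in \N} A'_{i,j} x_j \geq 1$, so $A'(x \land \N) \geq \ones$. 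For (ii), each column of $A'$ has sum $\sum_{i \in \M} \min{A_{i,j}/b_i, 1} \leq \tfrac{2}{\eps} \sum_{i \in \M} A_{i,j} \leq 2\columnsum/\eps$ (using $b_i > \eps/2$ on $\M$), so the $\ell_1$-column sparsity of $A'$ is at most $2\columnsum/\eps$; a short estimate then shows that $\alpha = \ln \columnsum + \ln \ln \columnsum + \bigO{\ln(1/\eps)}$, with a large enough hidden constant, dominates $\ln(2\columnsum/\eps) + \ln \ln(2\columnsum/\eps) + \bigO{1}$, the threshold \reftheorem{l1-alteration} needs for $A'$. (I would remark that the conclusions of \reftheorem{l1-alteration} hold for any $\alpha$ above this threshold, since a larger scaling parameter only shrinks the alteration cost in its proof.) Invoking \reftheorem{l1-alteration} on $(A', c \land \N, x \land \N, \alpha)$ then yields, in $\bigO{\norm{A'}_0} = \bigO{\norm{A}_0}$ time, a random $z'$ supported on $\N$ with $A' z' \geq \ones$, $z' < \ceil{\alpha (x \land \N)}$, and $\evof{\rip{c}{z'}} \leq \alpha \sum_{j \in \N} c_j x_j + \bigO{1} \sum_{j \in \N} c_j x_j$, separating the randomized-rounding cost (which equals $\alpha \sum_{j \in \N} c_j x_j$ exactly) from the alteration cost (which is $\bigO{1} \sum_{j \in \N} c_j x_j$).

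For (iii): \emph{coverage} --- if $i \notin \M$ then $(Az)_i \geq (A z'')_i = 1 - b_i \geq 1 - \eps/2 \geq 1/(1+\eps)$ for $\eps \le 1$; if $i \in \M$ then $A' z' \geq \ones$ together with $A'_{i,j} \leq A_{i,j}/b_i$ gives $\sum_{j \in \N} A_{i,j} z'_j \geq b_i$, and since $z'$ and $z''$ have disjoint supports, $(Az)_i = (A z'')_i + \sum_{j \in \N} A_{i,j} z'_j \geq (1 - b_i) + b_i = 1$; hence $\epsmore A z \geq \ones$. \emph{Multiplicities} --- for $j \in S$, $z_j = z''_j = d_j$; for $j \in \N$, $\alpha x_j < d_j$ and $d_j \in \integers$ force $\ceil{\alpha x_j} \leq d_j$, so $z_j = z'_j < \ceil{\alpha x_j} \leq d_j$; thus $z \leq d$. \emph{Cost} --- $\rip{c}{z} = \sum_{j \in S} c_j d_j + \rip{c}{z'}$, and $d_j \leq \alpha x_j$ on $S$ gives $\sum_{j \in S} c_j d_j \leq \alpha \sum_{j \in S} c_j x_j$, so $\evof{\rip{c}{z}} \leq \alpha \sum_{j \in S} c_j x_j + \alpha \sum_{j \in \N} c_j x_j + \bigO{1} \rip{c}{x} = \parof{\alpha + \bigO{1}} \rip{c}{x} = \parof{\ln \columnsum + \ln \ln \columnsum + \bigO{\ln(1/\eps)}} \rip{c}{x}$. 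The running time is $\bigO{\norm{A}_0}$ for assembling the residual instance plus $\bigO{\norm{A'}_0} = \bigO{\norm{A}_0}$ for \algo{round-and-fix}.

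The step I expect to require the most care is not any single inequality but the bookkeeping that keeps the factor at $\alpha + \bigO{1}$ rather than $2\alpha + \bigO{1}$: the cost $\sum_{j \in S} c_j d_j$ of the deterministically saturated coordinates must be \emph{charged against} the $\alpha \rip{c}{x}$ budget, not added on top of it, and this works precisely because $S$ is defined with the same $\alpha$, so $d_j \leq \alpha x_j$ throughout $S$. The second delicate point is that dividing row $i$ by $b_i$ --- which may be as small as $\eps$ --- inflates the $\ell_1$-column sparsity of the residual matrix by up to a $1/\eps$ factor, and one must verify that the $\bigO{\ln(1/\eps)}$ slack deliberately placed in $\alpha$ absorbs the resulting $\ln(\columnsum/\eps) + \ln \ln(\columnsum/\eps)$; this is exactly the price the $\columnsum$-based bound pays when multiplicity constraints are present, consistent with the observation that pure $\columnsum$-based bounds are infeasible for \cip without violating those constraints.
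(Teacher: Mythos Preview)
Your proof is correct and follows the same approach as the paper: bound the $\ell_1$-column sparsity of the residual matrix $A'$ by $\bigO{\columnsum/\eps}$, invoke \reftheorem{l1-alteration} on the residual instance, and combine with the deterministic saturation on $S$. You are in fact slightly more careful than the paper in two spots --- you explicitly derive $A'(x \land \N) \geq \ones$ from the KC constraint for $S$, and your $\eps/2$ threshold makes the coverage bound $(1+\eps)(Az)_i \geq 1$ exact for $i \notin \M$ (the paper's $\eps$ threshold yields only $(1+\eps)(1-\eps) = 1-\eps^2$, which is off by a lower-order term).
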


Scaling up the output to \reftheorem{l1-alteration-multiplicity} by a
$\epsmore$-multiplicative factor and then rounding up to an integral
vectors shifts the approximation error from the coverage constraints
to the multiplicity constraints, as follows.
\begin{corollary}
  Let $A \in [0,1]^{m \times n}$, $c \in \nnreals^n$,
  $\eps \in (0,1]$, $d \in \integers^n$, and $x \in \nnreals^n$ such
  that $x$ covers all the knapsack covering constraints w/r/t $A$ and
  $d$.% Let
  % \begin{math}
  %   \columnsum = \max_{j \in [n]} \sum_{i=1}^m A_{i,j}.
  % \end{math}

  In $\norm{A}_0$ time, one can compute a randomized integral vector
  $z \in \nnintegers^n$ with coverage
  \begin{math}
    A z \geq \ones,
  \end{math}
  expected cost
  \begin{math}
    \evof{\rip{c}{z}} %
    \leq %
    \parof{\epsmore \ln \columnsum + \bigO{\ln{1/\eps}}} %
    \rip{c}{x},
  \end{math}
  and multiplicities
  \begin{math}
    z \leq \roundup{\epsmore d}.
  \end{math}
\end{corollary}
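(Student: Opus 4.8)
The plan is to derive the corollary from \reftheorem{l1-alteration-multiplicity} by converting the multiplicative slack in the coverage constraints into slack in the multiplicity constraints. First I would invoke \reftheorem{l1-alteration-multiplicity} on the given instance with error parameter $\eps$ (or a small constant fraction of it) to obtain a random integral $z_0 \in \nnintegers^n$ with $\epsmore A z_0 \geq \ones$, $z_0 \leq d$, and $\evof{\rip{c}{z_0}} \leq \parof{\ln\columnsum + \ln\ln\columnsum + \bigO{\ln\reps}}\rip{c}{x}$. The proposed output is the coordinatewise ceiling $z := \ceil{\epsmore z_0}$.

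Coverage and the multiplicity bound are then immediate: since $z \geq \epsmore z_0$ coordinatewise and $A$ is nonnegative, $A z \geq \epsmore A z_0 \geq \ones$; and since $z_0 \leq d$ and rounding up is monotone, $z \leq \ceil{\epsmore d}$. So all of the content is in the cost estimate, which I would split into two pieces. The first piece rewrites the cost of $z_0$ into the form appearing in the statement: by the elementary inequality $\ln u \leq \eps u + \ln\reps$ (the maximum of $\ln u - \eps u$ over $u > 0$ is $\ln\reps - 1$), applied with $u = \ln\columnsum$, one has $\ln\ln\columnsum \leq \eps\ln\columnsum + \bigO{\ln\reps}$, so $\evof{\rip{c}{z_0}} \leq \parof{\epsmore\ln\columnsum + \bigO{\ln\reps}}\rip{c}{x}$ already. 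The second piece bounds the overhead of rounding $\epsmore z_0$ up to the integral vector $z$: writing $z_j = (z_0)_j + \ceil{\eps(z_0)_j}$ and applying $\ceil{\eps k} \leq \eps k + 1$ only on coordinates with $(z_0)_j \geq 1$, one gets $\rip{c}{z} \leq \epsmore \rip{c}{z_0} + \sum_{j \in \operatorname{supp}(z_0)} c_j$.

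The step I expect to be the main obstacle is controlling the stray term $\sum_{j \in \operatorname{supp}(z_0)} c_j$: bounding it crudely by $\rip{c}{z_0}$ loses a further multiplicative factor of roughly two, which is too lossy for the stated bound. To avoid this I would not round every coordinate blindly, but add integral coverage only where it is actually missing. Concretely, $\epsmore A z_0 \geq \ones$ means that each constraint has residual $1 - (A z_0)_i \leq \eps/\epsmore$, and the scaled LP point $\eps z_0 \leq \eps d$ fractionally covers all of these residuals; one can then round this small correction by a knapsack-cover-style argument (in the spirit of \reflemma{knapsack-cover}) rather than by a plain ceiling, and verify that the resulting integral correction stays within $\ceil{\epsmore d} - z_0$ and adds only $\bigO{\ln\reps}\rip{c}{x}$ to the cost; making this last verification go through, possibly by exploiting the structure of the vector $z_0$ that \algo{contract-round-fix} produces (its deterministically contracted part versus its \algo{round-and-fix} part), is the delicate point. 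An alternative route with the same effect is to run \reftheorem{l1-alteration-multiplicity} with error $\eps/2$ and then push only the deterministically contracted coordinates (those fixed to $d_j$) up to $\ceil{\epsmore d_j}$, which by itself makes up the residual of at most $\eps/2$ on every constraint left uncovered. The remaining items --- that $x$ stays feasible for the knapsack-cover system handed to the subroutine, and that the running time is that of \reftheorem{l1-alteration-multiplicity} plus $\bigO{n}$ postprocessing --- are routine.
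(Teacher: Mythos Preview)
Your basic approach---take the output $z_0$ of \reftheorem{l1-alteration-multiplicity}, scale by $(1+\eps)$, and round up coordinatewise---is exactly the paper's approach; indeed, the paper's entire ``proof'' is the single sentence preceding the corollary (``Scaling up the output \dots\ by a $(1+\eps)$-multiplicative factor and then rounding up to an integral vector shifts the approximation error from the coverage constraints to the multiplicity constraints''). The paper gives no further detail on the cost, and in particular does not address the stray term $\sum_{j\in\operatorname{supp}(z_0)} c_j$ that you isolate.

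Your concern about that term is legitimate: the crude bound $\rip{c}{\roundup{(1+\eps)z_0}}\leq (2+\eps)\rip{c}{z_0}$ yields a leading coefficient of order $2$ (not $1+\eps$) in front of $\ln\columnsum$, so the simple scale-and-round does not obviously deliver the precise constant stated in the corollary. The paper appears to be informal here, treating the corollary as a qualitative restatement of the theorem rather than a sharp inequality; your elaborate fixes (residual knapsack-cover rounding, or pushing only the contracted coordinates up) go well beyond anything the paper actually does. In short: you have the paper's argument, and you are worrying about a constant-factor issue that the paper itself glosses over.
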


In \refsection{knapsack-cover-alterations}, we analyze
\algo{contract-round-fix} and prove
\reftheorem{l1-alteration-multiplicity}.

Proofs of \reftheorem{l1-alteration} and
\reftheorem{l1-alteration-multiplicity} can be adapted to obtain
improved approximations when one considers $Ax \ge b$ with
$\bmin > 1$. We prove the stronger bound in \refsection{large-demand}.
We derandomize the algorithm from \reftheorem{l1-alteration} in
\refsection{derandomization}. The other results can be made
deterministic in the same fashion.

\subsection{$\ell_0$-column sparse covering problems}
\labelsection{l0-alteration} In this section, we prove
\reftheorem{l0-alteration}.  The main point of interest is the
expected cost, and the high level approach is as follows. The expected
cost of the rounded solution comes from either the randomized rounding
step or the subsequent alterations, where the expected cost of
randomized rounding is immediate.  To analyze the expected cost of
alterations, we first bound the fixing cost of any constraint $i$,
with cost proportional to the restriction of the fractional solution
$x$ to coordinates $j$ with nonzero coefficients ($A_{i,j} >
0$). Then, we analyze the probability of a constraint $i$ being unmet,
and obtain a probability inversely proportional to $\columncount$. The
expected cost of alteration is the sum, over each constraint $i$, of
the product of probability of failing to meet the $i$th constraint and
the cost of repairing $z$ to fix it. This sum cancels out nicely and
shows that the expected cost of alteration is at most $\rip{c}{x}$.

\begin{lemma}
  \labellemma{basic-concentration} Let $i \in [m]$, and let
  $a = \max_{i,j} A_{i,j}$. After randomized rounding in steps
  \refsubsteps{rounding-loop},
  \begin{align*}
    \probof{(A z)_i < 1} %
    \leq %
    \exp{\frac{1 + \ln{\alpha} - \alpha}{a}}.
  \end{align*}
  For $\alpha \geq \ln \columncount + \ln \ln \columncount + \bigO{1}$,
  we have
  \begin{math}
    \probof{(A z)_i < 1} %
    \leq %
    \frac{1}{2\columncount}.
  \end{math}
\end{lemma}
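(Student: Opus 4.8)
The plan is to apply a lower-tail Chernoff bound to $(Az)_i = \sum_{j=1}^n A_{i,j}z_j$, after first peeling off the deterministic part of the rounding. After step~\refstep{truncate} we have $z_j = \floor{\alpha x_j} + Y_j$, and by step~\refstep{rounding-loop} the $Y_j$ are independent Bernoulli random variables with $\evof{Y_j} = \alpha x_j - \floor{\alpha x_j}$. Hence $(Az)_i = \mu_0 + X$, where $\mu_0 = \sum_j A_{i,j}\floor{\alpha x_j}$ is a fixed nonnegative number and $X = \sum_j A_{i,j}Y_j$ is a sum of independent random variables, each supported on $[0, A_{i,j}] \subseteq [0,a]$. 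Since $Ax \geq \ones$ we have $\evof{(Az)_i} = \alpha(Ax)_i \geq \alpha$, so $\evof{X} = \alpha(Ax)_i - \mu_0$.

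If $\mu_0 \geq 1$ then $(Az)_i \geq 1$ with probability one and there is nothing to prove, so assume $\mu_0 < 1$ and set $t = 1 - \mu_0 \in (0,1]$. The bad event $\{(Az)_i < 1\}$ is exactly $\{X < t\}$, and $\mu := \evof{X} \geq \alpha - \mu_0 = \alpha - 1 + t > t$ (using $\alpha > 1$; the bound is trivial when $\alpha = 1$). I would then invoke the standard multiplicative lower-tail inequality for a sum of independent random variables taking values in $[0,a]$ — which follows from the pointwise bound $e^{-\lambda y} \leq 1 - (y/a)(1 - e^{-\lambda a})$ for $y \in [0,a]$, independence, Markov's inequality, and optimizing $\lambda$ — to get $\probof{X \leq t} \leq \exp{-\mu/a}\,(e\mu/t)^{t/a}$.

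The crux is to bound this right-hand side uniformly over the possible values of $\mu_0$. Since $\exp{-\mu/a}(e\mu/t)^{t/a}$ is decreasing in $\mu$ for $\mu > t$, we may substitute the lower bound $\mu \geq \alpha - 1 + t$; taking logarithms and simplifying, the exponent becomes $g(t)/a$ with $g(t) = -(\alpha-1) + t\ln(1 + (\alpha-1)/t)$. Differentiating, $g'(t) = \ln(1+u) - u/(1+u)$ where $u = (\alpha-1)/t > 0$, which is nonnegative by the elementary inequality $\ln(1+u) \geq u/(1+u)$. Thus $g$ is nondecreasing on $(0,1]$ and attains its maximum at $t = 1$, where $g(1) = 1 + \ln\alpha - \alpha$; this yields the first claim $\probof{(Az)_i < 1} \leq \exp{(1 + \ln\alpha - \alpha)/a}$. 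I expect this step — recognizing that the worst case is $\mu_0 = 0$ and verifying the monotonicity in both $\mu$ and $t$ — to be the only place requiring care; the rest is bookkeeping.

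For the second statement, note $a \leq 1$ since the entries of $A$ lie in $[0,1]$, and $1 + \ln\alpha - \alpha \leq 0$ for $\alpha \geq 1$, so $\probof{(Az)_i < 1} \leq e^{1 + \ln\alpha - \alpha} = \alpha e^{1-\alpha}$. It remains to verify $\alpha e^{1-\alpha} \leq 1/(2\columncount)$, equivalently $\alpha - \ln\alpha \geq \ln 2 + \ln\columncount$, when $\alpha = \ln\columncount + \ln\ln\columncount + \bigO{1}$ with a sufficiently large additive constant. Since $\ln\alpha = \ln\ln\columncount + o(1)$ as $\columncount \to \infty$, we get $\alpha - \ln\alpha = \ln\columncount + \bigO{1}$, which exceeds $\ln 2 + \ln\columncount$ once the constant is chosen large enough; this is exactly where the standing assumption that $\columncount$ is larger than a fixed constant enters.
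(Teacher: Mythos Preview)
Your proof is correct and follows essentially the same approach as the paper, which simply invokes its Chernoff inequality (\reflemma{normalized-chernoff}) with $\beta = 1$ and $\mu = \alpha (Ax)_i \geq \alpha$ and then verifies the second claim numerically for a specific additive constant. Your treatment is in fact more careful: by explicitly splitting off the deterministic part $\mu_0 = \sum_j A_{i,j}\floor{\alpha x_j}$ and showing via the monotonicity of $g(t)$ that the worst case is $\mu_0 = 0$, you justify a step the paper glosses over (namely, that the Chernoff lemma as stated applies to $(Az)_i$ even though the summands $A_{i,j}z_j$ need not lie in $[0,a]$).
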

\begin{proof}
  We apply the Chernoff inequality \reflemma{normalized-chernoff} with
  $\beta = 1$ and $\mu = (A x)_i \geq \alpha$. One can easily verify
  (numerically for instance) that if we choose
  $\alpha = \ln \columncount + \ln \ln \columncount + 4$, then the
  inequality is satisfied for $\columncount \ge 2$.
\end{proof}

\begin{proof}[Proof of \reftheorem{l0-alteration}]
  Let $z \in \nnintegers^n$ be the randomized integral vector output
  by \algo{round-and-fix}.  By the alteration step, we have
  $A z \geq \ones$. We need to bound the expected cost by
  \begin{math}
    \evof{\rip{\cost}{z}} %
    \leq %
    \parof{\alpha + \bigO{1}} \rip{c}{x},
  \end{math}
  and the multiplicities by $z \leq \roundup{\alpha x}$.

  The cost of $z$ comes from the randomized rounding in steps
  \refsubsteps{rounding-loop} and from alterations for unmet
  constraints in steps \refsubsteps{alteration-loop}. The expected
  cost of the rounding step of $\alpha\rip{c}{x}$. For each
  $i \in [m]$, the $i$th constraint is unmet with probability
  $\leq \frac{1}{2\columncount}$ by
  \reflemma{basic-concentration}. When a constraint $i$ is unmet, by
  \reflemma{knapsack-cover}, we can fix $z$ so that $(A z)_i \geq 1$ with
  additional cost at most
  \begin{math}
    2\sum_{j: A_{i,j} \neq 0} c_j x_j.
  \end{math}
  Summed over all $i$, by \tagr interchanging sums and \tagr
  definition of $\columncount$, the expected cost from alteration is
  \begin{align*}
    \frac{1}{2\columncount}\sum_{i =1}^m 2\sum_{j \where A_{i,j} \neq 0}
    c_j x_j                     %
    \tago{=}                           %
    \sum_{j =1}^n c_j \sum_{i \where A_{i,j} \neq 0}
    \frac{x_j}{\columncount}
    \tago{\leq}
    \rip{c}{x},
  \end{align*}
  as desired. Between randomized rounding and alterations, the
  expected cost of $z$ is
  \begin{math}
    \parof{\alpha + 1} \rip{\cost}{x},
  \end{math}
  as desired.

  It remains to bound the multiplicities of $z$. For a fixed
  coordinate $j \in [n]$, $z_j$ is set by either the randomized
  rounding in steps \refsubsteps{rounding-loop} or by an alteration in
  \refstep{constraint-alteration}. The randomized rounding step sets
  $z_j$ to at most $\roundup{\alpha x_j}$. By
  \reflemma{knapsack-cover}, an alteration sets $z_j$ to at most
  $\roundup{2 x_j}$. Thus $z_j \leq \roundup{\alpha x_j}$ for
  $\alpha \ge 2$.
\end{proof}

\subsection{$\ell_1$-column sparse covering programs}

\labelsection{l1-alteration}    %

In this section, we analyze the alteration-based algorithm
\algo{round-and-fix} for covering programs (without multiplicity
constraints) and prove \reftheorem{l1-alteration}. To build some
intuition for our analysis, suppose all of the nonzero coordinates in
$A$ are big (and close to 1). Then we essentially have the
$\columncount$ setting analyzed more simply in
\refsection{l0-alteration}. On the other hand, suppose $A_{i,j} = a$
for all nonzero $A_{i,j}$, for some small $a < 1$. Here the gap
between $\columncount$ and $\columnsum$ is a multiplicative factor of
$\frac{1}{a}$, and the fixing costs from the $\ell_0$-setting sum to
$\bigO{\frac{\columnsum}{a} \rip{\cost}{x}}$. To offset the increased
costs, observe that because the coordinates are uniformly bounded by
$a$, the Chernoff bound tightens exponentially by a
$\frac{1}{a}$-factor, giving a much better bound the probability of
needing to fix each constraint $i$.  In general, the coefficients in a
row $i$ can be non-uniform, with some close to $1$ and some much less
than $1$.  The key to the analysis is identifying a certain threshold
$\score{i}$ for each covering constraint $i$ that divides the
coordinates in the $i$th row between ``big'' and ``small''.  We call
$\score{i}$ the \emph{$i$th (weighted) median coefficient.}
\begin{lemma}
  \labellemma{median}
  Suppose $A x \geq \ones$. For each $i$, there exists
  $\score{i} \in (0,1]$ such that
  \begin{center}
    \begin{math}
      \sum_{j \where A_{i,j} \geq \score{i}} A_{i,j} x_j %
      \geq                                               %
      \frac{1}{2}
    \end{math}
    and
    \begin{math}
      \sum_{j \where A_{i,j} \leq \score{i}} A_{i,j} x_j %
      \geq %
      \frac{1}{2}.
    \end{math}
  \end{center}
\end{lemma}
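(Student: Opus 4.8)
The plan is to define $\score{i}$ via a natural sweeping/thresholding argument on the coefficients appearing in row $i$. Fix $i$ and consider the quantity
\begin{math}
  g(t) \defeq \sum_{j \where A_{i,j} \geq t} A_{i,j} x_j
\end{math}
as $t$ ranges over $(0,1]$. Since $A_{i,j} \in [0,1]$ and $A x \geq \ones$ means $\sum_j A_{i,j} x_j \geq 1$, we have $g(t) \to \sum_j A_{i,j} x_j \geq 1$ as $t \to 0^+$, while $g(1) = \sum_{j \where A_{i,j} = 1} A_{i,j} x_j$ could be anything from $0$ to at least $1$. The function $g$ is nonincreasing in $t$ and is a step function with jumps at the (finitely many) distinct values taken by the $A_{i,j}$. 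The complementary sum $h(t) \defeq \sum_{j \where A_{i,j} \leq t} A_{i,j} x_j$ is nondecreasing, and for any $t$ we have $g(t) + h(t) \geq \sum_j A_{i,j} x_j \geq 1$ (coordinates with $A_{i,j} = t$ are counted in both, so the inequality can even be strict).

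The key step is to pick $\score{i}$ to be the threshold at which $g$ crosses $\tfrac12$. Concretely, I would set
\begin{math}
  \score{i} \defeq \sup\setof{t \in (0,1] \where g(t) \geq \tfrac12}
\end{math}
(the set is nonempty for small $t$ since $g(t) \geq 1$ there, so the sup is well-defined and lies in $(0,1]$). Because $g$ is a right-... — more carefully, because $g$ only changes value at the finitely many distinct coefficient values, the sup is attained: at $t = \score{i}$ we still have $g(\score{i}) \geq \tfrac12$, i.e.\ $\sum_{j \where A_{i,j} \geq \score{i}} A_{i,j} x_j \geq \tfrac12$, which is the first required inequality. For the second inequality, note that for every $t' > \score{i}$ we have $g(t') < \tfrac12$ by maximality of $\score{i}$; taking $t'$ just above $\score{i}$ (i.e.\ strictly between $\score{i}$ and the next larger coefficient value, or $t' = 1$ if there is none larger than $\score{i}$), the coordinates with $A_{i,j} > \score{i}$ are exactly those with $A_{i,j} \geq t'$, so $\sum_{j \where A_{i,j} > \score{i}} A_{i,j} x_j = g(t') < \tfrac12$. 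Hence
\begin{math}
  \sum_{j \where A_{i,j} \leq \score{i}} A_{i,j} x_j
  = \sum_j A_{i,j} x_j - \sum_{j \where A_{i,j} > \score{i}} A_{i,j} x_j
  \geq 1 - \tfrac12 = \tfrac12,
\end{math}
giving the second required inequality.

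The one subtlety to handle carefully — and the only place I expect any friction — is the discreteness of $g$: I need to know the crossing value is attained and to argue cleanly about "just above $\score{i}$." This is entirely routine since there are only finitely many distinct coefficients in the row, so I would just enumerate the distinct values $v_1 > v_2 > \cdots > v_k$ of $\setof{A_{i,j} \where A_{i,j} > 0}$ (note each is $\le 1$), let $G_\ell = \sum_{j : A_{i,j} \geq v_\ell} A_{i,j} x_j$ be the partial sums (nondecreasing in $\ell$, with $G_k = \sum_j A_{i,j} x_j \geq 1$), pick the smallest $\ell^\ast$ with $G_{\ell^\ast} \geq \tfrac12$, and set $\score{i} = v_{\ell^\ast}$; then $G_{\ell^\ast} \geq \tfrac12$ handles the first inequality and $G_{\ell^\ast - 1} < \tfrac12$ (interpreting $G_0 = 0$) together with $\sum_j A_{i,j} x_j \geq 1$ handles the second. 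Both inequalities are in fact slightly generous, which is consistent with the statement asking only for $\geq \tfrac12$ on each side.
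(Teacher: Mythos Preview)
Your proof is correct and follows essentially the same approach as the paper: define $\score{i}$ as the threshold where the cumulative sum (in decreasing order of coefficients) crosses $\tfrac12$, then use $(Ax)_i \geq 1$ to get the other half. The paper writes the threshold as $\score{i} = \inf\setof{\rho : \sum_{A_{i,j} > \rho} A_{i,j} x_j < \tfrac12}$, which coincides with your $\sup\setof{t : g(t) \geq \tfrac12}$ and with your discrete choice $v_{\ell^\ast}$; your enumeration of the distinct coefficient values just makes explicit the step-function reasoning the paper leaves implicit.
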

\begin{proof}
  Let
  \begin{math}
    \score{i} = \inf{ %
      \score %
      \where %
      \sum_{A_{i,j} > \score} A_{i,j} x_j < \frac{1}{2} %
    }.
  \end{math}
  By choice of $\score{i}$, we have
  \begin{math}
    \sum_{A_{i,j} \geq \score{i}} A_{i,j} x_j \geq \frac{1}{2}
  \end{math}
  and
  \begin{align*}
    \sum_{A_{i,j} \leq \score{i}} A_{i,j} x_j %
    = %
    (A x)_i - \sum_{A_{i,j} > \score{i}} A_{i,j} x_j %
    > %
    1 - \frac{1}{2} = \frac{1}{2}.
  \end{align*}
\end{proof}
The next two lemma's consider the fixing cost for constraint $i$,
obtaining a value inversely proportional to $\score{i}$.
\begin{lemma}
  \labellemma{heavy-vector} Let $i \in [m]$.  There exists a vector
  $y \in \nnreals^n$ with coverage
  \begin{math}
    \sum_j A_{i,j} y_j \geq 1,
  \end{math}
  cost
  \begin{math}
    \rip{c}{y} \leq \frac{2}{\score{i}} \sum_j c_j A_{i,j} x_j,
  \end{math}
  and coordinates bounded above by $y \leq 2 x$.
\end{lemma}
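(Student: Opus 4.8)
The plan is to build $y$ by scaling up, inside a single knapsack-cover problem, the portion of $x$ supported on the ``heavy'' coordinates of row $i$, i.e.\ those $j$ with $A_{i,j} \ge \rho_i$. Concretely, restrict attention to the constraint $\sum_{j : A_{i,j} \ge \rho_i} A_{i,j} y_j \ge 1$ together with the trivial upper bound $y \le 2x$. By Lemma~\ref{lemma:median}, the vector $x$ restricted to these heavy coordinates already satisfies $\sum_{j : A_{i,j} \ge \rho_i} A_{i,j} x_j \ge \tfrac12$, so $2x$ restricted to the heavy coordinates is a feasible point for this single covering constraint. This is exactly a feasible solution to the \basiclp\ relaxation of a \kncover\ instance (after normalizing the single constraint to have right-hand side $1$).

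Next I would invoke the first bullet of Lemma~\ref{lemma:knapsack-cover} (the \cflp\ rounding): from the feasible fractional point $2x$ (restricted to heavy coordinates) we obtain an integral $z'$ in its support with $\sum_j A_{i,j} z'_j \ge 1$, cost $\sum_j c_j z'_j \le 2 \sum_j c_j (2x_j)$ over the heavy coordinates, and $z'_j \le \lceil 2 (2x_j)\rceil$. Hmm --- that loses a factor of $2$ compared to what the statement claims. So instead I would not first round: I would just take $y$ itself to be the fractional vector $2x$ restricted to the heavy coordinates (and $0$ elsewhere). Then feasibility $\sum_j A_{i,j} y_j = 2\sum_{j: A_{i,j}\ge\rho_i} A_{i,j} x_j \ge 1$ is immediate from Lemma~\ref{lemma:median}, the multiplicity bound $y \le 2x$ holds coordinatewise by construction, and the cost bound is the one real computation: $\rip{c}{y} = 2\sum_{j : A_{i,j} \ge \rho_i} c_j x_j$, and since each such $j$ has $A_{i,j} \ge \rho_i$, i.e.\ $1 \le A_{i,j}/\rho_i$, we get $2\sum_{j: A_{i,j}\ge\rho_i} c_j x_j \le \tfrac{2}{\rho_i}\sum_{j : A_{i,j}\ge\rho_i} c_j A_{i,j} x_j \le \tfrac{2}{\rho_i}\sum_j c_j A_{i,j} x_j$, which is exactly the claimed bound.

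The main thing to get right is the interface with the threshold $\rho_i$: the inequality $1 \le A_{i,j}/\rho_i$ is only valid on the heavy coordinates, which is precisely why we discard the light coordinates and rely on Lemma~\ref{lemma:median} to guarantee that the heavy part alone still covers half the demand (so that $2x$ on the heavy part covers the full normalized demand). There is no real obstacle here --- the lemma is essentially a bookkeeping step isolating the ``heavy'' contribution to the fixing cost so that the subsequent (tightened) Chernoff bound, applied with parameter $a = \rho_i$ on the light coordinates, can be charged against it. The only subtlety worth a sentence in the writeup is that $y$ is allowed to be fractional here; it will be rounded to an integral vector later (via Lemma~\ref{lemma:knapsack-cover}) when this per-constraint fix is actually substituted into \algo{round-and-fix}, and that rounding costs only an additional constant factor, which is absorbed into the $\bigO{1}$ in $\alpha$.
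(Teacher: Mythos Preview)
Your proposal is correct and, after the initial detour through \reflemma{knapsack-cover}, lands on exactly the construction the paper uses: set $y_j = 2x_j$ for $j$ with $A_{i,j} \ge \score{i}$ and $y_j = 0$ otherwise, then verify coverage via \reflemma{median} and bound the cost using $1 \le A_{i,j}/\score{i}$ on the heavy coordinates. The first paragraph's attempt to round via \reflemma{knapsack-cover} is unnecessary here (and as you noticed would lose a factor), so in a clean write-up you should drop it and go straight to the fractional $y$.
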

\begin{proof}
  Let $y \in \nnreals^n$ be defined by
  \begin{math}
    y_j = 2 x_j
  \end{math}
  if $A_{i,j} \geq \score{i}$, and $y_j = 0$ otherwise. We have $\zeroes
  \leq y \leq 2 x$,
  \begin{math}
    (A y)_i %
    = %
    2 \sum_{j \where A_{i,j} \geq \score{i}} A_{i,j} x_j %
    \geq %
    1
  \end{math}
  by choice of $\score{i}$, and
  \begin{align*}
    \rip{c}{y} %
    = %
    2 \sum_{j \where A_{i,j} \geq \score{i}} c_j x_j %
    \leq                                             %
    \frac{2}{\score{i}} \sum_{j \where A_{i,j} \geq \score{i}} c_j
    A_{i,j} x_j                 %
    \leq                        %
    \frac{2}{\score{i}} \sum_j c_j A_{i,j} x_j,
  \end{align*}
  as desired.
\end{proof}

\begin{lemma}
  \labellemma{l1-fixing-cost} %
  Let $i \in [m]$. In time near-linear in the number of nonzero
  coefficients in row $i$, one can find an integral vector
  $z \subseteq [n]$ with coverage
  \begin{math}
    \sum_{j=1}^n A_{i,j} z_j \geq 1,
  \end{math}
  cost
  \begin{math}
    \sum_{j = 1}^n c_j z_j      %
    \leq                        %
    \frac{4}{\score{i}}\sum_{j=1}^n c_j A_{i,j} x_j,
  \end{math}
  and multiplicities
  \begin{math}
    z \leq \roundup{4 x}.
  \end{math}
\end{lemma}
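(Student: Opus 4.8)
The plan is to compose two results already in hand: the fractional ``heavy vector'' from \reflemma{heavy-vector}, and the integral rounding for a single \kncover constraint from the first part of \reflemma{knapsack-cover}. First I would apply \reflemma{heavy-vector} to the fixed row $i$, obtaining a vector $y \in \nnreals^n$ with $\sum_j A_{i,j} y_j \geq 1$, cost $\rip{c}{y} \leq \frac{2}{\score{i}} \sum_j c_j A_{i,j} x_j$, and $y \leq 2 x$ coordinatewise. The point is that $y$ is a feasible fractional solution to the \basiclp relaxation of the \kncover instance carved out by row $i$ --- namely $\min \rip{c}{z}$ subject to $\sum_j A_{i,j} z_j \geq 1$, $z \le \roundup{2x}$, $z \in \nnintegers^n$ --- since $y \le 2x \le \roundup{2x}$ and $y$ already satisfies the covering constraint.

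Next I would feed $y$ into the first bullet of \reflemma{knapsack-cover}, which returns an integral $z$ supported within the support of $y$ with $\sum_j A_{i,j} z_j \geq 1$, $\rip{c}{z} \leq 2\,\rip{c}{y}$, and $z_j \leq \roundup{2 y_j}$ for all $j$. Chaining the two cost estimates gives $\rip{c}{z} \leq 2 \cdot \frac{2}{\score{i}} \sum_j c_j A_{i,j} x_j = \frac{4}{\score{i}} \sum_j c_j A_{i,j} x_j$, which is the claimed bound. For the multiplicities, monotonicity of the ceiling together with $y_j \le 2 x_j$ gives $z_j \le \roundup{2 y_j} \le \roundup{4 x_j}$ on the support of $y$, and $z_j = 0$ off the support, so $z \le \roundup{4 x}$. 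Coverage $\sum_j A_{i,j} z_j \ge 1$ is immediate from the integral feasibility guaranteed by \reflemma{knapsack-cover}, and the running time is just that of the rounding routine in \reflemma{knapsack-cover}, which is near-linear in the number of nonzero entries of row $i$.

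I do not expect a real obstacle here: the entire content is arranging the two lemmas so that they compose. The only points needing a little care are (i) choosing the multiplicity cap of the auxiliary single-row \kncover instance large enough (e.g.\ $\roundup{2x}$, or simply $+\infty$) that $y$ is feasible for it, and (ii) checking that the ``in the support of $y$, with $z_j \le \roundup{2 y_j}$'' conclusion of \reflemma{knapsack-cover} indeed transports the bound $y \le 2x$ up to $z \le \roundup{4x}$ --- which it does by monotonicity of $\roundup{\cdot}$.
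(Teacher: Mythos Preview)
Your proposal is correct and is exactly the paper's approach: the paper's proof is the single sentence ``Applying \reflemma{knapsack-cover} to the vector $y$ of \reflemma{heavy-vector} gives the desired result,'' and you have simply unpacked that composition (including the chain $z_j \le \roundup{2 y_j} \le \roundup{4 x_j}$ for the multiplicity bound).
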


\begin{proof}
  Applying \reflemma{knapsack-cover} to the vector $y$ of
  \reflemma{heavy-vector} gives the desired result.
\end{proof}

The expected cost incurred from repairing $z$ for the sake of
constraint $i$ is the probability of failing to meet constraint $i$
times the cost given in \reflemma{l1-fixing-cost}.  When $\score{i}$ goes
to zero the expected cost of fixing constraint $i$ is dominated by the
multiplicative factor of $\frac{1}{\score{i}}$. For small $\score{i}$,
we require a stronger concentration bound then
\reflemma{basic-concentration} that decays exponentially in $\alpha$
and \emph{proportionately with $\score{i}$}, to offset the increasing
fixing costs.

\begin{lemma}
  \labellemma{median-concentration} Let $i \in [m]$. After randomized
  rounding in steps \refsubsteps{rounding-loop},
  \begin{align*}
    \probof{(A z)_i < 1}        %
    \leq                        %
    \exp{\frac{1 + \ln{\alpha / 2} - \alpha / 2}{\score{i}}}.
  \end{align*}
\end{lemma}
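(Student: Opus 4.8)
The plan is to reduce to the one-sided Chernoff bound for the event that a sum of independent bounded random variables falls below half its mean, applied only to the ``small'' coordinates of row $i$, i.e.\ those with $A_{i,j} \le \score{i}$. Let $X = (Az)_i = \sum_j A_{i,j} z_j$ be the value of the $i$th constraint after randomized rounding. Dropping the contribution of the big coordinates, it suffices to lower-bound the partial sum $Y = \sum_{j : A_{i,j} \le \score{i}} A_{i,j} z_j$, and to show that the event $X < 1$ is contained in the event that $Y$ is well below its mean. For each small coordinate $j$, the rounded value $z_j$ is a sum of a deterministic part $\floor{\alpha x_j}$ and a Bernoulli part, so $A_{i,j} z_j$ is a bounded random variable taking values in an interval of length $A_{i,j} \le \score{i}$; after rescaling each such term by $1/\score{i}$ we have independent random variables bounded in $[0,1]$ (well, in an interval of unit length plus an integer shift, which Chernoff tolerates) with mean at least $\E[Y]/\score{i}$.

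First I would record that by \reflemma{median}, $\E[Y] = \sum_{j : A_{i,j} \le \score{i}} A_{i,j} \cdot \alpha x_j \ge \alpha/2$, since $z_j$ has expectation $\alpha x_j$ and $\sum_{j : A_{i,j}\le\score{i}} A_{i,j} x_j \ge 1/2$. Next I would observe that $\{X < 1\} \subseteq \{Y < 1\}$ (the big coordinates contribute nonnegatively), and since $\E[Y] \ge \alpha/2 \ge 1$ (for $\alpha \ge 2$), the event $\{Y < 1\}$ is a lower-tail event. Then I apply the normalized Chernoff bound (the same \reflemma{normalized-chernoff} invoked in \reflemma{basic-concentration}) to the scaled sum $Y/\score{i}$: its summands lie in unit-length intervals, its mean is $\mu = \E[Y]/\score{i} \ge \alpha/(2\score{i})$, and we are asking for the probability it drops below $\beta = 1/\score{i} \le \mu$. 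The Chernoff estimate $\exp(\beta - \mu + \beta\ln(\mu/\beta))$ — or whatever exact form \reflemma{normalized-chernoff} gives for deviation from a parameter $\beta$ — evaluates, with $\beta = 1/\score{i}$ and $\mu \ge \alpha/(2\score{i})$, to at most $\exp\!\big(\frac{1 + \ln(\alpha/2) - \alpha/2}{\score{i}}\big)$, using that the exponent is decreasing in $\mu$ once $\mu \ge \beta$, so plugging in the lower bound $\mu = \alpha/(2\score{i})$ only weakens the bound in the right direction.

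The main obstacle I anticipate is the bookkeeping around the deterministic shift $\floor{\alpha x_j}$ in the rounding: the random variables $A_{i,j} z_j$ are not supported on $[0,\score{i}]$ but on $[A_{i,j}\floor{\alpha x_j},\, A_{i,j}\ceil{\alpha x_j}]$, so I need the version of the Chernoff bound that applies to independent variables each lying in an \emph{interval of length} $\le \score{i}$ (equivalently, pull out the deterministic part, which only helps since it raises the realized value of $Y$ without affecting the tail in the wrong direction). Precisely, writing $z_j = \floor{\alpha x_j} + \xi_j$ with $\xi_j \in \{0,1\}$ Bernoulli, the deterministic contribution $\sum_{j:\,A_{i,j}\le\score{i}} A_{i,j}\floor{\alpha x_j}$ is a nonnegative constant that can be subtracted from both $Y$ and the threshold $1$, reducing everything to a genuine sum of scaled independent Bernoullis in $[0,1]$ to which \reflemma{normalized-chernoff} applies cleanly; one just has to check the threshold stays below the (reduced) mean, which again follows from $\alpha \ge 2$ and \reflemma{median}. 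Everything else is the routine substitution of $\beta = 1/\score{i}$ and $\mu \ge \alpha/(2\score{i})$ into the exponent and simplifying.
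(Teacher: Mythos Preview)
Your proposal is correct and follows the same route as the paper: restrict to the $\score{i}$-small coordinates, use \reflemma{median} to lower-bound the fractional coverage by $1/2$ (hence the expected scaled coverage by $\alpha/2$), and apply the lower-tail Chernoff bound with the $1/\score{i}$ boost. The paper just invokes \reflemma{unscaled-chernoff} (which is exactly \reflemma{normalized-chernoff} rescaled by $\gamma$) with $\gamma=\score{i}$ and $\mu=\alpha/2$ in a single line.

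One remark on the ``main obstacle'' you flag. Your concern is legitimate --- the paper silently applies a Chernoff lemma stated for variables in $[0,\gamma]$ to variables $A_{i,j} z_j$ that may exceed $\gamma=\score{i}$ when $\floor{\alpha x_j}\ge 1$. Your proposed fix (subtract the deterministic part $D$ from both $Y$ and the threshold, then apply \reflemma{normalized-chernoff}) works, but note that it yields an exponent depending on $D$, namely $\frac{1}{\score{i}}\big((1-D)-(\alpha/2-D)+(1-D)\ln\frac{\alpha/2-D}{1-D}\big)$, and you still need the monotonicity check $(1-D)\ln\frac{\alpha/2-D}{1-D}\le \ln(\alpha/2)$ to recover the stated bound; this holds (the left side is nonincreasing in $D$), but it is not quite the ``routine substitution'' you suggest. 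A cleaner resolution is to observe that the moment bound in \reflemma{moment}, $\evof{e^{-tX}}\le \exp\!\big(\evof{X}(e^{-t}-1)\big)$, already holds for any $X=c+B$ with $c\ge 0$ and $B\in[0,1]$ (since $c\,(e^{-t}-1+t)\ge 0$ for $t>0$), so the Chernoff lemmas apply verbatim to the shifted variables and no subtraction is needed.
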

\begin{proof}
  From \reflemma{median} $\sum_{j:A_{i,j} \le \score{i}} A_{i,j} x_j \ge 1/2$.
  Applying the Chernoff inequality to this sum
  (\reflemma{unscaled-chernoff} with $\gamma = \score{i}$ and
  $\mu = \frac{\alpha}{2}$), we have
  \begin{align*}
    \probof{(A z)_i < 1}        %
    \leq                        %
    \probof{\sum_{A_{i,j} \leq \score{i}} A_{i,j} z_j < 1} %
    \leq                                                   %
    \exp{\frac{1}{\score{i}} \parof{1 + \ln \frac{\alpha}{2} - \frac{\alpha}{2}}},
  \end{align*}
  as desired.
\end{proof}
\begin{lemma}
  \labellemma{concentration}
  Let $i \in [m]$ and
  \begin{math}
    \alpha = \ln \columnsum + \ln \ln \columnsum + \bigO{1}.
  \end{math}
  After randomized rounding in steps \refsubsteps{rounding-loop},
  \begin{math}
    \probof{(A z)_i < 1}        %
    \leq                        %
    C \frac{\score{i}}{\columnsum}
  \end{math}
  for some constant $C > 1$.
\end{lemma}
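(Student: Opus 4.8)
The plan is to prove \reflemma{concentration} by combining the two concentration estimates already established---\reflemma{basic-concentration}, which bounds $\probof{(Az)_i < 1}$ using only that $A$ has entries in $[0,1]$, and \reflemma{median-concentration}, which bounds it through the median coefficient $\score{i}$---and case-splitting on whether $\score{i}$ lies below or above a fixed constant threshold. Write $\alpha = \ln\columnsum + \ln\ln\columnsum + c$ for the absolute constant $c$ hidden in the $\bigO{1}$ term, and recall that (as for the earlier lemmas) we may assume $\columnsum$ exceeds a sufficiently large absolute constant.

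In the first case, $\score{i} \geq \frac14$, I would invoke \reflemma{basic-concentration} (with $\max_j A_{i,j} \leq 1$) to obtain $\probof{(Az)_i < 1} \leq \exp{1 + \ln\alpha - \alpha} = e\alpha/e^{\alpha}$. Substituting the value of $\alpha$ gives $e^{\alpha} = e^{c}\columnsum\ln\columnsum$, and with $\alpha \leq 2\ln\columnsum$ (valid for large $\columnsum$) this is at most $2e/(e^{c}\columnsum)$; since $\score{i}\geq\frac14$, this is $\leq (8e/e^{c})\cdot\score{i}/\columnsum$, which is of the required form.

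In the second case, $\score{i} < \frac14$, I would use \reflemma{median-concentration}, which gives $\probof{(Az)_i < 1} \leq \exp{\frac{1 + \ln(\alpha/2) - \alpha/2}{\score{i}}}$, and then verify $\exp{\frac{1 + \ln(\alpha/2) - \alpha/2}{\score{i}}} \leq C\score{i}/\columnsum$. Taking logarithms and multiplying through by $\score{i} > 0$, this becomes
\begin{align*}
  \frac{\alpha}{2} - 1 - \ln\frac{\alpha}{2} \;\geq\; \score{i}\ln\columnsum + \score{i}\ln\frac{1}{C\score{i}} .
\end{align*}
The right side is increasing in $\score{i}$ on $(0,\frac14]$ once $\columnsum$ is large (its $\score{i}$-derivative is $\ln\columnsum - \ln(C\score{i}) - 1 > 0$), so it is maximized at $\score{i} = \frac14$, where it equals $\frac14\ln\columnsum + \bigO{1}$; meanwhile substituting the value of $\alpha$ together with $\ln(\alpha/2) = \ln\ln\columnsum + \bigO{1}$ makes the left side $\frac12\ln\columnsum - \frac12\ln\ln\columnsum + \bigO{1}$. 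The inequality then reduces to $\frac14\ln\columnsum \geq \frac12\ln\ln\columnsum + \bigO{1}$, which holds once $\columnsum$ is at least a sufficiently large absolute constant. Choosing $C = \max(8e/e^{c},2)$ then handles both cases, completing the proof.

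The main obstacle I anticipate is pinning down the case threshold: it must be a constant \emph{strictly} below $\frac12$. The exponent supplied by \reflemma{median-concentration} carries only the factor $\alpha/2 \approx \frac12\ln\columnsum$ rather than $\alpha$, so at $\score{i} = \frac12$ the bound $\exp{(1 + \ln(\alpha/2) - \alpha/2)/\score{i}}$ is of order $\ln\columnsum/\columnsum$---a full $\ln\columnsum$ factor above the target $\Theta(1/\columnsum)$---and it stays too weak for all $\score{i}$ near $\frac12$, so that \reflemma{basic-concentration} must carry the regime $\score{i} = \Omega(1)$. Only once $\score{i}$ is bounded away from $\frac12$ does the $1/\score{i}$ in the exponent buy enough multiplicative slack to absorb the $\ln\ln\columnsum$-sized lower-order terms, which is exactly what makes the displayed inequality close with room to spare. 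Everything else is routine bookkeeping to keep the constant $C$ consistent with the constant $c$ in the $\bigO{1}$ part of $\alpha$ across the two regimes.
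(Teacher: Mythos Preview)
Your proposal is correct and follows essentially the same approach as the paper: case-split on whether $\score{i}$ exceeds a fixed constant threshold, invoke \reflemma{basic-concentration} in the large-$\score{i}$ case and \reflemma{median-concentration} in the small-$\score{i}$ case. The paper uses threshold $1/C$ with $C$ determined at the end, whereas you fix the threshold at $1/4$ and carry out the arithmetic more explicitly, but the argument is the same.
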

\begin{proof}
  Let $C > 0$ be a constant to be determined later. We divide the
  analysis into two cases, depending on if $\score{i} \geq \frac{1}{C}$
  or $\score{i} \leq \frac{1}{C}$.

  Suppose $\score{i} \geq \frac{1}{C}$.  In this case we simply use
  \reflemma{basic-concentration} as if for $a = 1$. We have
  \begin{align*}
    \probof{(A z)_i < 1}        %
    \leq                        %
    \frac{1}{\columnsum}          %
    \leq                           %
    C \frac{\score{i}}{\columnsum},
  \end{align*}
  as desired.

  Suppose now that $\score{i} \leq \frac{1}{C}$.  By
  \reflemma{median-concentration}, we have
  \begin{math}
    \probof{(A z)_i < 1} \leq \frac{\score{i}}{\columnsum}
  \end{math}
  if
  \begin{align*}
    \alpha \geq 2 \score{i} \ln{\columnsum} + 2 \score{i}
    \ln{\frac{1}{\score{i}}} + 2 \ln{\alpha / 2} + 2.
  \end{align*}
  For $\alpha = \ln{\columnsum} + \gamma$, as $\gamma \to \infty$ and
  $\score{i} \to 0$, the left side dominates the right side.  In
  particular, for $C$ sufficiently large, and $\gamma$ a sufficiently
  large constant, we have
  \begin{math}                  %
    \probof{(A z)_i < 1} %
    \leq %
    \frac{\score{i}}{\columnsum}. %
  \end{math}
\end{proof}

We conclude the section by completing the proof of
\reftheorem{l1-alteration}.
\begin{proof}[Proof of \reftheorem{l1-alteration}]
  The algorithm ensures that $A z \geq \ones$. The upper bound on the
  multiplicities of $z$ follows by the same argument as in proving
  \reftheorem{l0-alteration}.  It remains to bound the expected cost
  $\evof{\rip{c}{z}}$.

  The expected cost is the sum of the expected cost from the
  randomized rounding in steps \refsubsteps{rounding-loop} and the
  expected cost from alterations in steps
  \refsubsteps{rounding-loop}. The expected cost of the rounding step
  is $\alpha \rip{c}{x}$. For each $i \in [m]$, the expected cost
  incurred by alterations for the $i$th covering constraint is the
  product of the fixing cost for the $i$th constraint, bounded by
  \reflemma{l1-fixing-cost}, and the probability of failing to meet
  the $i$th constraint, bounded by \reflemma{concentration}. Summing
  over all $i$, \tagr interchanging sums, and \tagr definition of
  $\columnsum$, the expected cost of alteration is at most
  \begin{align*}
    C \sum_{i=1}^m \frac{\score{i}}{\columnsum} \cdot
    \frac{\sum_{j=1}^n c_j A_{i,j} x_j}{\score{i}} %
    \tago{=}                                               %
    C \sum_{j=1}^n \frac{c_j x_j}{\columnsum} \sum_{i=1}^m A_{i,j} %
    \tago{\leq}                                                %
    C \rip{c}{x}
  \end{align*}
  for some constant $C \geq 1$.  Summing the expected costs from
  randomized rounding and alterations, the expected cost of the output
  $z$ is $\parof{\alpha + C} \rip{c}{x}$, as desired.
\end{proof}

\subsection{Handling multiplicity constraints}
\labelsection{knapsack-cover-alterations}

In this section, we analyze the second algorithm,
\algo{contract-round-fix}. \algo{contract-round-fix} takes as input a
fractional solution that satisfies all the knapsack covering
constraints w/r/t a system of covering constraints $A x \geq \ones$
and multiplicity constraints $x \leq d$ and adds a preprocessing step
to \algo{round-and-fix} to handle multiplicity constraints.

We obtain different approximation factors depending on whether one
desires a pure approximation, meeting the multiplicity constraints
exactly, or a bicriteria factor that approximates the multiplicity
constraints by a $\epsmore$-multiplicative factor. We obtain an
$\ell_0$-sparse bound in the former setting and an $\ell_1$-sparse
bound in the latter.

The algorithms for either setting are similar, with a slight
difference in the choice of parameters $\alpha$ and $\eps$. We adopt
the following common notation when proving either bound.
\begin{itemize}
\item Let
  \begin{math}
    S = \setof{j \in [n] \where \alpha x_j \geq d_j}
  \end{math}
  be the set of coordinates that are deterministically set to their
  multiplicity, and let $z'' = d \land S$ denote the corresponding
  integral vector.
\item Let
  \begin{math}
    \N = [n] \setminus S = \setof{j \where x_j < \alpha}
  \end{math}
  be the
  set of coordinates that are not deterministically rounded up to $d$.
\item Let $\M = \setof{i \where (A z'')_i < 1-\eps}$ be the
  constraints that are not (sufficiently) covered by $z''$.
\item Let $x' = x \land \N$ restrict $x$ to the remaining coordinates
  in $\N$, and let $x'' = x \setminus x' = x \land S$ denote the part
  of $x$ deterministically rounded up.
\item Let $A'$ be the residual covering matrix w/r/t $z''$.
\item Let $z'$ be the random vector output by \algo{round-and-fix}
  w/r/t $A'$ and $x'$.
\item Let $z = z' + z''$ be the combined output.
\end{itemize}

\begin{proof}[Proof of \reftheorem{l0-alteration-multiplicity}]
  Consider \algo{contract-round-fix} with $\eps = 0$ and
  $\alpha = \ln \Delta_0 + \ln \ln \Delta_0 + \bigO{1}$.

  The cost of $z''$ is bounded above by
  \begin{align*}
    \rip{\cost}{z''}            %
    =                           %
    \sum_{j \in S} \cost{j} d_j
    \leq                        %
    \sum_{j \in S} \cost{j} \alpha x_j %
    =                                  %
    \alpha \rip{\cost}{x''},
  \end{align*}
  where $x''$ restricts $x$ to the coordinates in $S$.

  The second vector, $z'$, is the output of \algo{round-and-fix} for
  the residual covering system w/r/t $z''$.  Since $x$ satisfies the
  knapsack covering constraints, and $A' \in [0,1]^{\M \times \N}$ is
  (rescaled) the residual system after deterministically rounding up
  the coordinates in $S$, we have $A' x' \geq \ones$. All the
  coordinates in $A'$ lie between 0 and 1, and the $L_0$ column
  sparsity of $A'$ is bounded above by
  \begin{align*}
    \max_{j \in \N} \sizeof{\setof{i \in \M \where A'_{ij} > 0}} %
    =                                                            %
    \max_{j \in \N} \sizeof{\setof{i \in \M \where A_{ij} > 0}}
    \leq                        %
    \max_{j \in [n]} \sizeof{\setof{i \in [m] \where A_{ij} > 0}} %
    =                                                              %
    \columncount.
  \end{align*}
  By \reftheorem{l0-alteration}, \algo{round-and-fix} returns a vector
  $z''$ with coverage $A'z'' \geq 1$, multiplicities
  $z'' \leq \roundup{\alpha x''} \leq d \land \N$, and expected cost
  \begin{math}
    \evof{\rip{\cost}{z'}} \leq \parof{\alpha + \bigO{1}} \rip{\cost}{x'}.
  \end{math}

  It remains to combine the bounds for $z'$ and $z''$ and analyze
  $z$. The expected cost of $z$ is
  \begin{align*}
    \evof{\rip{\cost}{z}}       %
    =                           %
    \evof{\rip{\cost}{z'}} + \rip{\cost}{z''}
    \leq                        %
    \parof{\alpha + \bigO{1}} \rip{\cost}{x'} + \alpha \rip{\cost}{x''} %
    =                                                %
    \parof{\alpha + \bigO{1}} \rip{\cost}{x}.
  \end{align*}
  For the multiplicity constraints, we have
  \begin{align*}
    z = z' + z'' \leq d \land S + d \land \N = d.
  \end{align*}
  For the coverage, for each $i \in [m]$, we consider two cases
  depending on whether $i \in \M$ or not. If $i \notin \M$, then
  \begin{align*}
    (A z)_i \geq (A z'')_i \geq 1.
  \end{align*}
  If $i \in \M$, then for $b_i = 1 - (A z'')_i$, we have
  \begin{align*}
    (A z)_i = (A z')_i + (A z'')_i
    =                           %
    1 - b_i + b_i (A' z'')_i   %
    \geq                           %
    1 - b_i + b_i                  %
    =                                  %
    1,
  \end{align*}
  as desired.
\end{proof}

Now we consider the bicriteria approximation based on $\ell_1$ sparsity.
\begin{proof}[Proof of \reftheorem{l1-alteration-multiplicity}]
  Consider \algo{contract-round-fix} for a given $\eps > 0$.
  % $\alpha = \ln \Delta_1 + \ln \ln \Delta_1 + \bigO{1}$.

  The max column sum of $A'$ is
  \begin{align*}
    \columnsum'
    =
    \max_{j \in \mathcal{N}} \sum_{i \in \mathcal{M}} A_{i,j}' %
    \leq                                                      %
    \max_{j \in \mathcal{N}} \sum_{i \in \mathcal{M}}
    \frac{A_{i,j}}{b_i} %
    \leq               %
    \max_{j \in \mathcal{N}} \frac{1}{\eps} \sum_{i \in \mathcal{M}}
    A_{i,j}
    \leq                        %
    \frac{1}{\eps} \max_{j \in [n]} \sum_{i=1}^m A_{i,j} %
    =                                                    %
    \frac{\columnsum}{\eps}.
  \end{align*}
  By,
  \reftheorem{l1-alteration}, $z'$ has coverage $A'z' \geq \ones$,
  expected cost
  \begin{math}
    \evof{\rip{c}{z'}} \leq \alpha' \rip{c}{x'}, %
  \end{math}
  and multiplicities $z' < \alpha x' + 1$ for
  \begin{math}
    \alpha' = \ln \columnsum' + \ln \ln \columnsum' + \bigO{1}.
  \end{math}
  Observe that, plugging in $\columnsum/\eps$ for $\columnsum'$, we have
  \begin{math}
    \alpha' = \ln \columnsum + \ln \ln \columnsum +
    \bigO{\ln{1/\eps}} %
    = %
    \alpha.
  \end{math}
  Thus, $\evof{\rip{c}{z'}} \leq \alpha \rip{c}{x'}$ and
  $z' \leq \roundup{\alpha x'} \leq d$.

  Then $z''$ has cost
  \begin{math}
    \sum_{j \in S} c_j d_j \leq \alpha \rip{c}{x''},
  \end{math}
  so the expected total cost is
  \begin{align*}
    \evof{\rip{c}{z' + z''}} %
    = %
    \evof{\rip{c}{z'}} + \rip{c}{z''} %
    \leq %
    \alpha \parof{\rip{c}{x'} + \rip{c}{x''}} %
    = %
    \alpha \rip{c}{x}.  %
  \end{align*}
  Since $z'$ and $z''$ have disjoint support, and both $z' \leq d$ and
  $z'' \leq d$ individually, we have $z' + z'' \leq d$. Finally, for
  each constraint $i$, we have either $i \notin \mathcal{M}$, in which
  case
  \begin{math}
    A (z' + z'') \geq A z'' \geq 1 - \eps
  \end{math}
  by definition of $\mathcal{M}$; or $i \in \mathcal{M}$, in which
  case
  \begin{align*}
    A(z' + z'')_i \geq (A z'')_i + (1 - (A z'')_i) (A'z')_i %
    \geq                                               %
    1-b_i + b_i \ge 1,
  \end{align*}
  as desired.
\end{proof}

%%% Local Variables:
%%% mode: latex
%%% TeX-master: "cip"
%%% End:

\subsection{Derandomization}
\labelsection{derandomization}

In this section, we apply the method of conditional expectations to
derandomize the alteration-based rounding schemes for column sparse
covering problems. Recall that the \algo{round-and-fix} algorithm
consists of a randomized rounding step followed by an alteration
step. In particular, randomization only enters when deciding whether
to round a coordinate up or down initially. We thus apply the method
of conditional expectations to the sequence of coin tosses that round
each coordinate up or down.

For ease of exposition, we focus on derandomizing w/r/t $\columnsum$.
Fix a scalar $\alpha > 0$ and fractional solution $x \in \nnreals^n$
with $Ax \geq \ones$. For each $i \in [m]$, we define
$\pessrow{i}{y}: \setof{0,1}^n \to \nnreals$ as the minimum of two
functions,
\begin{math}
  \pessrow{i}{y} \defeq \min{\pessrowA{i}{y}, \pessrowB{i}{y}},
\end{math}
where $\pessrowA{i}: \setof{0,1}^n \to \nnreals$ is defined by
\begin{align*}
  \pessrowA{i}{y}             %
  \defeq                             %
  \alpha \parof{A x}_i %
  \prod_{j=1}^n \parof{\alpha A x}_i^{- \parof{A_{i,j}
  \rounddown{\alpha x_j} + A_{i,j} y_j}}, %
\end{align*}
and
\begin{math}
  \pessrowB{i}: \setof{0,1}^n \to \nnreals
\end{math}
is defined by
\begin{align*}
  \pessrowB{i}{y}                        %
  \defeq
  \prac{\alpha \parof{A x}_i}{2}^{1 / \score{i}} %
  \prod_{j \where A_{i,j} \leq \score{i}} %
  \parof{\frac{\alpha (A x)_i}{2}}^{-\prac{A_{i,j}
  \rounddown{\alpha x_j} + A_{i,j} y_j}{\score{i}}}.
\end{align*}
We define $\pess: \setof{0,1}^n \to \nnreals$ by
\begin{align*}
  \pess{y}                      %
  &=                             %
    \rip{\cost}{\rounddown{\alpha x}}    %
    +                             %
    \rip{\cost}{y}                %
    +                             %
    C \sum_{i=1}^m \pessrow{i}{y}
    \sum_{j=1}^n c_j A_{ij} x_j.
\end{align*}
where $C \in \preals$ is the constant specified by
\reflemma{l1-fixing-cost}. The formula for $\pess{y}$ appears
involved, but has a simple interpretation as a pessimistic estimator
for the expected cost as a function of the randomized rounding in
steps \refsubsteps{rounding-loop}.
\begin{lemma}
  \labellemma{pessimistic-expectation} Let
  \begin{math}
    \alpha = \ln{\Delta_1} + \ln \ln {\Delta_1} + \bigO{1},
  \end{math}
  and let $z \in \nnintegers^n$ be the randomized integral vector
  produced by \algo{round-and-fix}. Let $y \in \setof{0,1}^n$ be the
  random vector with independent coordinates where $y_i = 1$ if $z_i$
  is rounded up in step \refstep{round-coordinate}, and $y_i = 0$
  otherwise. Then
  \begin{align*}
    \evof{\pess{y}} \leq \parof{\alpha + \bigO{1}} \rip{c}{x}.
  \end{align*}
\end{lemma}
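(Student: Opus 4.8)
The plan is to evaluate $\evof{\pess{y}}$ by linearity of expectation and then reduce each piece to the per-constraint tail estimates already established in \reflemma{basic-concentration}, \reflemma{median-concentration}, and \reflemma{concentration}. Write $z_j = \rounddown{\alpha x_j} + y_j$ for the rounded vector, where each $y_j$ is an independent Bernoulli variable of mean $\alpha x_j - \rounddown{\alpha x_j}$ (the residual $x'_j$ fixed in step \refstep{truncate}). The two ``cost'' terms of $\pess{y}$ combine exactly, since the floors cancel: $\rip{c}{\rounddown{\alpha x}} + \evof{\rip{c}{y}} = \sum_{j=1}^n c_j\parof{\rounddown{\alpha x_j} + \parof{\alpha x_j - \rounddown{\alpha x_j}}} = \alpha\rip{c}{x}$. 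Hence the whole burden is to show that the estimator term $C\sum_{i=1}^m \evof{\pessrow{i}{y}}\sum_{j=1}^n c_j A_{i,j} x_j$ is at most $\bigO{1}\cdot\rip{c}{x}$.

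For the estimator term I would bound $\evof{\pessrow{i}{y}}$ for each $i$. Since $\pessrow{i}{y}$ is by definition $\min{\pessrowA{i}{y}, \pessrowB{i}{y}}$ and the expectation of a minimum is at most the minimum of the expectations, it suffices to bound $\evof{\pessrowA{i}{y}}$ and $\evof{\pessrowB{i}{y}}$ separately. The key point is that $\pessrowA{i}{y}$ and $\pessrowB{i}{y}$ are precisely the exponential estimators underlying the Chernoff bounds of \reflemma{basic-concentration} and \reflemma{median-concentration}: setting $\mu = \alpha(Ax)_i$ and $\nu = \alpha(Ax)_i/2$, one checks directly from the definitions that $\pessrowA{i}{y} = \mu^{\,1-(Az)_i}$ and $\pessrowB{i}{y} = \nu^{\,\parof{1 - \sum_{j \where A_{i,j}\le\score{i}} A_{i,j} z_j}/\score{i}}$. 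Carrying out the same moment-generating-function computation that proves those lemmas — factor the expectation over the independent coordinates $y_j$, use $1 - t \le e^{-t}$ together with the convexity estimate $1 - \mu^{-A_{i,j}} \ge A_{i,j}\parof{1 - \mu^{-1}}$, and observe that the resulting bound increases in the residual fractional mass $\sum_j A_{i,j}x'_j \le \alpha(Ax)_i$ — yields $\evof{\pessrowA{i}{y}} \le e^{\,1 + \ln\alpha - \alpha}$ and $\evof{\pessrowB{i}{y}} \le e^{(1 + \ln(\alpha/2) - \alpha/2)/\score{i}}$. For the second bound one invokes the sharper form of \reflemma{median} that its proof actually delivers, $\sum_{j \where A_{i,j}\le\score{i}} A_{i,j} x_j > (Ax)_i - \frac{1}{2} \ge (Ax)_i/2$, so that the estimate does not degrade when $(Ax)_i > 1$.

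Finally I rerun the two-case argument from the proof of \reflemma{concentration} verbatim. With $\alpha = \ln \columnsum + \ln \ln \columnsum + \bigO{1}$ for a sufficiently large additive constant, there is a constant $c_0 \ge 1$ such that: if $\score{i} \ge 1/c_0$ then $\evof{\pessrowA{i}{y}} \le 1/\columnsum \le c_0\,\score{i}/\columnsum$; and if $\score{i} < 1/c_0$ then $\evof{\pessrowB{i}{y}} \le \score{i}/\columnsum$. Either way $\evof{\pessrow{i}{y}} \le c_0\,\score{i}/\columnsum$. Interchanging the order of summation and using $\score{i} \le 1$ and $\sum_{i=1}^m A_{i,j} \le \columnsum$,
\begin{align*}
  C\sum_{i=1}^m \evof{\pessrow{i}{y}}\sum_{j=1}^n c_j A_{i,j}x_j
  \;\le\;
  \frac{C c_0}{\columnsum}\sum_{j=1}^n c_j x_j \sum_{i=1}^m \score{i} A_{i,j}
  \;\le\;
  \frac{C c_0}{\columnsum}\sum_{j=1}^n c_j x_j \sum_{i=1}^m A_{i,j}
  \;\le\;
  C c_0\,\rip{c}{x}.
\end{align*}
Adding the $\alpha\rip{c}{x}$ from the cost terms gives $\evof{\pess{y}} \le \parof{\alpha + \bigO{1}}\rip{c}{x}$, as claimed. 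The main obstacle is the middle paragraph's identification of $\evof{\pessrowA{i}{y}}$ and $\evof{\pessrowB{i}{y}}$ with the quantities already bounded in \reflemma{basic-concentration} and \reflemma{median-concentration}: one must track the deterministic offsets $\rounddown{\alpha x}$, work with the bases $\alpha(Ax)_i$ and $\alpha(Ax)_i/2$ in place of the nominal $\alpha$ and $\alpha/2$, and use the sharper median inequality so that slack in a constraint is harmless; once that is done, the rest is bookkeeping that mirrors arguments already in the paper.
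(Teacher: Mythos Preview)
Your proof is correct and follows essentially the same approach as the paper's own proof sketch: you unpack the observation that $\pessrow{i}$ is the Chernoff moment-generating estimator underlying \reflemma{basic-concentration} and \reflemma{median-concentration}, bound its expectation by the right-hand sides of those lemmas, and then rerun the summation from the proof of \reftheorem{l1-alteration}. The paper's sketch states exactly this without filling in the intermediate computations; your version supplies them, including the two points the sketch glosses over (handling the deterministic offsets $\rounddown{\alpha x}$ via the inequality $\ln\mu \ge 1 - \mu^{-1}$, and using the stronger conclusion $\sum_{A_{i,j}\le\score{i}} A_{i,j}x_j > (Ax)_i - \tfrac12$ from the proof of \reflemma{median} so that the base $\alpha(Ax)_i/2$ rather than $\alpha/2$ causes no loss).
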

\begin{proof}[Proof sketch]
  The claim is implicit in the proof of \reftheorem{l1-alteration}, as
  $\evof{\pess{y}}$ is an intermediate upper bound on
  $\evof{\rip{\cost}{z}}$ in the full chain of inequalities in the
  proof. The expectation of the first two terms,
  \begin{math}
    \rip{c}{\rounddown{\alpha x}} + \evof{\rip{c}{y}},
  \end{math}
  gives the cost from the randomized rounding. For each $i$,
  $\pessrow{i}$ is a pessimistic estimator for the probability that
  $(Az)_i < 1$, occurring in the proof of the Chernoff inequality and
  implicitly bounded from above when we invoked
  \reflemma{basic-concentration} and
  \reflemma{median-concentration}. The full relationship between
  $\evof{\pess{y}}$ and \reftheorem{l1-alteration} is:
  \begin{math}
    \evof{\rip{c}{z}} %
    \leq %
    \evof{\pess{y}} %
    \leq %
    \parof{\alpha + \bigO{1}}\rip{c}{x}. %
  \end{math}
\end{proof}
\begin{lemma}
  \labellemma{pessimistic-cost}
  \begin{math}
    \rip{c}{\rounddown{x}} + \rip{c}{y} + %
    C \sum_{i \where \parof{A \parof{\rounddown{\alpha x} + y}}_i < 1}
    \sum_{j=1}^n c_j A_{i,j} x_j \leq %
    \pess{y}.                          %
  \end{math}
\end{lemma}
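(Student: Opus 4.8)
The plan is to verify \reflemma{pessimistic-cost} by matching the claimed left-hand side term by term against the definition of $\pess{y}$. Recall
\[
  \pess{y} = \rip{c}{\rounddown{\alpha x}} + \rip{c}{y} + C \sum_{i=1}^m \pessrow{i}{y} \sum_{j=1}^n c_j A_{ij} x_j .
\]
Writing $z \defeq \rounddown{\alpha x} + y$ for the integral vector obtained right after the randomized rounding step, the first two summands already coincide with the first two terms of the stated bound (reading its first term as $\rip{c}{\rounddown{\alpha x}}$, as in the definition of $\pess$), so it suffices to show
\[
  \sum_{i \,:\, (A z)_i < 1} \sum_{j=1}^n c_j A_{ij} x_j \;\le\; \sum_{i=1}^m \pessrow{i}{y} \sum_{j=1}^n c_j A_{ij} x_j .
\]
Each $\pessrow{i}{y}$ is a product of strictly positive numbers and each coefficient $\sum_j c_j A_{ij} x_j$ is nonnegative, so every summand on the right is nonnegative; it is therefore enough to establish $\pessrow{i}{y} \ge 1$ whenever $(Az)_i < 1$, and then discard the remaining nonnegative terms.

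To prove the claim for a fixed $i$ with $(Az)_i < 1$, I would bound the two factors of $\pessrow{i}{y} = \min\{\pessrowA{i}{y}, \pessrowB{i}{y}\}$ separately, collapsing each product of powers into a single power. Collecting exponents over $j$,
\[
  \pessrowA{i}{y} = \bigl( \alpha (Ax)_i \bigr)^{1 - \sum_j A_{ij}(\rounddown{\alpha x_j} + y_j)} = \bigl( \alpha (Ax)_i \bigr)^{1 - (Az)_i} .
\]
Since $Ax \ge \ones$ and $\alpha \ge 2$ --- which holds for $\alpha = \ln \columnsum + \ln \ln \columnsum + \bigO{1}$ in the regime where $\columnsum$ exceeds a fixed constant --- the base satisfies $\alpha (Ax)_i \ge 2 > 1$, while $(Az)_i < 1$ makes the exponent positive, so $\pessrowA{i}{y} \ge 1$. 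Likewise,
\[
  \pessrowB{i}{y} = \bigl( \alpha (Ax)_i / 2 \bigr)^{(1 - \sum_{j \,:\, A_{ij} \le \score{i}} A_{ij}(\rounddown{\alpha x_j} + y_j)) / \score{i}} ,
\]
where $\score{i} > 0$ by \reflemma{median}; here the base $\alpha (Ax)_i / 2 \ge 1$ for $\alpha \ge 2$, and the exponent is positive because $\sum_{j \,:\, A_{ij} \le \score{i}} A_{ij} z_j \le (Az)_i < 1$. Hence $\pessrowB{i}{y} \ge 1$ as well, and therefore $\pessrow{i}{y} \ge 1$.

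To finish, I would restrict the right-hand sum above to the indices with $(Az)_i < 1$, replace each $\pessrow{i}{y}$ there by its lower bound $1$, and drop the remaining nonnegative terms; this yields the displayed inequality between the two sums. Multiplying by $C$ and adding $\rip{c}{\rounddown{\alpha x}} + \rip{c}{y}$ to both sides then produces $\pess{y}$ on the right and exactly the claimed quantity on the left, using that $z = \rounddown{\alpha x} + y$ so the index set $\{i : (A(\rounddown{\alpha x} + y))_i < 1\}$ is the same as $\{i : (Az)_i < 1\}$. I expect no real obstacle: the only point needing care is that the bases $\alpha (Ax)_i$ and $\alpha (Ax)_i / 2$ of the two exponentials are at least $1$, which is precisely where the mild bound $\alpha \ge 2$ (equivalently, the standing assumption that $\columnsum$ exceeds a fixed constant) is used; the rest is bookkeeping with exponents.
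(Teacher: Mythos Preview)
Your proposal is correct and follows essentially the same approach as the paper: subtract the common terms, then show $\pessrow{i}{y} \geq 1$ whenever $(A(\rounddown{\alpha x}+y))_i < 1$ by collapsing each of $\pessrowA{i}{y}$ and $\pessrowB{i}{y}$ into a single power with positive exponent and base at least $1$. You are slightly more explicit than the paper about why the bases $\alpha(Ax)_i$ and $\alpha(Ax)_i/2$ are at least $1$ (via $\alpha \ge 2$ and $Ax \ge \ones$), and you correctly flag the typo $\rounddown{x}$ versus $\rounddown{\alpha x}$ in the stated left-hand side.
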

\begin{proof}
  Subtracting out common terms, the claim is equivalent to showing
  that
  \begin{align*}
    C \sum_{i \where \parof{A \parof{\rounddown{\alpha x} + y}}_i < 1}
    \sum_{j=1}^n c_j A_{i,j} x_j %
    \leq %
    C \sum_{i=1}^m \pessrow{i}{y} \sum_{j=1}^n c_j A_{i,j} x_j.
  \end{align*}
  Since $\pessrow{i}{y} \geq 0$ for all $i$, it suffices to show that
  $\pessrow{i}{y} \geq 1$ whenever
  \begin{math}
    \parof{A \parof{\rounddown{\alpha x} + y}}_i < 1.
  \end{math}
  Indeed, if this is the case, then
  \begin{align*}
    \pessrowA{i}{y}             %
    =                           %
    \parof{\alpha A x}_i^{      %
    1 - \parof{A \parof{\rounddown{\alpha x} + y}}_i %
    }                                                %
    >                                                %
    \parof{
    \alpha A x
    }_i^0 = 1,
  \end{align*}
  and
  \begin{align*}
    \pessrowB{i}{y}
    =                           %
    \parof{\frac{\alpha A x}{\score{i}}}_i^{      %
    \frac{1 - \parof{A \parof{\rounddown{\alpha x} + y}}_i}{\score{i}}
    }
    >
    \parof{\frac{\alpha A x}{\score{i}}}_i^{0} = 1,
  \end{align*}
  as desired.
\end{proof}

\begin{theorem}
  \labeltheorem{deterministic-l1-alteration} Let $x \in \nnreals^n$
  with $A x \geq \ones$. In nearly linear deterministic time, one can
  compute a vector $z$ with coverage
  \begin{math}
    A z \geq \ones,
  \end{math}
  cost
  \begin{math}
    \rip{\cost}{z} \leq %
    \parof{\ln \Delta_1 + \ln \ln \Delta_1 + \bigO{1}},
  \end{math}
  and multiplicities
  \begin{math}
    z < \alpha x + \ones.
  \end{math}
\end{theorem}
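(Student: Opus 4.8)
The plan is to run the method of conditional expectations on the independent Bernoulli choices of step~\refstep{rounding-loop} of \algo{round-and-fix}, using $\pess{\cdot}$ as the pessimistic estimator. Two ingredients are already in place. First, \reflemma{pessimistic-expectation} gives $\evof{\pess{y}} \le \parof{\alpha + \bigO{1}}\rip{c}{x}$ for $\alpha = \ln\columnsum + \ln\ln\columnsum + \bigO{1}$, where $y \in \setof{0,1}^n$ is the random vector of coin tosses, so that $y_j = 1$ independently with probability $p_j \defeq \alpha x_j - \rounddown{\alpha x_j}$. Second, \reflemma{pessimistic-cost}, together with the fixing-cost bound of \reflemma{l1-fixing-cost}, says that for \emph{any} fixed $y \in \setof{0,1}^n$ the value $\pess{y}$ dominates the cost $\rip{c}{z}$ of the solution $z$ obtained by running the deterministic alteration loop of \algo{round-and-fix} on $\rounddown{\alpha x} + y$. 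Hence it suffices to produce, deterministically, one $\hat y$ with $\pess{\hat y} \le \parof{\alpha + \bigO{1}}\rip{c}{x}$, and then return the alteration solution built from $\rounddown{\alpha x} + \hat y$.

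Before starting the walk I would dispose of the $\min$ in $\pessrow{i}{\cdot}$, which is the minimum of $\pessrowA{i}{\cdot}$ and $\pessrowB{i}{\cdot}$: the conditional expectation of a minimum does not factor. But the weighted median $\score{i}$ is a function of $x$ alone (\reflemma{median}), so the case distinction in the proof of \reflemma{concentration} --- whether $\score{i} \ge 1/C$ or $\score{i} < 1/C$ --- is known before any coin is tossed. Replacing $\pessrow{i}{\cdot}$ in the definition of $\pess{\cdot}$ by $\pessrowA{i}{\cdot}$ for rows in the first case and by $\pessrowB{i}{\cdot}$ for rows in the second yields a pointwise-larger estimator, so \reflemma{pessimistic-cost} still applies, while its expectation is still at most $\parof{\alpha + \bigO{1}}\rip{c}{x}$, which is precisely what the two cases of \reflemma{concentration} establish. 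Now every row term is a single product over the coordinates $j$ with $A_{i,j}\neq 0$, so the conditional expectation of $\pess{\cdot}$ over the remaining, independent coin tosses factors coordinate by coordinate. Process $j = 1,\dots,n$: having fixed $y_1,\dots,y_{j-1}$, compute the conditional expectations $g_0$ and $g_1$ of $\pess{\cdot}$ under the current prefix together with $y_j = 0$ respectively $y_j = 1$ (the coordinates $j+1,\dots,n$ still random), and set $y_j$ to whichever of $0,1$ yields the smaller value. Since the conditional expectation before fixing $y_j$ equals $(1-p_j)g_0 + p_j g_1$, which is at least the smaller of $g_0$ and $g_1$, it never increases along the walk, so the fully determined $\hat y$ satisfies $\pess{\hat y} \le \evof{\pess{y}} \le \parof{\alpha + \bigO{1}}\rip{c}{x}$. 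Feeding $\rounddown{\alpha x} + \hat y$ to the alteration loop produces $z$ with $A z \ge \ones$ and $\rip{c}{z} \le \pess{\hat y} \le \parof{\alpha + \bigO{1}}\rip{c}{x}$, and the multiplicity bound $z < \alpha x + \ones$ follows exactly as in the proof of \reftheorem{l1-alteration} (randomized rounding contributes at most $\rounddown{\alpha x}+\ones$ coordinatewise, each alteration at most $\roundup{4x}$, and $\alpha > 4$).

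For the near-linear running time I would maintain the conditional expectation incrementally rather than recomputing it from scratch. Precompute each row's weight $\sum_j c_j A_{i,j}x_j$ and total $(Ax)_i$ in $\bigO{\norm{A}_0}$ time, and for each row $i$ store the current value of its (single, committed) product estimator under the partial assignment with the undecided coordinates replaced by their expectations. Fixing $y_j$ affects only the linear term $\rip{c}{y}$ (an $\bigO{1}$ update) and the rows $i$ with $A_{i,j}\neq 0$: each such row's stored product is corrected by one multiplicative factor, and the running global sum is patched in $\bigO{1}$. Computing $g_0$ and $g_1$ at coordinate $j$ thus costs $\bigO{\sizeof{\setof{i \where A_{i,j}\neq 0}}}$, for a total of $\bigO{\norm{A}_0 + m + n}$ arithmetic operations; carrying the exponentials to polylogarithmically many bits of precision suffices, as the constant-factor slack in $\alpha + \bigO{1}$ absorbs the accumulated roundoff.

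The only genuinely delicate point is the one flagged above: the $\min$ inside $\pessrow{i}{\cdot}$ prevents a direct coordinatewise factorization of the conditional expectation, and the fix is to commit each row to one of the two product estimators according to the $x$-determined value of $\score{i}$, in exact parallel with the case split of \reflemma{concentration}. After that, everything is the standard pessimistic-estimator template together with the bookkeeping above.
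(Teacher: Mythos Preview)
Your proof is correct and follows the paper's approach---method of conditional expectations on $\pess$, then the deterministic alteration loop, with the same incremental per-column bookkeeping for the running time. You in fact add rigor where the paper is loose: the paper asserts that the conditional expectation $\evof{\pess{y}\given y_1,\dots,y_j}$ is ``easily computable,'' but the $\min$ in $\pessrow{i}=\min\{\pessrowA{i},\pessrowB{i}\}$ prevents that expectation from factoring over the remaining independent coin tosses. Your fix---committing each row $i$ upfront to $\pessrowA{i}$ or $\pessrowB{i}$ according to whether $\score{i}\ge 1/C$, in exact parallel with the case split of \reflemma{concentration}---resolves this cleanly: the committed estimator is pointwise $\ge \pessrow{i}$ so \reflemma{pessimistic-cost} still holds, and its expectation is bounded by $C\score{i}/\columnsum$ precisely because that is what each case of \reflemma{concentration} establishes, so the argument behind \reflemma{pessimistic-expectation} goes through unchanged. (An alternative the paper may silently intend is to track $\min\bigl(\evof{\pessrowA{i}\given\cdots},\,\evof{\pessrowB{i}\given\cdots}\bigr)$ as the per-row estimator, which is also nonincreasing along the walk by concavity of $\min$; your route is simpler.)
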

\begin{proof}
  We apply the method of conditional expectations to $\pess{y}$, where
  initially $y \in \setof{0,1}^n$ is a randomized vector with
  independent coordinates and
  $\evof{y} = \alpha x - \rounddown{\alpha x}$. For $j = 1,\dots,m$,
  we fix $y_j$ to either 0 or 1 as to not increase the conditional
  expectation $\evof{\pess{y} \given y_1,\dots,y_j}$.

  Note that $\evof{\pess{y} \given y_1,\dots,y_j}$ is easily
  computable. Moreover, one can arrange a simple data structure such
  that upon advancing $i$ to the next index, the next conditional
  expectation $\evof{\pess{y} \given y_1,\dots,y_j}$ can be recomputed
  in time that is linear in the number of nonzeroes in the $j$th
  column of $A$.

  At the end, by \reflemma{pessimistic-expectation}, we have a fixed
  point $\hat{y} \in \setof{0,1}^n$ such that
  \begin{math}
    \pess{\hat{y}} \leq \evof{\pess{y}} \leq \parof{\alpha + \bigO{1}}
    \rip{c}{x}.
  \end{math}
  We take the integral vector $\rounddown{\alpha x} + \hat{y}$ and fix
  the unmet constraints with \reflemma{l1-fixing-cost}. By
  \reflemma{pessimistic-cost}, this solution has cost at most
  $\pess{\hat{y}} \leq \parof{\alpha + \bigO{1}} \rip{c}{x}$, as
  desired.
\end{proof}

%%% Local Variables:
%%% mode: latex
%%% TeX-master: "cip"
%%% End:

\subsection{Improved bound for $\cipi$ when $\bmin$ is large}
\labelsection{large-demand}

We now consider the case when $Ax \ge b$ where $b_i \ge \bmin > 1$ for
$i \in [m]$. Recall that $\columnsum$ is the maximum column sum of
$A$. We can scale each row $i$ by $b_i$ to obtain a system
$A'x \ge \ones$. If we let $\columnsum'$ denote the maximum column sum
of $A'$, we see that $\columnsum' \le \columnsum/\bmin$.
The analysis that we have already seen would yield an approximation
ration of $\ln \columnsum' + \ln \ln \columnsum' + O(1)$ assuming
that $\columnsum'$ is sufficiently large. In fact one can obtain
a better bound of the form $\frac{\ln \columnsum}{\bmin} + \ln
\frac{\ln \columnsum}{\bmin} + O(1)$. Here we assume that
$\frac{\ln \columnsum}{\bmin}$ is sufficiently large constant.
The analysis closely mimics the one in \refsection{l1-alteration}
and we only highlight the main changes.

The algorithm \algo{round-and-fix} generalizes in the obvious
fashion to the setting when $b \ge \ones$. After the randomized
rounding step each constraint $i$ that is uncovered, that is,
$(Az)_i < b_i$ is greedily fixed by solving a \kncover problem. We
analyze as follows.  \reflemma{median} easily generalizes to show
that for each row $i$, there is a median coefficient $\score{i}$
(w/r/t $x$) such that
\begin{math}
  \sum_{j:A_{i,j} \le \score{i}} A_{i,j}x_j \ge b_i/2
\end{math}
and
\begin{math}
  \sum_{j:A_{i,j} \ge \score{i}} A_{i,j}x_j \ge b_i/2.
\end{math}
\reflemma{l1-fixing-cost} also generalizes to show that the fixing
cost for $i$ is at most
$\frac{4}{\score{i}}\sum_{j=1}^n c_j A_{i,j}x_j$.  We address the
changes to \reflemma{median-concentration} and \reflemma{concentration}.
The probability that constraint $i$ is not covered is:
\begin{align*}
  \probof{(A z)_i < b_i}        %
  \leq                        %
  \exp{\frac{b_i(1 + \ln{\alpha / 2} - \alpha / 2)}{\score{i}}}.
\end{align*}
where we used the Chernoff inequality given by
\reflemma{unnormalized-chernoff} with $\gamma = \score{i}$,
$\mu = \alpha b_i/2$ and $\beta = b_i$.
We also have the following bound.
\begin{align*}
  \probof{(A z)_i < b_i}        %
  \leq                        %
  \exp{b_i(1 + \ln{\alpha} - \alpha)}.
\end{align*}

Our goal is to show that if $\alpha = \frac{\ln \columnsum}{\bmin} + \ln
\frac{\ln \columnsum}{\bmin} + O(1)$ then
$\probof{(A z)_i < b_i}  \le C \frac{\score{i}}{\columnsum}$ for
sufficiently large but fixed constant $C$. We consider two cases as before.
If $\score{i} \ge 1/C$ then
\begin{align*}
  \probof{(A z)_i < b_i}        %
  \leq                        %
  \exp{b_i(1 + \ln{\alpha} - \alpha)} \le \frac{1}{\columnsum} \le C \frac{\score{i}}{\columnsum}.
\end{align*}

Now suppose $\score{i} < 1/C$. We see that $\probof{(A z)_i < b_i}  \le
\score{i}/\columnsum$ if

\begin{align*}
    \alpha \geq 2 \frac{\score{i}}{b_i} \ln{\columnsum} + 2 \frac{\score{i}}{b_i}
    \ln{\frac{1}{\score{i}}} + 2 \ln{\alpha / 2} + 2.
\end{align*}

One can argue as before that with $C$ and the $O(1)$ term in $\alpha$
chosen sufficiently large the inequality holds true.  With these facts
in place, the expected cost of the solution
is $(\alpha + C) \rip{c}{x}$.

%%% Local Variables:
%%% mode: latex
%%% TeX-master: "cip"
%%% End:

\section{Fast Algorithm for Solving the Knapsack-Cover LP}
\labelsection{fast}

In this section we develop a fast approximation scheme for
\refequation{kc}. The algorithm is based on speeding up an MWU based
scheme by a combination of technical ingredients.

\subsection{Reviewing the MWU framework and its bottlenecks}
In this section, we give an overview of a width-independent version of
the multiplicative weight update (MWU) framework, as applied to the
dual of \refequation{kc}. Along the way, we review the techniques of
Carr et al.\ \cite{cflp-00} and recover their running time. With some
standard techniques, we improve the running time to nearly quadratic,
and we identify two bottlenecks that we need to overcome to remove the
quadratic factor.

For the remainder of this paper, we assume that
$\eps \geq \frac{1}{\poly{n}}$, since beyond this point one can use
the ellipsoid algorithm instead.

Following \cflp, we apply the MWU framework to the dual of the LP
\refequation{kc}, which is the following pure packing problem.
\begin{align*}
  \begin{aligned}
    \text{maximize } %
    & \sum_{S,i} y_{S,i} b_{S,i} %
    \text{ over } %
    y: \subsetsof{[n]} \times m \to \reals \\
    \text{ s.t.\ } %
    & %
    \sum_{i = 1}^m \sum_{S \subseteq [n]} A_{S,i,j} y_{S,i} \leq
    \cost{j} \text{ for all } j \in [n]
  \end{aligned}
      \labelthisequation[D]{dual}
\end{align*}
Here $\subsetsof{[n]} = \setof{S: S \subseteq [n]}$ denotes the power
set of $[n]$.  The preceding LP has one variable for every constraint
$i$ and every set $S \subseteq [n]$, and corresponds to a single
knapsack covering constraint in the primal LP \refequation{kc}. The LP
\refequation{dual} can be interpretted as packing knapsack covering
constraints into the variables.

The MWU framework is a monotonic and width-independent algorithm that
starts with an empty solution $y = \zeroes$ to the LP
\refequation{dual} and increases $y$ along a sequence of Lagrangian
relaxations to \refequation{dual}. Each Lagrangian relaxation is
designed to steer $y$ away from items $j$ for which the packing
constraint is tight. For each item $j$, the framework maintains a
weight $\weight{j}$ that (approximately) exponentiates the load of the
$j$th constraint with the current solution $y$; i.e.,
\begin{align*}
  \lnof{\cost{j} \weight{j}} \approx       %
  \frac{\log n}{\eps}                      %
  \cdot                %
  \frac{\sum_{i=1}^m \sum_{S \subseteq n} A_{S,i,j}
  y_{S,i}}{\capacity{j}}
  \labelthisequation{weight-update}
\end{align*}
for each $j \in [n]$. Initially, we have
$\weight{j} = \frac{1}{\cost{j}}$ for each $j$. Each iteration, the
framework solves the following Lagrangian relaxation of
\refequation{dual}:
\begin{align*}
  \text{maximize }              %
  &\sum_{S \subseteq [n], i \in [m]} y_{S,i} b_{S,i}   %
    \text{ over } z: \subsetsof{[n]} \times [m] \to \nnreals %
  \\
  \text{ s.t.\ }                %
  &
    \sum_{j=1}^n \weight{j} \sum_{i=1}^m \sum_{S \subseteq [n]} A_{S,i,j} z_{S,i} %
    \leq                                                               %
    \sum_{j=1}^n w_j c_j.
    \labelthisequation[R]{relaxation}
\end{align*}
Observe that the above relaxation biases the solution $z$ away from
items $j$ with large weight $\weight{j}$, which are the items $j$ for
which the packing constraint w/r/t $y$ is tight.  Given an approximate
solution $z$ to the above, we add $\delta z$ to $y$ for a carefully
chosen value $\delta > 0$ (discussed in greater detail below). The
next iteration encounters a different relaxation, where the weights
are increased to account for the loads increased by $z$. Note that the
weights $\weight{j}$ are monotonically increasing over the course of
the algorithm.

At the end of the algorithm, standard analysis shows that the vector
$y$ satisfies $\rip{b}{y} \geq \apxless \opt$ and that $\apxless y$
satisfies all the packing constraints (see for example
\cite{young-14,cjv-15}). The error can be made one-sided by scaling
$y$ up or down. Moreover, it can be shown that at some point in the
algorithm, an easily computable rescaling of $\weight$ is an
$\epspm$-relative approximation for the original LP \refequation{kc}
(see for example \cite{cflp-00,cq-17-focs}). Thus, although we may
appear more interested in solving the dual packing LP
\refequation{dual}, we are approximating the desired LP
\refequation{kc} as well.

The choice of $\delta$ differentiates this ``width-independent'' MWU
framework from other MWU-type algorithms in the literature.  The step
size $\delta$ is chosen small enough that no weight increases by more
than an $\exp{\eps}$-multiplicative factor, and large enough that some
weight increases by (about) an $\exp{\eps}$-multiplicative factor. The
analysis of the MWU framework reveals that
\begin{math}
  \lnof{\rip{\weight}{\cost}} \leq n^{\bigO{1/\eps}}
\end{math}
at all times. In particular, each weight can increase by an
$\exp{\eps}$-multiplicative factor at most
$\bigO{\frac{\ln n}{\eps^2}}$ times, so there are most
$\bigO{\frac{n \ln n}{\eps^2}}$ iterations total.

\subsubsection{Reduction to knapsack cover}

\labelsection{reduction}

An important aspect of the Lagrangian approach is that the
1-constraint packing problem \refequation{relaxation} is much simpler
to solve than the many-constraint packing problem
\refequation{dual}. It suffices to approximately identify the best
bang-for-buck coordinate indexed by $S \subseteq [n]$ and $i \in [m]$;
i.e., approximately maximizing the ratio
\begin{align*}
  \frac{b_{S,i}}{\sum_{S,i,j} w_j A_{S,i,j}},
\end{align*}
and setting $z = \gamma e_{S,i}$ for $\gamma$ as large as possible
within the single packing constraint.  Carr et al.\ \cite{cflp-00}
calls this choice of $S$ and $i$ the ``most violated inequality''.

\cflp reduces the above search problem to a family of knapsack cover
problems as follows. Fix $i \in [m]$.  Expanding out the definitions
of $b_{S,i}$ and $A_{S,i,j}$, finding the set $S$ maximizing the above
is shown to be equivalent to
\begin{align*}
  \text{minimize }              %
  \frac{1}{\alpha}\sum_{j \notin S} \weight{j} \min{A_{i,j},\alpha}
  \text{ over }                 %
  \alpha > 0 \text{ and } S \subseteq [n]
  \text{ s.t.\ }
  \sum_{j \in S} A_{i,j} d_j \leq b_i - \alpha.
\end{align*}
If we let $\excess{i} = \sum_{j=1}^n A_{i,j} d_j - b_i$ denote the
total ``excess'' for the $i$th covering constraint, then we can
rewrite the above as follows:
\begin{align*}
  \text{minimize }
  \frac{1}{\alpha} \sum_{j \in S} \weight{j} \min{A_{i,j},\alpha} %
  \text{ over } \alpha > 0 \text{ and } S \subseteq [n]              %
  \text{ s.t.\ }
  \sum_{j \in S} A_{i,j} d_j \geq \excess{i} + \alpha.
  \labelthisequation[KC]{knapsack-cover}
\end{align*}
For fixed $i$ and $\alpha > 0$, \refequation{knapsack-cover} is a
\emph{knapsack covering} problem.  For the sake of a
$\apxmore$-multiplicative approximation to \refequation{relaxation},
we can approximate the objective by a $\apxmore$-multiplicative
factor, but we must satisfy the covering constraint exactly.

A few basic observations by \cflp allow us to guess $\alpha$ by
exhaustive search.  To obtain a $\epsmore$-multiplicative
approximation to \refequation{relaxation}, we can afford to round
$\alpha$ up to the next integer power of $\epsmore$. Since the nonzero
coefficients $A_{i,j}$ all lie in the range $[1/C, C]$, it suffices to
check $\alpha$ for just $\epslog C = \bigO{\log{C} / \eps}$ powers of
$\epsmore$. We let $\alphas$ denote the set of
$\bigO{\frac{\log C}{\eps}}$ values of $\alpha$ of interest.

A constant factor approximation to \refequation{knapsack-cover} can be
obtained in $\bigO{n_i \log n_i}$ time \cite{cflz-91}, and there are
approximation schemes (discussed in greater detail in
\refsection{kc-oracle}) with running times on the order of
$\apxO{n_i + \poly{1/\eps}}$, where $n_i$ is the number of nonzeroes
in the $i$th row of $A$.  \cflp solves the relaxation
\refequation{relaxation} by applying a FPTAS for knapsack cover to
each choice of $i$ and $\alpha$. For larger values of $m$, by
multiplying the running time of the FPTAS with the number of choices
of $\alpha$ per $i$, and summing over $i \in [m]$, each instance of
\refequation{relaxation} takes
\begin{math}
  \apxO{\frac{N \log C}{\eps} + \frac{m \log C}{\poly{\eps}}}
\end{math}
time to approximate.  In a straight forward implementation of the MWU
framework, each iteration also requires $\bigO{n}$ time to adjust the
weight of each item. With $\bigO{n \log{n} / \eps^2}$ iterations
total, we achieve a running time of
\begin{math}
  \apxO{\frac{n N \log C}{\eps^3} + \frac{m n \log C}{\poly{\eps}}}.
\end{math}
This gives the running time described in \cflp.

\subsubsection{Thresholding}
\labelsection{thresholding}

A standard technique called ``lazy greedy'', ``thresholded greedy'',
or ``lazy bucketing'' in the literature immediately reduces the
running time to
\begin{math}
  \apxO{\parof{N \log C + n^2} \poly{1/\eps}}.
\end{math}
Observe that the optimum ratio is monotonically decreasing as the
weights $\weight{j}$ are monotonically increasing; equivalently, the
optimum value of \refequation{knapsack-cover} is monotonically
increasing for each $\alpha$ and $i$ as the weights $\weight{j}$ are
increasing. This allows us to employ the following thresholding
technique, also used within MWU frameworks in
\cite{fleischer-00,cq-17-soda,cq-17-focs,cq-18}.

We maintain a threshold $\lambda > 0$ such that $\lambda$ is less than
the optimal value of \refequation{knapsack-cover} for all $i \in [m]$
and $\alpha \in \alphas$. The first value of $\lambda$ is obtained by
applying a constant factor approximation algorithm to
\refequation{knapsack-cover} for each $i \in [m]$ and
$\alpha \in \alphas$ and setting $\lambda$ to be a constant factor
less than the minimum cost over all $i \in [m]$ and
$\alpha \in \alphas$. If $\lambda$ is a lower bound for
\refequation{knapsack-cover}, then any solution $S$ and $i$ with ratio
$\leq \apxmore \lambda$ leads to a $\apxmore$-multiplicative
approximation to \refequation{relaxation}. Thus, for a fixed value of
$\lambda$, we solve each $i \in [m]$ and $\alpha \in \alphas$ in
round-robin fashion, taking any $\epsmore$-approximation $S$ with
value $\leq \apxmore \lambda$, or continuing to the next choice of $i$
and $\alpha$ if the returned approximation has value
$\geq \apxmore \lambda$. If all $i \in [m]$ and $\alpha \in \alphas$
generate approximations of value $\geq \apxmore \lambda$, then we can
safely increase $\lambda$ to $\epsmore \lambda$. Observe that since
each weight $\weight{j}$ increases by at most a
$n^{\bigO{1/\eps}}$-multiplicative factor over the entire algorithm,
and then initial choice of $\lambda$ is within a constant factor of
the optimal value for the initial values of $\weight{j}$, $\lambda$
stays within a $n^{\bigO{1/\eps}}$-multiplicative value of its initial
value. In particular, $\lambda$ is bumped up at most
\begin{math}
  \epslog{n^{\bigO{1/\eps}}} = \bigO{\frac{\log n}{\eps^2}}
\end{math}
times.

Each time we approximate an instance of \refequation{kc} for
$i \in [m]$ and $\alpha \in \alphas$, we either (a) find a good
approximation to the relaxation \refequation{relaxation}, or (b)
declare that no solution has value $\leq \epsmore \lambda$ for this
choice of $i$ and $\alpha$ and put the choice of $i$ and $\alpha$
aside until the next value of $\lambda$. That is, each approximated
knapsack cover problem can by charged to either an iteration of the
framework, of which there are $\bigO{\frac{n \log n}{\eps^2}}$, or a
new threshold for this choice of $i$ and $\alpha$. This leads to a
running time on the order of
\begin{math}
  \apxO{\frac{n^2}{\eps^2} + \frac{N \log C}{\eps^3} + \parof{m + n} \poly{\reps}}.
\end{math}

\subsection{Two bottlenecks}
\labelsection{bottlenecks}

Our goal, as stated in \reftheorem{kc-ltas}, is a fast running time on
the order of
\begin{align*}
  \apxO{\frac{N \log C}{\eps^3} + (m+n) \poly{1/\eps}}.
\end{align*}
The bottleneck of $\apxO{\frac{n^2}{\eps^2}}$ appears necessary for at
least two basic reasons. First, there are
$\bigOmega{\frac{n \log n}{\eps^2}}$ iterations, and each iteration
requires a solution $z$ to the relaxation \refequation{relaxation}.
Here $z$ is a vector indexed by $[m]$ and the power set of $[n]$ ---
an $m 2^n$-dimensional space. Even the index of a nonempty coordinate
$(S,i) \in \support{z}$ requires
$\bigOmega{\log{m 2^n}} \geq \bigOmega{n}$ bits to write down. Thus
writing out explicitly a solution to \refequation{relaxation} in each
iteration -- let alone computing it -- generates a
$\bigOmega{\frac{n^2 \log n}{\eps^2}}$ lower bound. Stepping out of
the MWU framework, LP duality tells us that \refequation{dual} can be
minimized by a vector $y$ with support $\sizeof{\support{y}} \leq
n$. If $y$ has $n$ nonzeroes, then even in a sparse explicit
representation of $y$, we need $\bigOmega{n^2}$ to list the supporting
indices. Thus the descriptive complexity of optimal solutions to
\refequation{dual} is seemingly at least $\bigOmega{n^2}$.  A second
bottleneck arises from the weight updates. By the formula
\refequation{weight-update}, the logs of the weights should track the
loads of each packing constraint. However, each iteration may increase
the load of \emph{every} packing constraint, so updating each weight
explicitly can require $\bigOmega{n}$ time per iteration. Thus even
the innocuous weight updates makes a faster running time discouraging.

\subsection{Approximation schemes for knapsack cover}

\labelsection{kc-oracle}

In this section, we review a classical FPTAS for knapsack cover and
set the stage for a more sophisticated integration with the MWU
framework. In the knapsack cover problem, we are given positive costs
and sizes for $n$ items and a positive real-valued size $b$ of a
knapsack; we want to find the minimum (sum) cost subset of items whose
sizes sum to at least the size of the knapsack. In our setting, we
have a sequence of such problems, and the costs and sizes are dictated
via the MWU framework per equation \refequation{knapsack-cover} for
fixed $i \in [m]$ and $\alpha > 0$. The cost of an item $j$ is
\begin{math}
  \kcost{j} \defeq  \frac{\weight{j}}{\alpha} \min{A_{i,j},\alpha},
\end{math}
and the size of an item $j$ is
\begin{math}
  \ksize{j} = A_{i,j} d_j.
\end{math}
For each $j$, $\kcost{j}$ depends linearly on the weight $\weight{j}$
and $\ksize{j}$ is held constant throughout the algorithm.  We want to
update our solution quickly when a weight $w_j$ is increased by the
framework. We are allowed to output solutions that are within a
$\apxmore$-multiplicative factor greater than the optimal objective,
but insist on filling the knapsack completely.

Fix $i \in [m]$ and $\alpha > 0$ and consider equation
\refequation{knapsack-cover}. We let $n_i$ denote the number of
nonzeroes in the row $A_{i,j}$. Ignoring $j \in [n]$ such that
$A_{i,j} = 0$, $n_i$ is the effective number of items in the current
knapsack cover problem.

\begin{figure}
  \small \raggedright
  % \centering
  % \begin{minipage}{.6\paperwidth}
  \begin{framed}
    \ttfamily
    \underline{DP+greedy($\kcost{1},\dots,\kcost{n};\ksize{1},\dots,\ksize{n};b,\beta$)}
    \begin{steps}
      \commentitem{we assume without loss of generality that each item
        has size $\ksize{j} \leq b$, and that $\beta$ is a constant
        factor approximation for $\opt$.}
    \item Compute a collection of $\bigO{\repss}$ pareto optimal sets
      $\mathcal{S}$ over the expensive items (with cost
      $\kcost{j} \geq \eps \beta$) w/r/t the truncated costs
      $\apxkcost{j} = \fracdown{\kcost{j}}{\eps^2 \beta}\eps^2 \beta$.
    \item For each pareto-optimal set $S \in \mathcal{S}$, greedily
      add cheap items (with cost $\kcost{j} < \eps \beta$) to $S$ in
      increasing order of cost-to-size ratio until $S$ fills the
      knapsack.
    \item Return the best solution $S \in \mathcal{S}$
    \end{steps}
  \end{framed}
  \caption{High-level sketch of the algorithm by Lawler
    \cite{lawler-79} for approximating knapsack cover
    problems. \labelfigure{dp+greedy}}
  % \end{minipage}
\end{figure}

There are several known approximation schemes for knapsack cover that
run in time $\apxO{n_i + \poly{1/\eps}}$. Here we focus on the similar
approaches of Ibarra and Kim \cite{ik-75} and Lawler
\cite{lawler-79}.\footnote{\cite{ik-75,lawler-79}
  actually consider the more common \emph{maximum knapsack} problem,
  where the goal is to take the maximum sum cost of items that fit
  within the knapsack. Their ideas extend here by using the modified
  greedy algorithm of \cite{cflz-91} to obtain a constant factor
  approximation. Faster or incomparable running times for maximum
  knapsack have been achieved since by more sophisticated techniques
  \cite{kp-04,rhee-15,chan-18}; we follow
  \cite{ik-75,lawler-79} for the sake of simplicity.} A
sketch of the algorithm (following \cite{lawler-79} in
particular) is given in \reffigure{dp+greedy}. The algorithm combines
two basic ideas. First, a greedy heuristic can work fairly well. If
every item has small cost relative to the optimum value, then the
greedy algorithm repeatedly taking the item with minimum cost-to-size
ratio until the knapsack is filled is a good approximation. When costs
are large, the greedy heuristic can be modified to provide a constant
factor approximation within the same $\bigO{n \log n}$ running time
\cite{cflz-91}.  For expensive items (relative to the optimum value),
one can take advantage of the fact that (a) only a few expensive items
can fit in any optimum solution, and that (b) expensive items can have
their costs discretized while changing their costs by only a small
relative factor. After discretization, the expensive items in the
optimum solution can be efficiently guessed by dynamic programming,
and the discretization only introduces a small relative error.

We require the following facts about the algorithm
\algo{DP+greedy}. We state the running time w/r/t the total
number of items in a generic problem, denoted by $n$, which would be
replaced by $n_i$ in our particular setting.
\begin{lemma}[{\cite{ik-75,lawler-79}}]
  \labellemma{ptas-knapsack-cover}
  \begin{enumerate}
  \item If $\beta$ is a constant factor approximation of the optimum
    value, then \algo{DP+greedy} returns a
    $\apxmore$-multiplicative approximation to the minimum cost
    knapsack cover.
  \item If the items are sorted in increasing order of cost-to-size
    ratio, and the prefix sums over the sorted list w/r/t size are
    precomputed, the greedy algorithm can be simulated in
    $\bigO{\log n}$ time.  After preprocessing all the cheap items
    (with cost $\kcost{j} < \eps \beta$) in this way, the greedy
    augmentation in line \algo{3} can be implemented in
    $\bigO{\log n}$ time for each set $S \in \mathcal{S}$.
  \item If, for each $k \in \naturalnumbers$, the expensive items of
    truncated cost
    \begin{math}
      \apxkcost{j} = k \eps^2 \beta
    \end{math}
    are sorted in decreasing order of size, then line \algo{2} can be
    computed in $\bigO{\frac{1}{\eps^4}}$ time.
  \end{enumerate}
\end{lemma}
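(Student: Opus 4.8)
This lemma records three standard properties of the Ibarra--Kim/Lawler FPTAS for knapsack cover (\cite{ik-75,lawler-79}), specialized to the cover version; the plan is to recall the construction behind \algo{DP+greedy} (\reffigure{dp+greedy}) and verify each claim in turn. Fix the knapsack cover instance with items $[n]$, sizes $\ksize{j}$, costs $\kcost{j}$, and demand $b$, and recall the hypothesis that $\opt \leq \beta = \bigO{\opt}$. Call item $j$ \emph{expensive} if $\kcost{j} \geq \eps\beta$ and \emph{cheap} otherwise. We may assume $\ksize{j} \leq b$ for every $j$ (truncating larger sizes to $b$ changes no set's feasibility), and since $\opt \leq \beta$ and no solution of cost at most $\beta$ can use an item costing more than $\beta$, we may discard every expensive item with $\kcost{j} > \beta$. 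For a set $T$ of items write $c(T) = \sum_{j\in T}\kcost{j}$, $\tilde c(T) = \sum_{j\in T}\apxkcost{j}$, and $s(T) = \sum_{j\in T}\ksize{j}$, where $\apxkcost{j} = \fracdown{\kcost{j}}{\eps^2 \beta}\eps^2 \beta$; note that $(1-\eps)\kcost{j} < \apxkcost{j} \leq \kcost{j}$ for expensive $j$. Fix an optimal solution $S^\ast$ and split it as $E^\ast = \setof{j \in S^\ast \where \kcost{j} \geq \eps\beta}$ and $C^\ast = S^\ast \setminus E^\ast$.

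For the first claim I would argue as follows. Since each expensive item costs at least $\eps\beta$, $|E^\ast| \leq c(E^\ast)/(\eps\beta) \leq 1/\eps$, and hence $\tilde c(E^\ast) \geq (1-\eps)c(E^\ast)$. The dynamic program over the expensive items (line 2) keeps, for each attainable value of $\tilde c(\cdot)$, a subset of maximum total size, so it produces a set $S \in \mathcal{S}$ with $\tilde c(S) = \tilde c(E^\ast)$ and $s(S) \geq s(E^\ast)$; consequently $c(S) \leq \tilde c(S)/(1-\eps) = \tilde c(E^\ast)/(1-\eps) \leq c(E^\ast)/(1-\eps)$. The residual demand $b - s(S)$ is at most $b - s(E^\ast) \leq s(C^\ast)$, so there are enough cheap items to finish; and since ratio-sorted greedy is exactly optimal for the LP relaxation of knapsack cover, greedily augmenting $S$ with cheap items in increasing cost-to-size order reaches demand $b$ at additional cost at most $c(C^\ast) + \eps\beta$ — namely the LP optimum for covering the residual demand with cheap items (which $C^\ast$ upper-bounds) plus one extra cheap item of cost below $\eps\beta$. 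Thus the candidate built from this $S$ has cost at most $c(E^\ast)/(1-\eps) + c(C^\ast) + \eps\beta \leq \opt/(1-\eps) + \bigO{\eps\opt} = (1 + \bigO{\eps})\opt$, and \algo{DP+greedy} returns the best candidate over $\mathcal{S}$.

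For the second claim, once the cheap items are sorted by cost-to-size ratio with prefix sums of sizes \emph{and} costs precomputed, the greedy augmentation of a set $S$ is just: binary-search for the shortest prefix of the sorted cheap items whose size sum is at least $b - s(S)$, then read its cost off the cost prefix sums; this is $\bigO{\log n}$ time, and running it for each of the $\bigO{1/\eps^2}$ sets in $\mathcal{S}$ gives line 3 in $\bigO{(\log n)/\eps^2}$ time. For the third claim I would realize line 2 as a dynamic program whose state is a total truncated cost — a multiple of $\eps^2\beta$ in $[0,\beta]$, so $\bigO{1/\eps^2}$ states — and whose value is the maximum achievable total size, processed one truncated-cost class at a time. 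In the class of truncated cost $k\eps^2\beta$ at most $\bigO{1/(k\eps^2)}$ of its (size-sorted) items can ever be used before the budget $\beta$ is exceeded, and the partial sums of their decreasing sizes form a concave sequence; hence merging that class into the current table is a $(\max,+)$-convolution against a concave sequence, computable in time linear in the table size, $\bigO{1/\eps^2}$, by the standard monotone-matrix search (SMAWK). Over the $\bigO{1/\eps^2}$ classes this totals $\bigO{1/\eps^4}$, after which the size-maximizing set recorded for each state gives the $\bigO{1/\eps^2}$ Pareto-optimal sets of $\mathcal{S}$.

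The main obstacle is obtaining the clean $\bigO{1/\eps^4}$ of the third claim rather than an extra $\log(1/\eps)$ factor: the naive class-by-class, state-by-state, multiplicity-by-multiplicity triple loop costs $\bigO{(1/\eps^2)\sum_k 1/(k\eps^2)} = \bigO{(\log(1/\eps))/\eps^4}$, and eliminating the harmonic sum is exactly where concavity of the size prefix sums (equivalently, the Monge structure of the associated matrix) is needed. The remaining points are routine but should be stated carefully: the exact LP-optimality of ratio-greedy for cheap-only knapsack cover (which justifies the step ``LP optimum $\leq c(C^\ast)$''), the fact that the greedy augmentation of the distinguished set $S$ actually terminates by filling the knapsack, and the degenerate cases $E^\ast = \emptyset$ or $C^\ast = \emptyset$.
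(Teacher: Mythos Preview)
The paper does not prove this lemma at all: it is stated with a citation to \cite{ik-75,lawler-79} and treated as a known fact, so there is no ``paper's own proof'' to compare against. Your sketch is therefore supplying what the paper deliberately omits, and it is essentially correct on all three parts.

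One remark on the third part. Your route to $\bigO{1/\eps^4}$ --- processing the truncated-cost classes one at a time and merging each into the table via a $(\max,+)$-convolution against the concave size-prefix sequence using SMAWK --- is valid and gives exactly the stated bound. It is worth noting, though, that SMAWK (1987) postdates Lawler (1979), so this is not literally Lawler's argument; Lawler obtains the same bound by a more elementary counting of how many items from each class can participate and a direct DP over the resulting $\bigO{1/\eps^2}$-size reduced instance. Either route works, and yours is arguably cleaner; just be aware that if you cite \cite{lawler-79} for the $\bigO{1/\eps^4}$ step you are attributing the bound, not the specific mechanism you describe.

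A couple of small points you already flagged but should state explicitly in a final write-up: that the greedy augmentation over cheap items always terminates (it does, since $s(C^\ast)$ already suffices for the residual demand and all of $C^\ast$ is cheap), and that reading off the \emph{cost} of the greedy prefix in $\bigO{\log n}$ requires cost prefix sums in addition to size prefix sums --- you mention this but the lemma as stated only says ``prefix sums w/r/t size,'' so you are (correctly) strengthening the hypothesis slightly.
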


Marrying \algo{DP+greedy} with the MWU framework in
$\apxO{\poly{\reps}}$ time per iteration has two components. First, we
want to make \algo{DP+greedy} partially dynamic as the costs
$\kcost{j}$ are increased via increments to weights
$\weight{j}$. Second, after computing a good solution, we want to be
able to simulate a weight update along the solution in
$\apxO{\poly{\reps}}$ amortized time. There are basic reasons
(discussed earlier in \refsection{bottlenecks}) why neither component
should be feasible.

\subsection{Dynamically updating the minimum cost knapsack}

\labelsection{kc-dynamic}

The first goal is to be able to respond to weight updates and generate
the solution to the next knapsack cover problem quickly. By
\reflemma{ptas-knapsack-cover}, this boils down to two basic data
structures. First, we need to be able to maintain items in increasing
order of cost-to-size ratio along with the prefix sums w/r/t size in
order to reduce the greedy algorithm to a binary search. This allows
us to greedily augment each candidate set $S \in \mathcal{S}$ with
inexpensive items in line \algo{3} in $\bigO{\log n}$ time per set.
Second, we need to maintain, for each expensive item, all the
expensive items with the same truncated cost in descending order of
size. Note that a constant factor approximation $\beta$ is provided by
the threshold $\lambda$ from the lazy greedy thresholding scheme of
\refsection{thresholding}. Since only a constant factor is required,
we actually set and maintain $\beta$ to be the next power of 2 of
$\lambda$, $\beta = 2^{\logup{\lambda}}$.

We address the second point first, because it is much simpler.  We
need to maintain, for each $k \in \naturalnumbers$, the set of items
with truncated cost $\apxkcost{j} = k \eps^2 \beta$ sorted in
descending order of size. This is very easy --- when an item's cost
increases, we reinsert it into the appropriate sorted list in
$\bigO{n_i}$ time; when $\beta$ increases, we rebuild all the lists,
from scratch, in $\bigO{n_i \log n}$ time.

\begin{lemma}
  \labellemma{dynamic-dp} In $\bigO{\log n}$ time per weight update,
  and $\bigO{n_i \log n}$ time per update to $\beta$, one can
  maintain, for each $k \in \naturalnumbers$, the expensive items of
  truncated cost
  \begin{math}
    \apxkcost{j} = k \eps^2 \beta
  \end{math}
  sorted in descending order of size.
\end{lemma}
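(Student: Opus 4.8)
The plan is to keep, for each bucket index $k \in \naturalnumbers$, a balanced binary search tree $T_k$ storing all the currently \emph{expensive} items $j$ of row $i$ (those with $\kcost{j} \geq \eps\beta$) whose truncated cost equals $\apxkcost{j} = k\eps^2\beta$, keyed on the size $\ksize{j}$ with ties broken by item index. Since $\ksize{j}$ never changes for a fixed instance of \refequation{knapsack-cover}, an in-order traversal of $T_k$ lists exactly the items of that truncated cost in descending order of size, which is the ordering required by line \algo{2} of \algo{DP+greedy} in \reffigure{dp+greedy}. The trees are held in a dictionary keyed by $k$; at most $n_i$ buckets are ever nonempty, so this is $\bigO{n_i}$ space, and one may additionally drop any item whose cost exceeds a fixed constant multiple of $\beta$, since $\beta$ is within a constant factor of $\opt$ and such an item is never used by a near-optimal cover, which confines the relevant range of $k$ to $\bigO{1/\eps^2}$ values.

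For a weight update — an increment to a single weight $\weight{j}$ — we recompute $\kcost{j}$ (which depends linearly on $\weight{j}$) and its truncated value $\apxkcost{j}$, which identifies the new bucket $k'$. If $j$ was cheap before ($\kcost{j} < \eps\beta$) and is now expensive, we insert it into $T_{k'}$; if $j$ was already expensive in bucket $k$, we delete it from $T_k$ and insert it into $T_{k'}$. Each case is one insertion and at most one deletion in a balanced binary search tree, costing $\bigO{\log n}$. Moreover, since $\alpha$ is fixed within a knapsack cover instance and $\ksize{j}$ is constant, $\kcost{j}$ is monotonically nondecreasing, so items only ever move from cheap to expensive and only to buckets of larger index; this monotonicity is not needed for the time bound, only the fact that a single update touches a single item.

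When $\lambda$, and hence $\beta = 2^{\logup{\lambda}}$, increases, every truncated cost $\apxkcost{j} = \floor{\kcost{j}/(\eps^2\beta)}\eps^2\beta$ and the cheap/expensive partition may change, so we rebuild all the buckets from scratch: sort the $n_i$ items of row $i$ by size once in $\bigO{n_i\log n}$ time, then scan this sorted list and append each expensive item to the tail of the list for its (new) bucket, constructing each $T_k$ in $\bigO{n_i}$ additional time overall (or simply re-inserting, in $\bigO{n_i\log n}$). This yields the claimed $\bigO{n_i\log n}$ cost per update to $\beta$.

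The only places that need any care are the handling of the cheap/expensive boundary under weight increases — addressed above, using the one-directional nature of the transitions — and confirming that the $\bigO{n_i\log n}$ rebuild on a change of $\beta$ is affordable; the latter is accounted for elsewhere, since $\beta$ doubles at most $\bigO{\log n/\eps^2}$ times over the whole run, tracking the threshold $\lambda$ from \refsection{thresholding}. Everything else is a routine application of balanced binary search trees, so I do not expect any genuine obstacle here.
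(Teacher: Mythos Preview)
Your proposal is correct and follows essentially the same approach as the paper: maintain one sorted list per truncated-cost bucket, reinsert the single affected item on a weight update, and rebuild everything from scratch when $\beta$ changes. The paper's own justification is a two-sentence sketch immediately preceding the lemma; your use of balanced BSTs keyed on size is a natural implementation of ``reinsert into the appropriate sorted list'' and in fact matches the lemma's $\bigO{\log n}$ bound more precisely than the paper's informal text (which states $\bigO{n_i}$ for the reinsertion).
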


The next step is a data structure to facilitate the greedy algorithm
in lines \algo{1} and \algo{3}.  If the items are sorted in ascending
order or cost-to-size ratio, and the prefix sums w/r/t size are
precomputed, then the greedy algorithm can be implemented by a binary
search. In the face of dynamically changing costs, we can maintain
both these values with dynamic tree data structures. Our setting is
simpler because the range of possible costs of a particular item is
known in advance, as follows.
\begin{observation}
  For fixed $j \in [n]$ with $A_{i,j} \neq 0$, let
  \begin{align*}
    \kcosts{j} = \setof{ %
    \frac{\epsmore^k}{\alpha} \min{A_{i,j},\alpha} %
    \where                                         %
    k \in \setof{0,1,\dots,\bigO{\frac{\log m}{\eps^2}} } %
    }.
  \end{align*}
  Then $\kcost{j} \in \kcosts{j}$.
\end{observation}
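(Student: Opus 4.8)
The plan is to reduce the statement to a claim about the set of values a single MWU weight $\weight{j}$ can take over the course of the algorithm, and then quote the facts about the framework recalled above. Fix $i \in [m]$, $\alpha > 0$, and $j \in [n]$ with $A_{i,j} \neq 0$. In $\kcost{j} = \frac{\weight{j}}{\alpha}\min{A_{i,j},\alpha}$, the factor $\frac{1}{\alpha}\min{A_{i,j},\alpha}$ is a constant that depends only on the (fixed) input and on $\alpha$, not on the state of the MWU algorithm, so $\kcost{j}$ is just this constant times the current value of $\weight{j}$. It therefore suffices to show that throughout the algorithm $\weight{j}$ takes one of at most $\bigO{\log m/\eps^2}$ values, each a power of $\epsmore$ times the initial weight $1/\cost{j}$. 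Rounding each $\cost{j}$ to a power of $\epsmore$ (as the implementation does when it stores costs and weights) makes $1/\cost{j}$ a power of $\epsmore$, so each such value of $\weight{j}$ — and hence each value of $\kcost{j}$ — is a power of $\epsmore$ times $\frac{1}{\alpha}\min{A_{i,j},\alpha}$; after folding $\log_{\epsmore}\cost{j}$ into the exponent and reindexing so the smallest one is $k = 0$, these are precisely the elements of $\kcosts{j}$.

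First I would argue that $\cost{j}\weight{j}$ always equals $\epsmore^{k}$ for some nonnegative integer $k$. This is essentially the content of the weight-update rule \refequation{weight-update}: the framework keeps $\cost{j}\weight{j}$, which starts at $1$, equal to $\epsmore$ raised to a rounded, monotonically nondecreasing function of the load of the $j$-th packing constraint (using $\epsmore$ and $\exp{\eps}$ interchangeably, as is standard). Nonnegativity of the exponent uses that the weights are monotonically nondecreasing, which gives the lower endpoint $k \ge 0$. The one point to check here is that carrying out this rounding does not degrade the framework's $\apxmore$-multiplicative guarantee — the standard observation that a uniform $\epsmore$-distortion of all the weights perturbs the Lagrangian relaxation \refequation{relaxation}, and hence the solution it produces, by at most an $\apxmore$-factor.

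For the upper endpoint I would invoke the iteration analysis recalled in the overview: over the whole run each weight is increased by an $\exp{\eps}$-multiplicative factor at most $\bigO{\log m/\eps^2}$ times (equivalently, $\rip{\weight}{\cost}$, and hence $\cost{j}\weight{j}$, stays bounded by $m^{\bigO{1/\eps}}$ while $\cost{j}\weight{j} = 1$ initially). Hence the exponent $k$ above satisfies $0 \le k \le \bigO{\log m/\eps^2}$, and combining with the reduction in the first paragraph gives $\kcost{j} \in \kcosts{j}$. There is no deep obstacle here; the only real subtlety is bookkeeping — pinning down the normalization conventions (the initial weight, the direction of rounding, and how the per-item cost $\cost{j}$ is absorbed into the exponent) so that the index set in the statement is literally $\setof{0,1,\dots,\bigO{\log m/\eps^2}}$. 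None of this affects the one property the data structures of the following subsections need: that $\kcost{j}$ ranges over a set of only $\bigO{\log m/\eps^2}$ values that can be enumerated in advance.
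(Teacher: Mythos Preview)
The paper does not give a proof of this observation at all; it is stated as immediate from the MWU setup described just above it (the weights start at $\weight{j} = 1/\cost{j}$, are only bumped in $\epsmore$-multiplicative increments, and are bumped at most $\bigO{\log n/\eps^2}$ times because $\rip{\weight}{\cost}$ stays bounded by $n^{\bigO{1/\eps}}$). Your argument is exactly this reasoning, spelled out, so it matches the paper's intent.

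One remark: the literal statement in the paper has the elements of $\kcosts{j}$ equal to $\frac{\epsmore^{k}}{\alpha}\min\{A_{i,j},\alpha\}$, whereas tracing the definitions gives $\kcost{j} = \frac{\weight{j}}{\alpha}\min\{A_{i,j},\alpha\}$ with $\weight{j} = \epsmore^{k}/\cost{j}$, so a factor $1/\cost{j}$ is missing from the paper's formula. You noticed this and patched it by assuming the implementation rounds $\cost{j}$ to a power of $\epsmore$ and reindexing; that is a perfectly reasonable fix, but strictly speaking the imprecision is in the paper's statement, not in your proof. The only property used downstream is the one you highlight at the end: that $\kcost{j}$ ranges over $\bigO{\log n/\eps^2}$ values computable in advance, and that is what you establish.
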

For $j \in [n]$ with $A_{i,j} \neq 0$, let
\begin{math}
  \kratios{j} = \setof{\alpha / \ksize{j} \where \alpha \in
    \kcosts{j}}
\end{math}
be the set of possible cost-to-size ratios for $j$.  For any $j$ with
$A_{i,j} \neq 0$, we have
\begin{math}
  \sizeof{\kratios{j}} %
  = %
  \sizeof{\kcosts{j}} %
  \leq %
  \bigO{\frac{\log n}{\eps^2}}.
\end{math}

Knowing all the possible ratios in advance allows for a simpler data
structure. We will benefit from the simplicitly later when we need to
incorporate efficient weight updates.  Consider the set
\begin{math}
  \leaves = \setof{ (j, \alpha) %
    \where %
    A_{i,j} \neq 0 %
    \text{ and } %
    \alpha \in \kratios{j} %
  }.
\end{math}
$\leaves$ consists of all possible assignments of ratios to items. We
have $\sizeof{\leaves} \leq \bigO{\frac{n_i \log n}{\eps^2}}$ and
consider $\leaves$ as a sorted set based on the ratio, breaking ties
arbitrarily.

We build a balanced binary tree over $\leaves$. We mark a leaf
$(j,\alpha)$ as occupied iff $\kcost{j} \leq \eps \beta$ and
$\kcost{j} / \ksize{j} = \alpha$. For each internal node, we track the
sum size of all items in leaves marked as occupied.  The tree has
depth $\bigO{\log n}$, and in particular we can update the tree in
$\bigO{\log n}$ time when an item's weight increases. The tree can be
rebuilt $\bigO{n_i \log n}$ time per update to $\beta$.

As the leaves are in ascending order of cost-to-size ratio, any set of
items considered by the greedy algorithm corresponds to the occupied
leaves of an interval in the range tree. The total size of a greedy
set is the total size of occupied leaves in the corresponding
interval.  To compute the sum size of a set of greedily selected
items, we first decompose the corresponding interval in the range tree
to the disjoint union of $\bigO{\log n}$ subtrees. For each subtree,
the root is labeled with the sum size of all occupied leaves in the
subtree. Summing together these $\bigO{\log n}$ values gives us the
total size of the greedy sequence.

With this data structure, we can simulate the greedy algorithm in
$\bigO{\log n}$ time by running a binary search for the shortest
prefix that fills the knapsack.
\begin{lemma}
  \labellemma{dynamic-greedy} In $\bigO{\frac{n_i \log n}{\eps^2}}$
  time initially, $\bigO{\log n}$ time per weight update, and
  $\bigO{n_i \log n}$ time per update to $\beta$, one can maintain a
  data structure that simulates the greedy algorithm over the cheap
  items (with cost $\kcost{j} < \eps \beta$) in $\bigO{\log n}$ time.
\end{lemma}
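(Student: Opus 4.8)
The plan is to instantiate the augmented balanced binary search tree over $\leaves$ sketched above and then verify the three time bounds together with the correctness of the greedy simulation. First I would form $\leaves = \setof{(j,\alpha) \where A_{i,j} \neq 0 \text{ and } \alpha \in \kratios{j}}$, sort it by cost-to-size ratio, and build a balanced binary tree on top, storing at each internal node both the total size and the total cost of the occupied leaves in its subtree, where a leaf $(j,\alpha)$ is declared \emph{occupied} iff $j$ is currently cheap ($\kcost{j} \le \eps\beta$) and its current ratio $\kcost{j}/\ksize{j}$ equals $\alpha$ --- so at most one leaf per item is occupied at any time. Since $\sizeof{\kratios{j}} \le \bigO{\frac{\log n}{\eps^2}}$ we get $\sizeof{\leaves} \le \bigO{\frac{n_i \log n}{\eps^2}}$, and building the tree is linear in $\sizeof{\leaves}$ once it is sorted; the sort is a lower-order term because $\log\sizeof{\leaves} = \bigO{\log n}$ under the standing assumption $\eps \ge 1/\poly{n}$ (alternatively, one merges the $n_i$ per-item geometric progressions, each of which is already sorted). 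This yields the claimed $\bigO{\frac{n_i \log n}{\eps^2}}$ initialization.

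Next I would handle the two update operations. A weight increase moves $\kcost{j}$ to the next element of $\kcosts{j}$ and hence $j$'s ratio to the next element of $\kratios{j}$, so it flips the occupancy of at most two leaves --- vacate $j$'s old leaf, and mark its new leaf if $j$ is still cheap; each flip propagates the size/cost sums up a single root-to-leaf path of length $\bigO{\log\sizeof{\leaves}} = \bigO{\log n}$, giving $\bigO{\log n}$ per weight update. An update to $\beta$ leaves $\leaves$, the tree shape, and every ratio unchanged and only moves the cheap/expensive boundary; I would scan the $\bigO{n_i}$ items and, for each whose cheapness status flipped, mark or vacate its single corresponding leaf and propagate in $\bigO{\log n}$ time, for a total of $\bigO{n_i \log n}$ --- crucially without rebuilding the entire $\bigO{\frac{n_i \log n}{\eps^2}}$-leaf tree, which would be too slow for this budget.

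Finally I would argue that a single root-to-leaf descent faithfully simulates the cheap-item greedy step. Because $\leaves$ is sorted by ratio and each cheap item owns exactly one occupied leaf, the occupied leaves read left to right are exactly the cheap items in nondecreasing cost-to-size order --- the order the greedy algorithm of \reflemma{ptas-knapsack-cover} processes them; hence for a candidate set $S$ with residual demand $r$ (the knapsack size minus the size $S$ already covers), the greedy augmentation is the shortest occupied-leaf prefix of $\leaves$ whose total size is at least $r$, and its cost is that prefix's total cost. I would locate this prefix by descending the tree: at each node I compare $r$ against the occupied-size stored at the left child, recursing left if that size is at least $r$ and recursing right otherwise, in the latter case decreasing $r$ by the left child's occupied size and increasing the running cost by the left child's occupied cost; the degenerate cases $r \le 0$ (empty augmentation) and $r$ exceeding the root's total occupied size (infeasible) are read off immediately. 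This costs $\bigO{\log n}$, matching the lemma. The one delicate point --- the ``obstacle'' --- is keeping the occupancy invariant exactly synchronized with both update types, in particular ensuring that a weight increase pushing $\kcost{j}$ past $\eps\beta$ vacates $j$'s leaf and that a $\beta$-update re-tests every item against the \emph{new} threshold; once the invariant is stated precisely, everything else is routine augmented-balanced-tree bookkeeping.
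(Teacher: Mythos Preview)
Your proposal is correct and follows essentially the same approach as the paper: build a balanced binary tree over the precomputed leaf set $\leaves$ sorted by ratio, mark a leaf occupied exactly when its item is currently cheap and at that ratio, maintain subtree aggregates along root-to-leaf paths on updates, and simulate greedy by a binary search for the shortest occupied prefix filling the residual demand. Your treatment is in fact a bit more explicit than the paper's in two places --- you additionally store cost sums (needed to report the augmentation's cost in $\bigO{\log n}$ time) and you spell out that a $\beta$-update only flips at most $n_i$ occupancies rather than rebuilding the whole $\bigO{n_i \log n/\eps^2}$-leaf tree --- but these are clarifications of the same construction, not a different route.
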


\subsection{Updating weights along a knapsack cover solution}
\labelsection{kc-weight-update}
In \refsection{kc-dynamic}, we showed how to modify known FPTAS's for
knapsack cover so that it can respond to increases in costs quickly
without redoing everything from scratch. This removes one bottleneck
from the overall MWU framework, in that the Lagrangian relaxations can
now be solved in $\apxO{\poly{1/\eps}}$ amortized time per weight
update.

There is still another bottleneck just as important as computing a
approximation $S$ to the knapsack cover problem; namely, simulating a
weight update w/r/t the solution $S$. Let $S$ be a solution to
\refequation{knapsack-cover} for fixed $i$ and $\alpha$. The MWU
framework dictates that the weights $\weight$ should update to a new
set of weights $\weight'$ per the following formula:
\begin{align*}
  w_j' =
  \begin{cases}
    w_j & \text{if } j \notin S \\
    \exp{ %
      \frac{ %
        \eps A_{S,i,j} / c_j %
      }{ %
        \max_{\varj \in S} A_{S,i,\varj} / c_{\varj} %
      } %
    } w_j %
    & %
    \text{if } j \in S.
  \end{cases}
\end{align*}
The basic problem is as follows. By the above formula, every
coordinate $j \in S$ has its weight adjusted. Updating these weights
directly requires $\bigO{\sizeof{S}}$ time to visit each item. Since
$\sizeof{S}$ may be as large as $n$, and there are
$\bigO{\frac{n \log n}{\eps^2}}$ iterations, this already leads to a
running time of $\apxO{n^2 \log n / \eps^2}$. With no assumptions on
$S$, this is seemingly the best that one can expect.

We circumvent this lower bound by taking advantage of the structure of
our solution $S$. A solution $S$ to the knapsack cover problem
\refequation{knapsack-cover} for fix $i$ and $\alpha$ consists of
$\bigO{\reps}$ expensive items and a greedily selected sequence of
cheap items. Since there are only $\bigO{\reps}$ expensive items, we
can spend $\bigO{\reps}$ time to update each of them individually.  By
contrast, there is no upper bound on the number of cheap items in our
solution.

Recall the data structure by which we simulate the greedy algorithm in
the previous section. The items in greedy order in a balanced binary
tree so that any subsequence of this order decomposes into the leaf
sets of $\bigO{\log n}$ disjoint subtrees. The subtrees implicitly
define \emph{canonical intervals} such that
\begin{enumerate}
\item There are a total of $\bigO{\frac{n_i \log n}{\eps^2}}$
  canonical intervals.
\item Any item appears in at most $\bigO{\log n}$ canonical intervals.
\item Any greedy sequence decomposes into $\bigO{\log n}$ canonical
  intervals.
\end{enumerate}

This is essentially the same setup as in \citep{cq-17-soda}, where
intervals are fractionally packed into capacitated points on the real
line. Similar techniques are also employed in \citep{cq-17-focs}. We
briefly discuss the high-level ideas and refer to previous work
\citep{cq-17-soda,cq-17-focs} for complete details. There is a small
technical adjustment required that is discussed at the end.

Decomposing the solution into a small number of known static sets is
important because weight updates can be simulated over a \emph{fixed}
set efficiently. More precisely, the data structure
\algo{lazy-inc}, defined in \cite{cq-17-soda} and inspired by
techniques by Young \cite{young-14}, simulates a weight update over a
fixed set of weights in such a way that the time can be amortized
against the logarithm of the increase in each of the weights.  The
total increase of any weight is bounded above by invariants revealed
in the analysis of the MWU framework.  It is easy to make
\algo{lazy-inc} dynamic, allowing insertion and deletion into the
underlying set, in $\bigO{\log n}$ time per insertion or deletion
\cite{cq-17-focs}.

We define an instance of \algo{lazy-inc} at each node in the balanced
binary tree over cheap items as defined above. Whenever a leaf is
marked as occupied, it is inserted into each of $\bigO{\log n}$
instances of \algo{lazy-inc} at the ancestors of the leaf; when a leaf
is marked as unoccupied, it is removed from each of these instances as
well. Each instance of \algo{lazy-inc} can then simulate a
weight update over the marked leaves at its nodes in $\bigO{\log n}$
amortized time.

Given a greedy sequence of cheap items, we divide the sequence into
the disjoint union of the marked leaves of $\bigO{\log n}$ subtrees as
discussed above. For each subtree, we simulate a weight update over
the leaves via the instance of \algo{lazy-inc} at the root of
the subtree.

One final technical modification is required to make the algorithm
sound. Each instance of \algo{lazy-inc} accrues a small amount
of error. Within a fixed choice of $i$ and $\alpha$, the sum of errors
for a single weight is small because an item is tracked by only
$\bigO{\log n}$ instances of \algo{lazy-inc}. Across all the
choices of $i$ and $\alpha$, however, a single weight may be managed
by $\bigO{m \log{C} \log{n} / \eps}$ instances of
\algo{lazy-inc}. A similar accrual of error across instances of
\algo{lazy-inc} also arises in \cite{cq-17-focs}.

A final feature of the \algo{lazy-inc} data structure, made explicit
in \cite{cq-17-focs}, is that one can ``flush'' the error of an
instance in $\bigO{1}$ time per tracked item. We use this feature
within the context of the overall lazy greedy algorithm. Recall that
by thresholding the optimum value and trying choices of $i$ and
$\alpha$ in round robin fashion, we move on from a fixed choice of $i$
and $\alpha$ iff the optimum value for these parameters have gone up
by a full $\bigO{1+\eps}$-multiplicative factor, which occurs for any
particular $i$ and $\alpha$ at most $\bigO{\log{n} / \eps^2}$
times. Thus, whenever we move on from a fixed choice of $i$ and
$\alpha$, we ``flush'' all the \algo{lazy-inc} data structures for
this choice of $i$ and $\alpha$ in $\bigO{n_i \log n}$ time, so that
no error is carried across different values of $i$ and $\alpha$.

\begin{lemma}
  \labellemma{kc-weight-update} One can extend the data structures of
  \reflemma{dynamic-greedy} such that, given a solution $S$ generated
  by \reflemma{dynamic-dp} and \reflemma{dynamic-greedy} to an
  instance of \refequation{knapsack-cover} for $i \in [m]$ and
  $\alpha > 0$, one can simulate a weight update over $(i,S)$ in
  $\bigO{\log n}$ amortized time per iteration. The extension induces
  an overhead of $\bigO{\log^2 n}$ per weight update,
  $\bigO{n_i \log n}$ per $\epsmore$-factor increase to $\lambda$, and
  $\bigO{n_i \log^2 n}$ per constant factor increase to $\lambda$.
\end{lemma}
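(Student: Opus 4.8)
The plan is to split the solution $S$ into its $\bigO{\reps}$ expensive items and its greedily chosen run of cheap items, as guaranteed by \reflemma{ptas-knapsack-cover}, and to update each part with a different device. First I would handle the expensive items directly: since there are only $\bigO{\reps}$ of them, I can visit each one, apply the multiplicative weight update prescribed by the framework, and reinsert it into the sorted structures of \reflemma{dynamic-dp} and \reflemma{dynamic-greedy} in $\bigO{\log n}$ time apiece, for $\bigO{\reps \log n}$ time total, which is absorbed into the claimed overhead. The real work is the cheap items, where $\sizeof{S}$ may be as large as $n$ and we cannot afford to touch each item.

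For the cheap items I would reuse the balanced binary tree over $\leaves$ built for \reflemma{dynamic-greedy}: its subtrees define canonical intervals such that there are $\bigO{n_i \log n / \eps^2}$ of them, each leaf lies in $\bigO{\log n}$ of them, and the occupied leaves of any greedy prefix decompose into $\bigO{\log n}$ of them; the last two facts are the standard segment-tree-style decomposition for balanced search trees and I would verify them explicitly. I would attach an instance of the \algo{lazy-inc} data structure of \cite{cq-17-soda,cq-17-focs} to each tree node; whenever a leaf is marked occupied (resp. unoccupied) it is inserted into (resp. deleted from) the $\bigO{\log n}$ instances at its ancestors, at $\bigO{\log n}$ cost each, which gives the $\bigO{\log^2 n}$ overhead per weight update. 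To simulate a weight update along the cheap part of $S$, I decompose it into $\bigO{\log n}$ subtrees and invoke \algo{lazy-inc} at each of their roots. The key point is that \algo{lazy-inc} charges its running time against the logarithm of the increase of each weight it manages, and the MWU analysis bounds $\lnof{\rip{w}{c}}$ (hence the total log-increase of any single weight over the run) by $n^{\bigO{1/\eps}}$; amortized over the whole algorithm this yields $\bigO{\log n}$ amortized time per iteration for the cheap part.

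The delicate point, and the step I expect to be the main obstacle, is controlling the relative error that \algo{lazy-inc} introduces into the weights it lazily maintains. Within a single choice of $i$ and $\alpha$ a weight is tracked by only $\bigO{\log n}$ instances (its ancestors in one tree), so the accumulated error stays within the $\apxmore$ tolerance; but across all $\bigO{m \log C / \eps}$ pairs $(i,\alpha)$ a single weight could be managed by $\bigO{m \log C \log n / \eps}$ instances, and the naive accrual is far too large. I would resolve this exactly as in \cite{cq-17-focs}: use the $\bigO{1}$-per-tracked-item ``flush'' operation of \algo{lazy-inc}, and flush every instance associated with a pair $(i,\alpha)$ whenever the round-robin thresholding scheme of \refsection{thresholding} advances past that pair. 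This happens only when $\lambda$ is increased for $(i,\alpha)$, i.e.\ at most $\bigO{\log n / \eps^2}$ times per pair, at cost $\bigO{n_i \log n}$ per flush; hence no error is carried between different $(i,\alpha)$, and the total flushing cost is $\bigO{\sum_i n_i \log^2 n / \eps^2}$, within the overall budget. I would also account for the rebuild events: an $\epsmore$-factor increase of $\lambda$ only re-sorts and rebuilds the structures of \reflemma{dynamic-dp} and \reflemma{dynamic-greedy}, costing $\bigO{n_i \log n}$; a constant-factor increase of $\lambda$ (equivalently an update to $\beta = 2^{\logup{\lambda}}$) rebuilds those structures and reinitializes the $\bigO{n_i \log n}$ instances of \algo{lazy-inc} at $\bigO{\log n}$ each, for $\bigO{n_i \log^2 n}$ total. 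Assembling these pieces, the per-iteration work is $\bigO{\reps}$ for the expensive items plus $\bigO{\log n}$ amortized for the cheap run, and the stated per-event overheads follow.
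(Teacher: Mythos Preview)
Your proposal is correct and follows essentially the same approach as the paper: the paper's discussion preceding the lemma likewise splits $S$ into $\bigO{\reps}$ expensive items handled directly and a greedy run of cheap items handled via the canonical-interval decomposition of the balanced tree with a \algo{lazy-inc} instance at each node, and resolves the cross-$(i,\alpha)$ error accrual by flushing on each advance in the round-robin, exactly as you outline. One small accounting note: the $\bigO{n_i \log n}$ cost per $\epsmore$-factor increase to $\lambda$ is the flush cost you already identified, not a rebuild of the structures of \reflemma{dynamic-dp} and \reflemma{dynamic-greedy} (those rebuild only when $\beta = 2^{\logup{\lambda}}$ changes, i.e.\ on constant-factor increases), so your final paragraph slightly double-attributes that term.
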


\subsection{Putting things together}

In this section, we summarize the main points of the algorithm and
account for the running time claimed in \reftheorem{kc-ltas}.

\begin{proof}[Proof of \reftheorem{kc-ltas}]
  By known analyses (e.g., \cite[Theorem 2.1]{cq-17-soda}), the MWU
  framework returns a $\apxless$-multiplicative approximation to the
  packing LP \refequation{dual} as long as we can approximate the
  relaxation \refequation{relaxation} in each of
  $\bigO{n \log n / \eps^2}$ iterations. Moreover, each weight
  $\weight{j}$ increases by at most a
  $m^{\bigO{\reps}}$-multiplicative factor, and we can afford to
  approximate $\weight{j}$ by a multiplicative $\apxpm$-multiplicative
  factor and to obtain $\apxpm$-multiplicative approximation to each
  relaxation. Thus, if we only propagate a change to $\weight{j}$ when
  it increases by a $\epsmore$-multiplicative factor, correctness
  still holds, and we can assume that each $\weight{j}$ changes at
  most $\bigO{\log{n} / \eps^2}$ times.

  In \refsection{reduction}, solving \refequation{relaxation} is
  reduced to $\bigO{m \log{C} / \eps}$ instances of minimum knapsack
  cover. The instances of minimum knapsack cover are visited in a
  round robin fashion, where an instance is given a value
  $\lambda > 0$ and only needs to output a set $S$ with ratio
  $\leq \apxmore \lambda$ if there exists a set with ratio
  $\leq \epsmore \lambda$.  For each instance of knapsack cover, we
  maintain an $\epsmore$-approximate solution that outputs an
  approximate solution (in the thresholded sense) in
  $\bigO{\frac{1}{\eps^4}}$ time. By \reflemma{ptas-knapsack-cover},
  \reflemma{dynamic-greedy}, and \reflemma{kc-weight-update}, the data
  structure takes $\bigO{\log n}$ time to update per weight update
  from the framework, $\bigO{n_i \log^2 n}$ time to update every time
  $\lambda$ increases by a power of 2, and $\bigO{n_i \log n}$ time
  every time $\lambda$ increases by a $\epsmore$-multiplicative
  factor. Moreover, if an instance returns a solution within an
  $\apxmore$-multiplicative factor of the threshold $S$, then it can
  simulate a weight update along the corresponding knapsack cover
  constraint $(S,i)$ in $\bigO{\log }$ amortized time (per iteration).

  The threshold $\lambda$ increases by a constant factor
  $\bigO{\log{n} / \eps}$ times and by a $\epsmore$-multiplicative
  factor $\bigO{\log{n} / \eps^2}$ times. The total time spent
  updating the data structure for a single instance of knapsack cover
  for the $i$th constraint is
  \begin{math}
    \bigO{n_i \log^{3} n / \eps + n_i \log^2 n / \eps^2}.
  \end{math}
  The total time over all instances is thus
  \begin{math}
    \bigO{N \log^{3}(n) \log{C} / \eps^2 + N \log^2(n) \log{C} /
      \eps^3}.
  \end{math}
  Querying an instance takes $\bigO{1/\eps^4}$ time, can be charged to
  either an iteration of the MWU framework or bumping the instance up
  to the next threshold. Thus
  \begin{math}
    \bigO{\frac{m \log C}{\eps^5} + \frac{n \log n}{\eps^6}}
  \end{math}
  time is spent querying instances of knapsack cover. The total
  running time is at most
  \begin{math}
    \bigO{\frac{N \log^3(n) \log{C}}{\eps^3} + \frac{m \log C}{\eps^5}
      + \frac{n \log n}{\eps^6}},
  \end{math}
  as desired.
\end{proof}

%%% Local Variables:
%%% mode: latex
%%% TeX-master: "cip"
%%% End:

\paragraph{Acknowledgments:} We thank Neal Young for bringing the paper of Chen, Harris and Srinivasan \cite{chs-16} to our attention.

\bibliographystyle{plainnat} %
\bibliography{cip} %

\begin{thebibliography}{43}
\providecommand{\natexlab}[1]{#1}
\providecommand{\url}[1]{\texttt{#1}}
\expandafter\ifx\csname urlstyle\endcsname\relax
  \providecommand{\doi}[1]{doi: #1}\else
  \providecommand{\doi}{doi: \begingroup \urlstyle{rm}\Url}\fi

\bibitem[{Allen Zhu} and Orecchia(2015)]{ao-15}
Zeyuan {Allen Zhu} and Lorenzo Orecchia.
\newblock Nearly-linear time positive {LP} solver with faster convergence rate.
\newblock In \emph{Proceedings of ACM STOC}, pages 229--236, 2015.

\bibitem[Bansal and Khot(2010)]{bk-10}
Nikhil Bansal and Subhash Khot.
\newblock Inapproximability of hypergraph vertex cover and applications to
  scheduling problems.
\newblock In \emph{Proc.\ of ICALP}, pages 250--261, 2010.

\bibitem[Bansal et~al.(2012)Bansal, Korula, Nagarajan, and Srinivasan]{bkns-12}
Nikhil Bansal, Nitish Korula, Viswanath Nagarajan, and Aravind Srinivasan.
\newblock Solving packing integer programs via randomized rounding with
  alterations.
\newblock \emph{Theory of Computing}, 8\penalty0 (1):\penalty0 533--565, 2012.

\bibitem[Bienstock and Iyengar(2006)]{bi-06}
Daniel Bienstock and Garud Iyengar.
\newblock Approximating fractional packings and coverings in $o(1/\epsilon)$
  iterations.
\newblock \emph{{SIAM} J. Comput.}, 35\penalty0 (4):\penalty0 825--854, 2006.
\newblock Preliminary version in {STOC} 2004.

\bibitem[Carr et~al.(2000)Carr, Fleischer, Leung, and Phillips]{cflp-00}
Robert~D. Carr, Lisa Fleischer, Vitus~J. Leung, and Cynthia~A. Phillips.
\newblock Strengthening integrality gaps for capacitated network design and
  covering problems.
\newblock In \emph{Proceedings of {ACM-SIAM} SODA}, pages 106--115, 2000.

\bibitem[Chan(2018)]{chan-18}
Timothy~M. Chan.
\newblock Approximation schemes for 0-1 knapsack.
\newblock In \emph{1st Symposium on Simplicity in Algorithms, {SOSA}}, pages
  5:1--5:12, 2018.

\bibitem[Chekuri and Quanrud(2017{\natexlab{a}})]{cq-17-focs}
Chandra Chekuri and Kent Quanrud.
\newblock Approximating the {Held}-{Karp} bound for metric {TSP} in
  nearly-linear time.
\newblock In \emph{Proceedings of IEEE FOCS}, pages 789--800,
  2017{\natexlab{a}}.

\bibitem[Chekuri and Quanrud(2017{\natexlab{b}})]{cq-17-soda}
Chandra Chekuri and Kent Quanrud.
\newblock Near-linear time approximation schemes for some implicit fractional
  packing problems.
\newblock In \emph{Proceedings of ACM-SIAM SODA}, pages 801--820,
  2017{\natexlab{b}}.

\bibitem[Chekuri and Quanrud(2018)]{cq-18}
Chandra Chekuri and Kent Quanrud.
\newblock Randomized {MWU} for positive {LP}s.
\newblock In \emph{Proceedings of ACM-SIAM SODA}, pages 358--377, 2018.

\bibitem[Chekuri et~al.(2014)Chekuri, Vondr{\'a}k, and Zenklusen]{cjv-14}
Chandra Chekuri, Jan Vondr{\'a}k, and Rico Zenklusen.
\newblock Submodular function maximization via the multilinear relaxation and
  contention resolution schemes.
\newblock \emph{SIAM Journal on Computing}, 43\penalty0 (6):\penalty0
  1831--1879, 2014.
\newblock Preliminary version in {STOC} 2011.

\bibitem[Chekuri et~al.(2015)Chekuri, Jayram, and Vondr{\'{a}}k]{cjv-15}
Chandra Chekuri, T.~S. Jayram, and Jan Vondr{\'{a}}k.
\newblock On multiplicative weight updates for concave and submodular function
  maximization.
\newblock In \emph{Proceedings of ITCS}, pages 201--210, 2015.

\bibitem[Chen et~al.(2016)Chen, Harris, and Srinivasan]{chs-16}
Antares Chen, David~G Harris, and Aravind Srinivasan.
\newblock Partial resampling to approximate covering integer programs.
\newblock In \emph{Proceedings of 27th ACM-SIAM SODA}, pages 1984--2003, 2016.

\bibitem[Chen et~al.(2018)Chen, Harris, and Srinivasan]{chs-18-arxiv}
Antares Chen, David~G. Harris, and Aravind Srinivasan.
\newblock Partial resampling to approximate covering integer programs.
\newblock \emph{CoRR}, abs/1507.07402v7, September 2018.
\newblock URL \url{http://arxiv.org/abs/1507.07402v7}.

\bibitem[Chudak and Eleut{\'e}rio(2005)]{ce-05}
Fabi{\'a}n~A. Chudak and V{\^a}nia Eleut{\'e}rio.
\newblock Improved approximation schemes for linear programming relaxations of
  combinatorial optimization problems.
\newblock In \emph{Proceedings of IPCO}, pages 81--96, 2005.

\bibitem[Csirik et~al.(1991)Csirik, Frenk, Labb{\'{e}}, and Zhang]{cflz-91}
J{\'{a}}nos Csirik, J.~B.~G. Frenk, Martine Labb{\'{e}}, and Shuzhong Zhang.
\newblock Heuristic for the 0-1 min-knapsack problem.
\newblock \emph{Acta Cybern.}, 10\penalty0 (1-2):\penalty0 15--20, 1991.

\bibitem[Dinur et~al.(2005)Dinur, Guruswami, Khot, and Regev]{dgkr-05}
Irit Dinur, Venkatesan Guruswami, Subhash Khot, and Oded Regev.
\newblock A new multilayered {PCP} and the hardness of hypergraph vertex cover.
\newblock \emph{SIAM Journal on Computing}, 34\penalty0 (5):\penalty0
  1129--1146, 2005.
\newblock Preliminary version in {STOC} 2003.

\bibitem[Dobson(1982)]{dobson}
Gregory Dobson.
\newblock Worst-case analysis of greedy heuristics for integer programming with
  nonnegative data.
\newblock \emph{Mathematics of Operations Research}, 7\penalty0 (4):\penalty0
  515--531, 1982.

\bibitem[Feige(1998)]{feige-98}
Uriel Feige.
\newblock A threshold of $\ln n$ for approximating set cover.
\newblock \emph{J. {ACM}}, 45\penalty0 (4):\penalty0 634--652, 1998.
\newblock Preliminary version in STOC 1996.

\bibitem[Fleischer(2000)]{fleischer-00}
Lisa Fleischer.
\newblock Approximating fractional multicommodity flow independent of the
  number of commodities.
\newblock \emph{{SIAM} J. Discrete Math.}, 13\penalty0 (4):\penalty0 505--520,
  2000.
\newblock Preliminary version in FOCS 2000.

\bibitem[Gupta and Nagarajan(2014)]{gn-14}
Anupam Gupta and Viswanath Nagarajan.
\newblock Approximating sparse covering integer programs online.
\newblock \emph{Math. Oper. Res.}, 39\penalty0 (4):\penalty0 998--1011, 2014.
\newblock Preliminary version in ICALP 2012.

\bibitem[Harris and Srinivasan(2013)]{hs-13}
David~G Harris and Aravind Srinivasan.
\newblock The {Moser-Tardos} framework with partial resampling.
\newblock In \emph{Proceedings of IEEE FOCS}, pages 469--478. IEEE, 2013.

\bibitem[Hochbaum(1982)]{h-82}
Dorit~S Hochbaum.
\newblock Approximation algorithms for the set covering and vertex cover
  problems.
\newblock \emph{SIAM Journal on computing}, 11\penalty0 (3):\penalty0 555--556,
  1982.

\bibitem[Ibarra and Kim(1975)]{ik-75}
Oscar~H. Ibarra and Chul~E. Kim.
\newblock Fast approximation algorithms for the knapsack and sum of subset
  problems.
\newblock \emph{J. {ACM}}, 22\penalty0 (4):\penalty0 463--468, 1975.

\bibitem[Kellerer and Pferschy(2004)]{kp-04}
Hans Kellerer and Ulrich Pferschy.
\newblock Improved dynamic programming in connection with an {FPTAS} for the
  knapsack problem.
\newblock \emph{J. Comb. Optim.}, 8\penalty0 (1):\penalty0 5--11, 2004.

\bibitem[Kolliopoulos and Young(2005)]{ky-05}
Stavros~G. Kolliopoulos and Neal~E. Young.
\newblock Approximation algorithms for covering/packing integer programs.
\newblock \emph{J. Comput. Syst. Sci.}, 71\penalty0 (4):\penalty0 495--505,
  2005.
\newblock Preliminary version in FOCS 2001.

\bibitem[Koufogiannakis and Young(2014)]{ky-efpc-14}
C.~Koufogiannakis and N.~E. Young.
\newblock A nearly linear-time {PTAS} for explicit fractional packing and
  covering linear programs.
\newblock \emph{Algorithmica}, 70\penalty0 (4):\penalty0 648--674, 2014.
\newblock Preliminary version in FOCS 2007.

\bibitem[Lawler(1979)]{lawler-79}
Eugene~L. Lawler.
\newblock Fast approximation algorithms for knapsack problems.
\newblock \emph{Math. Oper. Res.}, 4\penalty0 (4):\penalty0 339--356, 1979.

\bibitem[Mahoney et~al.(2016)Mahoney, Rao, Wang, and Zhang]{mrwz-16}
Michael~W Mahoney, Satish Rao, Di~Wang, and Peng Zhang.
\newblock Approximating the solution to mixed packing and covering {LP}s in
  parallel $\tilde{O}(\epsilon^{-3})$ time.
\newblock In \emph{Proc.\ of ICALP}, 2016.

\bibitem[Moser and Tardos(2010)]{mt-10}
Robin~A Moser and G{\'a}bor Tardos.
\newblock A constructive proof of the general {Lov{\'a}sz} local lemma.
\newblock \emph{Journal of the ACM (JACM)}, 57\penalty0 (2):\penalty0 11, 2010.

\bibitem[Moshkovitz(2015)]{m-15}
Dana Moshkovitz.
\newblock The projection games conjecture and the {NP}-hardness of $\ln
  n$-approximating set-cover.
\newblock \emph{Theory of Computing}, 11:\penalty0 221--235, 2015.
\newblock Preliminary version in APPROX--RANDOM 2-12.

\bibitem[Pritchard and Chakrabarty(2011)]{pc-11}
David Pritchard and Deeparnab Chakrabarty.
\newblock Approximability of sparse integer programs.
\newblock \emph{Algorithmica}, 61\penalty0 (1):\penalty0 75--93, 2011.

\bibitem[Raghavan and Thompson(1987)]{rt-87}
Prabhakar Raghavan and Clark~D. Thompson.
\newblock Randomized rounding: a technique for provably good algorithms and
  algorithmic proofs.
\newblock \emph{Combinatorica}, 7\penalty0 (4):\penalty0 365--374, 1987.

\bibitem[Rhee(2015)]{rhee-15}
Donguk Rhee.
\newblock Faster fully polynomial approximation schemes for knapsack problems.
\newblock Master's thesis, MIT, 2015.

\bibitem[Saket and Sviridenko(2012)]{ss-12}
Rishi Saket and Maxim Sviridenko.
\newblock New and improved bounds for the minimum set cover problem.
\newblock In \emph{Proceedings of APPROX}, pages 288--300, 2012.

\bibitem[Srinivasan(1999)]{s-99}
Aravind Srinivasan.
\newblock Improved approximation guarantees for packing and covering integer
  programs.
\newblock \emph{{SIAM} J. Comput.}, 29\penalty0 (2):\penalty0 648--670, 1999.

\bibitem[Srinivasan(2001)]{s-01}
Aravind Srinivasan.
\newblock New approaches to covering and packing problems.
\newblock In \emph{Proceedings of ACM-SIAM SODA}, pages 567--576, 2001.

\bibitem[Srinivasan(2006)]{s-06}
Aravind Srinivasan.
\newblock An extension of the {Lov{\'{a}}sz} local lemma, and its applications
  to integer programming.
\newblock \emph{{SIAM} J. Comput.}, 36\penalty0 (3):\penalty0 609--634, 2006.

\bibitem[Trevisan(2001)]{t-01}
Luca Trevisan.
\newblock Non-approximability results for optimization problems on bounded
  degree instances.
\newblock In \emph{Proceedings on 33rd {ACM} STOC}, pages 453--461, 2001.

\bibitem[Vazirani(2013)]{v-13}
Vijay~V Vazirani.
\newblock \emph{Approximation algorithms}.
\newblock Springer Science \& Business Media, 2013.

\bibitem[Wang et~al.(2016)Wang, Rao, and Mahoney]{wrm-16}
Di~Wang, Satish Rao, and Michael~W. Mahoney.
\newblock Unified acceleration method for packing and covering problems via
  diameter reduction.
\newblock In \emph{Proc.\ of ICALP}, pages 50:1--50:13, 2016.

\bibitem[Williamson and Shmoys(2011)]{ws-11}
David~P. Williamson and David~B. Shmoys.
\newblock \emph{The design of approximation algorithms}.
\newblock Cambridge university press, 2011.

\bibitem[Wolsey(1982)]{wolsey}
Laurence~A Wolsey.
\newblock An analysis of the greedy algorithm for the submodular set covering
  problem.
\newblock \emph{Combinatorica}, 2\penalty0 (4):\penalty0 385--393, 1982.

\bibitem[Young(2014)]{young-14}
Neal~E. Young.
\newblock Nearly linear-time approximation schemes for mixed packing/covering
  and facility-location linear programs.
\newblock \emph{CoRR}, abs/1407.3015, 2014.
\newblock URL \url{http://arxiv.org/abs/1407.3015}.

\end{thebibliography}

\appendix                       %

\section{The regime of small $\columnsum$}

\labelappendix{small-columnsum}

In this section, we consider the regime where $\columnsum$ is
asymptotically small. For $\columnsum$ sufficiently small, we show
that \algo{round-and-fix} obtains an approximation ratio of
$1 + \bigO{\sqrt{\columnsum \ln{1/\columnsum}}}$. Note that this bound
is slightly weaker than the $1 + \bigO{\sqrt{\columnsum}}$ bound
obtained by \citet{chs-16}.

Let $\delta > 0$ be a small parameter to be determined later. (We will
eventually set $\delta = c \sqrt{\columnsum \ln{1/\columnsum}}$ for
some constant $c > 0$.) We set the scaling factor $\alpha$ to
\begin{align*}
  \alpha = \frac{1 + \delta + \delta^2 / {2}}{1 - \delta}.
\end{align*}
Note that
\begin{math}
  \alpha \leq \parof{1 + \parof{1 + \delta} \delta} \parof{1 +
    \delta + \frac{\delta^2}{2}} %
  \leq                           %
  1 + 2\delta + \bigO{\delta^2}
\end{math}
for sufficiently small $\delta > 0$.

The analysis again depends on identifying a ``threshold'' coefficient
$\score{i}$ for each row $i$, but is no longer the weighted median
coordinate (weighted by $x$).  One basic reason that the definition of
$\score{i}$ must be modified is as follows. Recall that considering
only the coverage induced by $\score{i}$-small coefficients allows us
to amplify the Chernoff inequality exponentially by a factor of
$\frac{1}{\score{i}}$. (This in turn offsets the $\frac{1}{\score{i}}$
multiplicative factor in the fixing cost.) To apply the Chernoff
bound, we required that the expected coverage of $\score{i}$-small
(when scaling by up $\alpha$ and rounding) is greater than $1$. In our
case, $\alpha = 1 + \bigO{\delta}$ is very close to $1$. To apply the
Chernoff inequality to $\score{i}$-small coordinates, we require that
the fractional coverage of $\score{i}$-small coordinates is at least
$1 / \alpha$; that is, very close to $1$, rather than just
$1/2$. Thus, for $i$ in $[m]$, we let $\score{i}$ be the
\emph{weighted rank-$\delta$ largest coefficient} as follows.
\begin{lemma}
  For each $i \in [m]$, there exists a value
  $\score{i} \in [0,\columnsum]$ such that
  \begin{align*}
    \sum_{A_{ij} \leq \score{i}} A_{ij} x_j \geq 1 - \delta %
    \text{ and }                                 %
    \sum_{A_{ij} \geq \score{i}} A_{ij} x_j \geq \delta.
  \end{align*}
\end{lemma}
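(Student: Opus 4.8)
The plan is to reprise the proof of \reflemma{median} essentially verbatim, with the thresholds $1/2$ and $1/2$ replaced by $1-\delta$ and $\delta$. Fix $i \in [m]$, recall the normalization $b = \ones$ (so $(A x)_i \ge 1$), and assume as throughout this regime that $\delta \in (0,1)$. Define
\begin{math}
  \score{i} \defeq \inf{\score \ge 0 \where \sum_{A_{ij} > \score} A_{ij} x_j < \delta}.
\end{math}
Shrinking a superlevel set only decreases the sum, so the defining set is an up-set and $\score{i}$ is precisely the threshold at which the mass of $x$ carried by coefficients exceeding $\score$ crosses $\delta$.

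First I would pin down the range $\score{i} \in [0, \columnsum]$. Nonnegativity is immediate since the infimum is over $\score \ge 0$. For the upper bound, note each entry satisfies $A_{ij} \le \sum_k A_{kj} \le \columnsum$, since $A_{ij}$ is a single nonnegative summand of the $j$th column sum; hence at $\score = \columnsum$ the set $\setof{j \where A_{ij} > \score}$ is empty, its sum is $0 < \delta$, so $\columnsum$ lies in the defining set and $\score{i} \le \columnsum$.

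For the two inequalities I would argue exactly as in \reflemma{median}. Write $f(\score) = \sum_{A_{ij} > \score} A_{ij} x_j$; this is a non-increasing, right-continuous step function of $\score$ (only finitely many distinct values occur among the $A_{ij}$), with left limit $f(\score^-) = \sum_{A_{ij} \ge \score} A_{ij} x_j$. For $\score < \score{i}$ we have $f(\score) \ge \delta$ by definition of the infimum, and letting $\score \uparrow \score{i}$ yields $\sum_{A_{ij} \ge \score{i}} A_{ij} x_j \ge \delta$. For $\score > \score{i}$ we have $f(\score) < \delta$, and letting $\score \downarrow \score{i}$ with right-continuity yields $\sum_{A_{ij} > \score{i}} A_{ij} x_j \le \delta$; therefore $\sum_{A_{ij} \le \score{i}} A_{ij} x_j = (A x)_i - \sum_{A_{ij} > \score{i}} A_{ij} x_j \ge 1 - \delta$, as required.

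There is no genuine obstacle here; the only point that needs a word of care is the passage to the one-sided limits at $\score{i}$, which is harmless because $f$ has finitely many breakpoints, so both one-sided limits are attained at values of $\score$ arbitrarily close to $\score{i}$. Equivalently, one may simply take $\score{i}$ to be the largest value among the $A_{ij}$ whose associated tail mass is still at least $\delta$, avoiding limits altogether.
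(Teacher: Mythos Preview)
Your proposal is correct and follows the same approach as the paper: the paper does not give a separate proof of this lemma, implicitly relying on the earlier \reflemma{median}, and your argument is exactly that proof with $1/2$ replaced by $\delta$, together with the (new here) check that $\score{i}\le\columnsum$ via $A_{ij}\le\sum_k A_{kj}\le\columnsum$. The limit discussion is a bit more than the paper bothers with, but it is sound and handles the edge cases cleanly.
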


Choosing $\score{i}$ to be the rank-$\delta$ coefficient boosts the
randomized rounding, since the expected coverage from rounding
$\score{i}$-small coordiantes is at least $\parof{1 - \delta}\alpha$.
On the other hand, rounding the $\score{i}$-big coordinates (which
have only $\delta$ fractional coverage) leads to an integral vector
with coverage $\delta$.  To obtain an integral vector with coverage
$1$, one can take $1/\delta$ copies of the integral vector with
coverage $\delta$, increasing the cost by a
$\frac{1}{\delta}$-multiplicative factor.
\begin{lemma}
  \labellemma{small-fixing-cost} For each $i \in [m]$, one can compute
  in near-linear time an integral vector $z \in \nnintegers^n$ with
  cost
  \begin{align*}
    \rip{\cost}{z} \leq
    \bigO{\frac{1}{\delta\score{i}} \sum_{j=1}^n \cost{j}
    A_{ij} x_j}
  \end{align*}
  and coverage
  \begin{math}
    (A z)_i \geq 1.
  \end{math}
\end{lemma}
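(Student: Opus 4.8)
The plan is to follow the template of the proof of \reflemma{l1-fixing-cost} almost verbatim, substituting the rank-$\delta$ threshold $\score{i}$ furnished by the preceding lemma for the weighted median used in \reflemma{heavy-vector}, and absorbing one extra multiplicative factor of $\bigO{1/\delta}$ to convert an integral vector of coverage $\delta$ into one of coverage $1$. Since the whole point of the rank-$\delta$ choice is to boost the later Chernoff bound, the price paid here is exactly this $1/\delta$ blow-up and nothing more.

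Concretely, I would let $B = \setof{j \where A_{ij} \geq \score{i}}$ be the ``big'' coordinates of row $i$ and consider the fractional vector $y \in \nnreals^n$ with $y_j = x_j$ for $j \in B$ and $y_j = 0$ otherwise. By the preceding lemma $(A y)_i = \sum_{j \in B} A_{ij} x_j \geq \delta$, and since $A_{ij} \geq \score{i} > 0$ for every $j \in B$ (the case $\score{i}=0$ cannot occur for $\delta < 1$) we get $x_j \leq A_{ij} x_j / \score{i}$ and hence $\rip{\cost}{y} = \sum_{j \in B} \cost{j} x_j \leq \frac{1}{\score{i}} \sum_{j \in B} \cost{j} A_{ij} x_j \leq \frac{1}{\score{i}} \sum_{j=1}^n \cost{j} A_{ij} x_j$. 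Then $y/\delta$ is a feasible fractional solution of the single-constraint knapsack-cover problem for row $i$, with $(A(y/\delta))_i \geq 1$ and $\rip{\cost}{y/\delta} \leq \frac{1}{\delta \score{i}} \sum_{j} \cost{j} A_{ij} x_j$. Applying the knapsack-cover rounding of \reflemma{knapsack-cover} to $y/\delta$ (treating row $i$ as a knapsack covering constraint, capping coefficients at the demand as usual) yields, in time near-linear in the number of nonzeroes of row $i$, an integral $z$ supported on $B$ with $(A z)_i \geq 1$ and $\rip{\cost}{z} \leq 2 \rip{\cost}{y/\delta} = \bigO{\frac{1}{\delta \score{i}} \sum_{j} \cost{j} A_{ij} x_j}$, which is exactly the claimed bound. (Equivalently, as the surrounding discussion indicates, one may first round $y$ restricted to $B$ to an integral vector of coverage $\geq\delta$ via \reflemma{knapsack-cover} and then take $\ceil{1/\delta}$ copies; the accounting is identical.)

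I do not expect a genuine obstacle here: every estimate is elementary, and the only thing that warrants a moment's care is confirming that switching from the $1/2$-weighted-median threshold of \reflemma{l1-fixing-cost} to the rank-$\delta$ threshold introduces \emph{only} the $\bigO{1/\delta}$ factor and does not otherwise worsen the cost bound. This holds because the bound relies solely on the inequality $A_{ij} \geq \score{i}$ on the big coordinates, which is preserved, so the step relating $x_j$ to $A_{ij} x_j/\score{i}$ is unchanged. The surviving $1/\delta$ factor will later be charged against the eventual choice $\delta = c\sqrt{\columnsum \ln(1/\columnsum)}$ when this lemma is combined with the separately analyzed concentration bound for $\alpha$ close to $1$.
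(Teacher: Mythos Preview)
Your proposal is correct and is essentially identical to the paper's proof: the paper defines $y_j = \frac{1}{\delta} x_j$ on the big coordinates $\{j: A_{ij}\ge \score{i}\}$ (your $y/\delta$), checks coverage $\ge 1$ from the rank-$\delta$ property and bounds the cost via $x_j \le A_{ij}x_j/\score{i}$, then appeals to the integrality gap of $2$ for the single-constraint knapsack cover problem to round. Your observation that $\score{i}>0$ when $\delta<1$ is correct and the only point requiring care.
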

\begin{proof}
  Consider the problem
  \begin{align*}
    \min \rip{\cost}{z} \text{ over } z \in \nnintegers^n \text{ s.t.\
    } (A z)_i \geq 1.
  \end{align*}
  Observe that each coefficient $A_{ij}$ is $\leq \columnsum$, so this
  is just an instance of \kncover for which there is a PTAS and
  a simple constant factor greedy algorithm, and the integrality gap
  is $2$. Thus it suffices to find a fractional vector
  $y \in \nnreals^n$ with (say)
  \begin{math}
    \rip{\cost}{y} \leq \frac{1}{\delta \score{i}}
    \sum_{j=1}^n \cost{j} A_{ij} x_j
  \end{math}
  and coverage
  \begin{math}
    (A y)_i \geq 1.
  \end{math}

  Define $y \in \nnreals^n$ by
  \begin{align*}
    y_j =
    \begin{cases}
      \frac{1}{\delta} x_j & \text{if } A_{ij} \geq
      \score{i},\\
      0 &\text{otherwise.}
    \end{cases}
  \end{align*}
  First, $y$ has coverage
  \begin{align*}
    \sum_{j} A_{ij} y_j         %
    =                           %
    \frac{1}{\delta} \sum_{A_{ij} \geq \score{i}} A_{ij} x_j
    \tago{\geq} 1,
  \end{align*}
  by \tagr choice of $\score{i}$. Second, $y$ has cost
  \begin{align*}
    \rip{\cost}{y}              %
    =                           %
    \frac{1}{\delta} \sum_{A_{ij} \geq \score{i}} \cost{j}
    x_j %
    \leq                        %
    \frac{1}{\delta \score{i}} \sum_{A_{ij} \geq \score{i}}
    \cost{j} A_{ij} x_j         %
    \leq                        %
    \frac{1}{\delta \score{i}} \sum_j \cost{j} A_{ij} x_j,
  \end{align*}
  as desired.
\end{proof}
\reflemma{small-fixing-cost} establishes an upper bound on the fixing
cost. It remains to obtain concentration bounds for the probability of
the $i$th constraint failing to be covered. Note that we need a
probability of failure proportional to $\score{i}$ for each $i$ to
cancel the $\frac{1}{\score{i}}$ factor in
\reflemma{small-fixing-cost}. In fact, if we can show that
$\probof{(A z)_i < 1} \leq \score{i}$ for all $i$, then the total
expected fixing cost over all $i$ would sum to
$\prac{\columnsum}{\delta} \rip{\cost}{x}$, as desired.  We first
obtain the following.
\begin{lemma}
  \labellemma{small-columnsum-concentration} For each $i \in [m]$,
  $\probof{(A z)_i < 1} \leq e^{-\frac{\delta^2}{2 \score{i}}}$.
\end{lemma}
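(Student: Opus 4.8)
The plan is to follow the template of \reflemma{median-concentration}, replacing the weighted median $\score{i}$ by the rank-$\delta$ threshold of the preceding lemma and using the slightly larger scaling factor $\alpha$. First I would discard all but the $\score{i}$-small coordinates. By the preceding lemma, $\sum_{j : A_{ij} \le \score{i}} A_{ij} x_j \ge 1 - \delta$; and since the randomized rounding in steps \refsubsteps{rounding-loop} rounds $\alpha x$ coordinatewise while preserving expectations, so that $\evof{z_j} = \alpha x_j$ for every $j$, the expected coverage of constraint $i$ contributed by the $\score{i}$-small coordinates alone is
\begin{math} \evof{\sum_{j : A_{ij} \le \score{i}} A_{ij} z_j} = \alpha \sum_{j : A_{ij} \le \score{i}} A_{ij} x_j \ge (1 - \delta)\alpha = 1 + \delta + \frac{\delta^2}{2}, \end{math}
where the last equality is exactly the definition of $\alpha$. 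Every coefficient appearing in this sum is at most $\score{i}$, so I would invoke the Chernoff bound \reflemma{unscaled-chernoff} with $\gamma = \score{i}$ and $\mu = 1 + \delta + \delta^2/2 > 1$, which yields
\begin{math} \probof{(A z)_i < 1} \le \probof{\sum_{j : A_{ij} \le \score{i}} A_{ij} z_j < 1} \le \exp{\frac{1 + \ln \mu - \mu}{\score{i}}} = \exp{-\frac{(\delta + \delta^2/2) - \ln(1 + \delta + \delta^2/2)}{\score{i}}}. \end{math}

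All that then remains is the one-line numerical fact $(\delta + \delta^2/2) - \ln(1 + \delta + \delta^2/2) \ge \delta^2/2$, which is equivalent to $\delta \ge \ln(1 + \delta + \delta^2/2)$, i.e.\ to $e^{\delta} \ge 1 + \delta + \delta^2/2$; this holds for all $\delta \ge 0$ because the omitted terms in the Taylor series of $e^\delta$ are nonnegative. Substituting into the exponent gives $\probof{(A z)_i < 1} \le e^{-\delta^2/(2\score{i})}$, as claimed. (The preceding lemma forces $\score{i} > 0$, since $\sum_{A_{ij} \le 0} A_{ij} x_j = 0 < 1-\delta$, so the division is harmless.)

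I do not expect a genuine obstacle here: the work has already been absorbed into the choice $\alpha = \frac{1 + \delta + \delta^2/2}{1-\delta}$, which is tuned precisely so that $(1-\delta)\alpha$ exceeds $1$ by just enough that $(1-\delta)\alpha - 1 - \ln((1-\delta)\alpha) \ge \delta^2/2$. The single point deserving attention is the same as in \reflemma{median-concentration}: one must apply \reflemma{unscaled-chernoff} with a mean parameter $\mu$ that is a valid lower bound on the true expected coverage and with $\mu \ge 1$, so that $1 + \ln\mu - \mu$ is nonpositive and the resulting bound is monotone in $\mu$ in the direction we need. Everything else duplicates the computation in \refsection{l1-alteration} with $1-\delta$ playing the role of $1/2$.
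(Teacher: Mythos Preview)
Your proof is correct and follows essentially the same approach as the paper: restrict to the $\score{i}$-small coordinates, note their expected coverage after scaling is at least $(1-\delta)\alpha = 1+\delta+\delta^2/2$, apply the Chernoff bound \reflemma{unscaled-chernoff} with $\gamma=\score{i}$, and simplify the exponent via $e^{\delta}\ge 1+\delta+\delta^2/2$. Your explicit remark about monotonicity of $1+\ln\mu-\mu$ in $\mu\ge 1$ (justifying the use of a lower bound on the true mean) is a point the paper glosses over, so your treatment is if anything slightly more careful.
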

\begin{proof}
  We apply the Chernoff inequality to the $\score{i}$-small
  coordinates.  From the choice of $\alpha$ and $\score{i}$ we have
  \begin{math}
    \evof{\sum_{j:A_{ij} \le \score{i}} A_{ij}z_j} \geq %
    \frac{\alpha}{1-\delta} %
    \geq %
    1+\delta+\frac{\delta^2}{2}.  %
  \end{math}
  Therefore,
  \begin{align*}
    \probof{(A z)_i < 1} %
    &\leq                            %
      \probof{\sum_{A_{ij} \leq \score{i}} A_{ij} z_j < 1 - \eps} %
    \\
    &\leq                                                            %
      \exp{                     %
      \frac{1}{\score{i}}
      \parof{
      1 + \ln{1 + \delta + \frac{\delta^2}{2}} - 1 - \delta -
      \frac{\delta^2}{2}        %
      }       %
      }\\
    &\tago{\leq}                          %
      \exp{\frac{1}{\score{i}}\parof{1 + \ln{e^{\delta}} - 1 - \delta -
      \frac{\delta^2}{2}}}\\
    &=
      \exp{-\frac{\delta^2}{2\score{i}}}.
  \end{align*}
  by \tagr Taylor expansion of $e^{\delta}$ for $\delta > 0$.
\end{proof}
It remains to choose $\delta$ such that the upper bound in
\reflemma{small-columnsum-concentration} is at most
$\bigO{\score{i}}$. For sufficiently small $\columnsum$, it suffices
to take $\delta \geq 2 \sqrt{\columnsum \ln{1/\columnsum}}$, as follows.
\begin{lemma}
  \labellemma{small-columnsum-concentration-2}
  Let $\delta \geq 2\sqrt{\columnsum \ln{1/\columnsum}}$. For
  sufficiently small $\columnsum$, we have
  \begin{align*}
    \probof{(A z)_i < 1} \leq \score{i}.
  \end{align*}
\end{lemma}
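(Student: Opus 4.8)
The plan is to chain the concentration estimate of \reflemma{small-columnsum-concentration} with a short monotonicity argument about the function $t \mapsto t\ln(1/t)$. By \reflemma{small-columnsum-concentration} we have $\probof{(A z)_i < 1} \leq e^{-\delta^2 / (2 \score{i})}$, so it suffices to prove $e^{-\delta^2/(2\score{i})} \leq \score{i}$. If $\score{i} = 0$ the left-hand side is $0$ and the claim is trivial, so assume $\score{i} > 0$; then, taking logarithms of both sides (both of which lie in $(0,1)$ since $\score{i} \le \columnsum < 1$), the target inequality is equivalent to $\delta^2 \geq 2\,\score{i}\,\ln(1/\score{i})$.

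Next I would invoke the bound $\score{i} \in (0, \columnsum]$ supplied by the lemma that defines the weighted rank-$\delta$ coefficient, together with the elementary fact that $t \mapsto t\ln(1/t)$ is increasing on the interval $(0, 1/e)$ (its derivative is $-\ln t - 1 > 0$ there). For $\columnsum$ small enough that $\columnsum < 1/e$ this yields $\score{i}\ln(1/\score{i}) \leq \columnsum\ln(1/\columnsum)$, hence $2\,\score{i}\ln(1/\score{i}) \leq 2\,\columnsum\ln(1/\columnsum)$. Combining this with the hypothesis $\delta \geq 2\sqrt{\columnsum\ln(1/\columnsum)}$, which gives $\delta^2 \geq 4\,\columnsum\ln(1/\columnsum) \geq 2\,\columnsum\ln(1/\columnsum)$, we obtain $\delta^2 \geq 2\,\score{i}\ln(1/\score{i})$, which is exactly what was needed. (One should also note in passing that $\delta = 2\sqrt{\columnsum\ln(1/\columnsum)} < 1$ for $\columnsum$ sufficiently small, so that the earlier lemmas, which implicitly treat $\delta$ as small, apply.)

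There is no genuine obstacle here; the argument is purely a monotonicity and algebra exercise. The only points that require a little care are (i) the degenerate case $\score{i} = 0$, which must be dispatched separately, and (ii) the fact that the inequality $\score{i} \le \columnsum$ is what places us on the increasing branch of $t\ln(1/t)$ once $\columnsum$ is shrunk — so this bound, inherited from the lemma defining $\score{i}$, is essential rather than cosmetic. The factor of $2$ built into the hypothesis on $\delta$ (yielding $\delta^2 \ge 4\,\columnsum\ln(1/\columnsum)$ rather than merely $2\,\columnsum\ln(1/\columnsum)$) is exactly the slack that makes the "sufficiently small $\columnsum$" qualifier comfortable and would absorb the lower-order corrections arising from a more careful accounting.
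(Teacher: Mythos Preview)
Your argument is correct and in fact cleaner than the paper's. After invoking \reflemma{small-columnsum-concentration} and rewriting the target as $\delta^2 \ge 2\,\score{i}\ln(1/\score{i})$, you use the monotonicity of $t \mapsto t\ln(1/t)$ on $(0,1/e)$ together with $\score{i} \le \columnsum$ to bound $2\,\score{i}\ln(1/\score{i}) \le 2\,\columnsum\ln(1/\columnsum) \le \delta^2$ in one stroke. The paper instead stays in exponential form and splits into two cases, $\score{i} \le \columnsum^2$ and $\score{i} \ge \columnsum^2$: in the first it uses a crude bound of the type $e^{-x} \le O(1/x^2)$, and in the second it uses $\score{i} \le \columnsum$ to get $\exp(-2\columnsum\ln(1/\columnsum)/\score{i}) \le \columnsum^2 \le \score{i}$. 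Your monotonicity observation subsumes both cases and avoids the case split entirely; the paper's route, on the other hand, makes it more visible that there is slack to spare (the first case actually delivers an extra factor of $\score{i}/(\columnsum^2\ln^2(1/\columnsum))$), which is irrelevant for the lemma as stated but might matter if one tried to sharpen the constants.
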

\begin{proof}
  By \reflemma{small-columnsum-concentration}, it suffices to show
  that
  \begin{math}
    \exp{- \frac{\columnsum \ln{1/\columnsum}}{\score{i}}} \leq
    \score{i}.
  \end{math}
  If $\score{i} \leq \columnsum^2$, then we have
  \begin{align*}
    \exp{- \frac{\columnsum \ln{1/\columnsum}}{\score{i}}}
    \leq                        %
    \frac{\score{i}^2}{2 \columnsum^2 \ln^2 \parof{1/\columnsum}}
    \leq                        %
    \score{i},
  \end{align*}
  as desired. If $\score{i} \geq \columnsum^2$, then since
  $\score{i} \leq \columnsum$, we have
  \begin{align*}
    \exp{- \frac{2 \columnsum \ln{1/\columnsum}}{\score{i}}}
    \leq                        %
    \exp{- 2 \ln{1/\columnsum}} %
    =                           %
    \columnsum^2 \leq \score{i},
  \end{align*}
  as desired.
\end{proof}

We now add up the various costs and obtain the desired approximation
factor.
\begin{theorem}
  \algo{round-and-fix} returns a randomized vector $z$ with
  $A z \geq \ones$ and
  \begin{align*}
    \evof{\rip{\cost}{z}} %
    \leq %
    \parof{1 + \parof{4 + o(1)}\sqrt{\columnsum \ln{1 / \columnsum}}} \rip{\cost}{x}.
  \end{align*}
\end{theorem}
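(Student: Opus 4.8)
The plan is to mirror the proof of \reftheorem{l1-alteration}, accounting separately for the cost of the randomized rounding step and of the alteration step of \algo{round-and-fix}, but now with the much smaller scaling factor $\alpha = \frac{1 + \delta + \delta^2/2}{1-\delta}$ and the choice $\delta = 2\sqrt{\columnsum \ln{1/\columnsum}}$. First I would note that, exactly as before, scaling $x$ up by $\alpha$ and rounding each coordinate independently so that its expectation is $\alpha x_j$ makes the expected cost of the rounding step exactly $\alpha \rip{\cost}{x}$, which by the bound $\alpha \le 1 + 2\delta + \bigO{\delta^2}$ recorded above equals $\parof{1 + 2\delta + \bigO{\delta^2}}\rip{\cost}{x}$.

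Next I would bound the expected alteration cost. For each constraint $i$ that is left uncovered, repair $z$ using the integral vector of \reflemma{small-fixing-cost}; since the alteration only takes coordinatewise maxima and that vector is fixed (it depends only on $x$, $\delta$, and $\score{i}$, not on the random outcome), the total added cost is at most $\sum_{i:\,(Az)_i<1}\bigO{\frac{1}{\delta\score{i}}\sum_j \cost{j}A_{ij}x_j}$. Taking expectations and applying \reflemma{small-columnsum-concentration-2} --- which is calibrated precisely to $\delta = 2\sqrt{\columnsum\ln{1/\columnsum}}$ and gives $\probof{(Az)_i<1}\le\score{i}$ --- the two factors of $\score{i}$ cancel, so the expected cost charged to constraint $i$ is $\bigO{\frac{1}{\delta}\sum_j\cost{j}A_{ij}x_j}$. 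Summing over $i$ and exchanging the order of summation, $\sum_i\sum_j\cost{j}A_{ij}x_j=\sum_j\cost{j}x_j\sum_iA_{ij}\le\columnsum\rip{\cost}{x}$, so the total expected alteration cost is $\bigO{\columnsum/\delta}\rip{\cost}{x}$.

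Combining the two parts gives $\evof{\rip{\cost}{z}}\le\parof{1+2\delta+\bigO{\delta^2}+\bigO{\columnsum/\delta}}\rip{\cost}{x}$, and the last step is to substitute $\delta = 2\sqrt{\columnsum\ln{1/\columnsum}}$: then $2\delta = 4\sqrt{\columnsum\ln{1/\columnsum}}$, while $\bigO{\delta^2}=\bigO{\columnsum\ln{1/\columnsum}}$ and $\bigO{\columnsum/\delta}=\bigO{\sqrt{\columnsum/\ln{1/\columnsum}}}$ are each $o\parof{\sqrt{\columnsum\ln{1/\columnsum}}}$ as $\columnsum\to0$, which yields the claimed $\parof{1+\parof{4+o(1)}\sqrt{\columnsum\ln{1/\columnsum}}}$ factor. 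I expect no genuine difficulty here: the only points requiring care are the asymptotic bookkeeping (checking that each error term is truly of lower order than $\sqrt{\columnsum\ln{1/\columnsum}}$ for $\columnsum$ small), together with the easy sanity checks that $\alpha\ge1$ so that \algo{round-and-fix} is well defined and that the hypotheses of \reflemma{small-columnsum-concentration-2} hold for this $\delta$. Everything else is a direct recombination of the preceding lemmas.
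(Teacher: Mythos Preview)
Your proposal is correct and follows essentially the same argument as the paper: bound the rounding cost by $\alpha\rip{\cost}{x}=(1+2\delta+\bigO{\delta^2})\rip{\cost}{x}$, bound the expected fixing cost via \reflemma{small-fixing-cost} and \reflemma{small-columnsum-concentration-2} so that the $\score{i}$ factors cancel and the sum collapses to $\bigO{\columnsum/\delta}\rip{\cost}{x}$, then substitute $\delta=2\sqrt{\columnsum\ln(1/\columnsum)}$ and check the lower-order terms.
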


\begin{proof}
  Let $\delta = 2 \sqrt{\columnsum \ln{1/\columnsum}}$.
  The total expected cost is sum of the expected cost of randomized rounding,
  \begin{align*}
    \alpha \rip{\cost}{x} = \parof{1 + 2 \delta + \bigO{\delta^2}}
    \rip{\cost}{x}
    =                          %
    \parof{1 + \parof{4 + o(1)} \sqrt{\columnsum
    \ln{1/\columnsum}}} \rip{\cost}{x},
  \end{align*}
  and the total expected fixing cost. The total expected fixing cost
  is at most
  \begin{align*}
    \bigO{\sum_i \probof{(A z)_i < 1} \frac{1}{\delta \score{i}}
    \sum_{j=1}^n c_j A_{ij} x_j} %
    &=                            %
      \bigO{\frac{1}{\delta}\sum_{j=1}^n c_j x_j \sum_{i=1}^m
      \frac{A_{ij} \probof{(A z)_i < 1}}{\score{i}} }
    \\
    &\tago{\leq}                        %
      \bigO{\frac{1}{\delta} \sum_{j=1}^n c_j x_j \sum_{i=1}^m A_{ij}}
      % \\
      % &
          \leq                        %
          \bigO{\frac{\sqrt{\columnsum} \rip{\cost}{x}}{\ln{1/\columnsum}}}
  \end{align*}
  by \tagr \reflemma{small-columnsum-concentration-2}. Thus the
  expected costs of randomized rounding and subsequently fixing add up
  to
  \begin{math}
    \parof{1 + \parof{4 + o(1)} \sqrt{\columnsum
        \ln{1/\columnsum}}}\rip{\cost}{x},
  \end{math}
  as desired.
\end{proof}

\begin{remark}
  Our preceding analysis does not appear to be tight. The expected
  fixing cost is
  \begin{math}
    \bigO{\frac{\sqrt{\columnsum} \rip{\cost}{x}}{\ln{1/\columnsum}}}
\end{math}
for the choice of
\begin{math}
  \alpha = 1 + \bigOmega{\sqrt{\columnsum \log (1/\columnsum)}}.
\end{math}
The two terms in the cost are not balanced.
We note that our analysis did not take into account a more careful
fixing process that considers the expected value of the residual
requirement after the initial random step. Our efforts in doing such a
careful analysis have not so far succeeded in obtaining the bound
$(1 + \bigO{\sqrt{\columnsum}})$ which we believe is the right bound
for the algorithm. In fact it is quite conceivable that the resampling
framework from \cite{chs-16} implies such a bound for the alteration
algorithm.
\end{remark}

%%% Local Variables:
%%% mode: latex
%%% TeX-master: "cip"
%%% End:

\section{Chernoff inequalities}

The standard lower tail bound in multiplicative Chernoff inequalities
is the following.

\begin{lemma}
  \labellemma{standard-chernoff}
  Let $X_1,\dots,X_n \in [0,1]$ be independent with
  $\mu = \evof{\sum_{i=1}^n X_i}$, and $\delta \in (0,1)$. Then
  \begin{math}
    \probof{\sum_{i=1}^n X_i < (1-\delta)\mu}
    \leq \parof{\frac{e^{-\delta}}{(1-\delta)^{1-\delta}}}^\mu.
  \end{math}
  If $X_1,\dots,X_n \in [0,\gamma]$ for some $\gamma \le 1$ then
  the bound, by scaling, improves to
    \begin{math}
    \probof{\sum_{i=1}^n X_i < (1-\delta)\mu}
    \leq \parof{\frac{e^{-\delta}}{(1-\delta)^{1-\delta}}}^{\mu/\gamma}.
  \end{math}
\end{lemma}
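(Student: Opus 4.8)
The plan is to prove this by the standard exponential-moment (Chernoff) argument for the lower tail. Write $S = \sum_{i=1}^n X_i$. First I would fix an arbitrary $t > 0$ and use that $s \mapsto e^{-ts}$ is decreasing, so by Markov's inequality $\probof{S < (1-\delta)\mu} = \probof{e^{-tS} > e^{-t(1-\delta)\mu}} \le e^{t(1-\delta)\mu}\,\evof{e^{-tS}}$. By independence of the $X_i$ we have $\evof{e^{-tS}} = \prod_{i=1}^n \evof{e^{-tX_i}}$, so the whole estimate reduces to bounding the individual factors $\evof{e^{-tX_i}}$.

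The key step is the per-variable estimate. Since $s \mapsto e^{-ts}$ is convex, on the interval $[0,1]$ it lies below the chord joining its values at $s = 0$ and $s = 1$; that is, $e^{-ts} \le 1 - s\parof{1 - e^{-t}}$ for all $s \in [0,1]$. Taking expectations and using $1 + y \le e^y$ gives $\evof{e^{-tX_i}} \le 1 - \mu_i\parof{1 - e^{-t}} \le e^{-\mu_i(1 - e^{-t})}$, where $\mu_i = \evof{X_i}$. Multiplying over $i$ and using $\sum_i \mu_i = \mu$ yields $\evof{e^{-tS}} \le e^{-\mu(1 - e^{-t})}$, hence $\probof{S < (1-\delta)\mu} \le e^{\mu\parof{t(1-\delta) - (1 - e^{-t})}}$ for every $t > 0$. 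It then remains to optimize the exponent: since $\delta \in (0,1)$, the function $g(t) = t(1-\delta) - \parof{1 - e^{-t}}$ is convex with $g'(t) = (1-\delta) - e^{-t}$, so its minimum over $t > 0$ is at $t = \ln\frac{1}{1-\delta} > 0$, where $g(t) = -\delta - (1-\delta)\ln(1-\delta)$. Substituting gives precisely $\probof{S < (1-\delta)\mu} \le e^{-\mu\parof{\delta + (1-\delta)\ln(1-\delta)}} = \parof{\frac{e^{-\delta}}{(1-\delta)^{1-\delta}}}^{\mu}$, the claimed bound.

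For the scaled version I would simply renormalize. If every $X_i \in [0,\gamma]$ with $\gamma \le 1$, set $Y_i = X_i/\gamma \in [0,1]$; then $\sum_i Y_i = S/\gamma$ has mean $\mu/\gamma$, and the event $\{S < (1-\delta)\mu\}$ coincides with $\{\sum_i Y_i < (1-\delta)(\mu/\gamma)\}$. Applying the inequality just proved to the $Y_i$ (same $\delta$, mean $\mu/\gamma$) gives the bound with $\mu$ replaced by $\mu/\gamma$. There is no genuine obstacle here — this is a textbook computation — and the only point at which the hypothesis $X_i \in [0,1]$ (rather than mere boundedness) is actually used is the chord inequality $e^{-ts} \le 1 - s(1 - e^{-t})$; keeping track of exactly where that enters is what makes the clean $1/\gamma$ improvement in the exponent fall out.
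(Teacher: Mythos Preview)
Your proof is correct and is exactly the standard exponential-moment argument. The paper in fact states this lemma without proof (as ``the standard lower tail bound''), but your steps mirror precisely the approach the paper uses for the variants it does prove (\reflemma{moment} and \reflemma{normalized-chernoff}): Markov on $e^{-tS}$, the convexity/chord bound $e^{-ts} \le 1 - s(1-e^{-t})$ for $s\in[0,1]$, $1+y\le e^y$, and optimization over $t$; the scaled version by substituting $Y_i = X_i/\gamma$ is also exactly what the paper does in \reflemma{unscaled-chernoff} and \reflemma{unnormalized-chernoff}.
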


Typically a simplified version of the upper bound, namely,
$\exp{-\delta^2 \mu/2}$ is used. However, in the interest of leading
constants, we need to employ \reflemma{standard-chernoff} carefully.
For this reason, and also for use as pessimistic estimators for
derandomization in \refsection{derandomization}, we isolate explicit
(intermediate) forms that we need. Several of the following
inequalities are standard and we include proofs for the sake of
completeness.

\begin{lemma}
  \labellemma{moment} %
  Let $X \in \bracketsof{0,1}$ be a random variable and
  $t \in \reals$. Then
  \begin{math}
    \evof{\exp{tZ}} \leq \exp{\evof{Z} (e^t - 1)}.
  \end{math}
\end{lemma}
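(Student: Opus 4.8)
The plan is to obtain the bound pointwise from the convexity of the exponential and then integrate. First I would note that for any fixed $t \in \reals$ the map $x \mapsto e^{tx}$ is convex on all of $\reals$, so on the interval $[0,1]$ its graph lies on or below the chord through the endpoints $(0,1)$ and $(1,e^t)$. Concretely, writing $x = (1-x)\cdot 0 + x \cdot 1$ for $x \in [0,1]$ and applying convexity,
\begin{align*}
  e^{tx} \;\leq\; (1-x)\,e^{t \cdot 0} + x\, e^{t \cdot 1} \;=\; 1 + \parof{e^t - 1} x .
\end{align*}
Since $X$ takes values in $[0,1]$ almost surely, this inequality holds with $x$ replaced by $X$ pointwise, and taking expectations (using linearity) gives $\evof{e^{tX}} \leq 1 + \parof{e^t-1}\evof{X}$.

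Then I would conclude with the elementary inequality $1 + y \leq e^{y}$, valid for every real $y$, applied with $y = \parof{e^t-1}\evof{X}$, which yields
\begin{align*}
  \evof{e^{tX}} \;\leq\; 1 + \parof{e^t - 1}\evof{X} \;\leq\; \exp{\parof{e^t-1}\evof{X}},
\end{align*}
as claimed.

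There is essentially no obstacle here; the proof is two lines. The only points worth flagging are that the chord bound genuinely uses the hypothesis $X \in [0,1]$ (the inequality $e^{tx} \le 1 + (e^t-1)x$ can fail for $x \notin [0,1]$), and that no sign restriction on $t$ is needed, since $e^{tx}$ is convex in $x$ for every real $t$ (this is what lets the lemma serve later as the generating-function input to both tails of the Chernoff bounds in \reflemma{standard-chernoff}).
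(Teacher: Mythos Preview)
Your proof is correct and follows essentially the same route as the paper: bound $e^{tX}$ pointwise by the chord $1 + (e^t-1)X$ via convexity on $[0,1]$, take expectations, and then apply $1+y \le e^y$. Your added remarks about the necessity of $X\in[0,1]$ and the lack of sign restriction on $t$ are accurate and go slightly beyond what the paper states.
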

\begin{proof}
  Consider the function $f(x) = \exp{t x}$. $f$ is convex. By \tagr
  convexity of $f$, for any fixed value of $Z$, we have
  \begin{align*}
    \exp{t Z}                   %
    =                           %
    f((1 - Z) \cdot 0 + Z \cdot 1) %
    \tago{\leq}                           %
    (1-Z) f(0) + Z f(1)            %
    =                              %
    1 - Z + Z e^t                  %
    =                              %
    1 + Z (e^t - 1).
  \end{align*}
  By \tagr taking expectations of both sides and \tagr applying
  $1 + x \leq e^x$, we have
  \begin{align*}
    \evof{t Z}                  %
    \tago{\leq}                        %
    1 + \evof{Z} \parof{e^t - 1} %
    \tago{\leq}                  %
    \exp{\evof{Z} \parof{e^t - 1}}, %
  \end{align*}
  as desired.
\end{proof}

\begin{lemma}
  \labellemma{normalized-chernoff} Let
  $X_1,\dots,X_n \in \bracketsof{0,1}^n$ be independent random
  variables, let $\mu = \evof{\sum_{i=1}^n X_i}$, and
  $\beta \in \preals$. If $\beta < \mu$, then
  \begin{align*}
    \probof{\sum_{i=1}^n X_i < \beta} %
    \leq                              %
    \prac{\mu}{\beta}^{\beta} \prod_{i=1}^n \evof{\prac{\beta}{\mu}^{X_i}}
    \leq %
    \exp{\beta  + \beta \ln \frac{\mu}{\beta} - \mu}.
  \end{align*}
\end{lemma}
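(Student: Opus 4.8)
The plan is to run the standard exponential-moment (Chernoff) argument for the lower tail, but to stop at the product form $\prod_i \evof{(\beta/\mu)^{X_i}}$ before simplifying, since this intermediate expression is exactly the shape needed later as a pessimistic estimator in \refsection{derandomization}.

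First I would set $s = \beta/\mu$, which lies in $(0,1)$ precisely because $0 < \beta < \mu$ — this is the one place the hypothesis $\beta < \mu$ is used. Since $x \mapsto s^x$ is strictly decreasing, the event $\{\sum_i X_i < \beta\}$ coincides with $\{s^{\sum_i X_i} > s^{\beta}\}$, so Markov's inequality (applied to the nonnegative random variable $s^{\sum_i X_i}$) gives $\probof{\sum_i X_i < \beta} \leq s^{-\beta}\,\evof{s^{\sum_i X_i}}$. By independence of the $X_i$ we have $\evof{s^{\sum_i X_i}} = \prod_i \evof{s^{X_i}}$, and substituting $s = \beta/\mu$ yields the first claimed inequality $\probof{\sum_i X_i < \beta} \leq \prac{\mu}{\beta}^{\beta}\prod_i \evof{\prac{\beta}{\mu}^{X_i}}$.

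Next I would bound each factor using \reflemma{moment}. Writing $(\beta/\mu)^{X_i} = e^{t X_i}$ with $t = \ln(\beta/\mu) < 0$ (and noting that \reflemma{moment} is stated for all real $t$, being proved via convexity of $x \mapsto e^{tx}$), we get $\evof{e^{t X_i}} \leq \exp\!\big(\evof{X_i}(e^{t} - 1)\big) = \exp\!\big(\evof{X_i}(\beta/\mu - 1)\big)$. Taking the product over $i$ and using $\sum_i \evof{X_i} = \mu$ gives $\prod_i \evof{(\beta/\mu)^{X_i}} \leq \exp(\beta - \mu)$. Combining with the first inequality, $\probof{\sum_i X_i < \beta} \leq \prac{\mu}{\beta}^{\beta} e^{\beta - \mu} = \exp\!\big(\beta + \beta\ln(\mu/\beta) - \mu\big)$, which is the stated bound.

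There is no genuine obstacle here; the argument is entirely routine. The only two points requiring a moment's care are (i) the reversal of the inequality direction when passing to $s^{(\cdot)}$ with base $s \in (0,1)$ — which is exactly why $\beta < \mu$ is assumed, so that $s < 1$ and $\mu/\beta > 1$ — and (ii) applying \reflemma{moment} with a negative exponent, which is legitimate. Everything else is bookkeeping, and I would keep the proof short, essentially just the three displayed steps above.
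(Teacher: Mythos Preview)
Your proof is correct and essentially the same as the paper's: both run Markov's inequality on an exponential moment, use independence to factor, and then apply \reflemma{moment} to each factor. The only cosmetic difference is that you fix the base $s=\beta/\mu$ (equivalently $t=\ln(\mu/\beta)$) at the outset, whereas the paper introduces a free parameter $t>0$, derives the bound $\exp(\mu(e^{-t}-1)+t\beta)$, and then optimizes over $t$ to arrive at the same choice; with that $t$ plugged in, the paper's intermediate expression $\exp(t\beta)\prod_i\evof{\exp(-tX_i)}$ is exactly your middle term.
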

\begin{proof}
  This lemma can be proven by appropriate substitutions into the
  standard Chernoff inequality, but we instead prove it directly.  Let
  $t > 0$ be determined later. By \tagr Markov's inequality, \tagr
  independence of the $Y_i$'s, \tagr \reflemma{moment} w/r/t
  $Z = X_i$, we have
  \begin{align*}
    \probof{\sum_{i=1}^n X_i < \beta} %
    &=                                     %
      \probof{\exp{-t\sum_{i=1}^n X_i} > \exp{-t \beta}} %
      % \\
      % &
          \tago{\leq}                      %
          \evof{\exp{-t \sum_{i=1}^n X_i}} \exp{t \beta} %
    \\
    &\tago{=}                   %
      \parof{\prod_{i=1}^n \evof{\exp{-t X_i}}} \exp{t \beta}
      % \\
      % &
          \tago{\leq}
          \parof{\prod_{i=1}^n \exp{\evof{X_i} e^{-t} - 1}} \exp{t \beta}
    \\
    &=                          %
      \exp{\mu \parof{e^{-t} - 1} + t \beta}.
  \end{align*}
  Consider the exponent as a function of $t$,
  \begin{math}
    f(t) = \mu \parof{\exp{-t} - 1} + t \beta.
  \end{math}
  We have
  \begin{align*}
    f'(t) = 0                        %
    \iff                             %
    \mu \exp{-t} =  \beta           %
    \iff %
    t = \ln\frac{\mu}{\beta}.
  \end{align*}
  Consider $t = \ln{\frac{\mu}{\beta}}$.  Note that
  \begin{math}
    \mu > \beta \implies t = \ln \frac{\mu}{\beta} > 0,
  \end{math}
  as required when $t$ was declared above.  For this choice of $t$, we
  have
  \begin{math}
    \probof{\sum_{i=1}^n X_i < \beta} %
    \leq \cdots \leq %
    \exp{\beta - \mu + \beta \ln \frac{\mu}{\beta}}.
  \end{math}
\end{proof}

\begin{lemma}
  \labellemma{unscaled-chernoff} Let $X_1,\dots,X_n \in [0,\gamma]$
  be independent random variables for $\gamma > 0$, let
  $\mu = \evof{\sum_{i=1}^n X_i}$. Then
  \begin{align*}
    \probof{\sum_{i=1}^n X_i < 1} %
    \leq %
    \mu^{1/\gamma} \prod_{i=1}^n \evof{\mu^{-X_i / \gamma}}
    \leq
    \exp{\frac{1}{\gamma} \parof{1 + \ln{\mu} - \mu}}.
  \end{align*}
\end{lemma}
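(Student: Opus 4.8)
The plan is to reduce this to \reflemma{normalized-chernoff} by a rescaling, since $X_1,\dots,X_n$ lie in $[0,\gamma]$ rather than $[0,1]$. First I would set $Y_i = X_i/\gamma$, so that $Y_1,\dots,Y_n \in [0,1]$ are independent with $\nu \defeq \evof{\sum_i Y_i} = \mu/\gamma$, and observe that $\sum_i X_i < 1$ holds exactly when $\sum_i Y_i < 1/\gamma$. Then I would apply \reflemma{normalized-chernoff} to $Y_1,\dots,Y_n$ with the threshold parameter ``$\beta$'' set to $1/\gamma$; this is legitimate precisely when $1/\gamma < \nu$, i.e.\ when $\mu > 1$, which is the regime in which the lemma is invoked (in \reflemma{median-concentration} we have $\mu = \alpha/2$, etc.). \reflemma{normalized-chernoff} then gives
\begin{align*}
  \probof{\sum_i X_i < 1}
  = \probof{\sum_i Y_i < \tfrac{1}{\gamma}}
  \leq \parof{\gamma\nu}^{1/\gamma} \prod_{i=1}^n \evof{\parof{\tfrac{1}{\gamma\nu}}^{Y_i}}
  \leq \exp{\tfrac{1}{\gamma} + \tfrac{1}{\gamma}\ln\parof{\gamma\nu} - \nu}.
\end{align*}
Substituting $\gamma\nu = \mu$ and $\nu = \mu/\gamma$ back in turns the middle expression into $\mu^{1/\gamma}\prod_i \evof{\mu^{-X_i/\gamma}}$ and the right-hand expression into $\exp{\frac{1}{\gamma}\parof{1 + \ln\mu - \mu}}$, which is exactly the claim.

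Alternatively — and this is essentially the same computation written out — I could mirror the proof of \reflemma{normalized-chernoff} directly. Apply Markov's inequality to $\exp{-t\sum_i X_i}$ with $t = \frac{\ln\mu}{\gamma} > 0$, which contributes the factor $\exp{t} = \mu^{1/\gamma}$ and, by independence, reduces the estimate to $\prod_i \evof{\exp{-t X_i}}$. Bound each factor by writing $\evof{\exp{-t X_i}} = \evof{\exp{-(t\gamma)(X_i/\gamma)}}$ and invoking \reflemma{moment} with $Z = X_i/\gamma \in [0,1]$ (and parameter $-t\gamma = -\ln\mu$), obtaining $\evof{\exp{-t X_i}} \leq \exp{\frac{\evof{X_i}}{\gamma}\parof{e^{-\ln\mu} - 1}} = \exp{\frac{\evof{X_i}}{\gamma}\parof{\tfrac{1}{\mu} - 1}}$. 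Multiplying over $i$ gives $\prod_i \evof{\exp{-t X_i}} \leq \exp{\frac{\mu}{\gamma}\parof{\tfrac{1}{\mu} - 1}} = \exp{\frac{1}{\gamma}\parof{1 - \mu}}$, and combining with the Markov factor $\mu^{1/\gamma}$ yields both inequalities.

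There is no real obstacle here: the only point requiring any care is the choice $t = \ln\mu/\gamma$ (equivalently $\beta = 1/\gamma$ in \reflemma{normalized-chernoff}), which is the value minimizing the exponent $f(t) = \mu\parof{e^{-t} - 1} + t/\gamma$ and is positive exactly when $\mu > 1$. Everything else is routine algebra together with a direct appeal to \reflemma{moment}, so the lemma follows immediately. I would present the substitution version, since it keeps the proof to a single line of bookkeeping on top of \reflemma{normalized-chernoff}.
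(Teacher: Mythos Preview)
Your proposal is correct and takes essentially the same approach as the paper: rescale $X_i \mapsto X_i/\gamma$ and apply \reflemma{normalized-chernoff} with $\beta = 1/\gamma$, then substitute back. The paper's proof is exactly your ``substitution version,'' written in one line; your added remark about the implicit hypothesis $\mu > 1$ (needed for $\beta < \nu$) is a nice clarification that the paper leaves tacit.
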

\begin{proof}
  By \tagr scaling up and each $X_i$ by a factor of $1/\gamma$, \tagr
  applying \reflemma{normalized-chernoff} (with $\mu$ scaled up by a
  factor of $1/\gamma$ and $\beta = 1/\gamma$), and \tagr canceling
  terms, we have
  \begin{align*}
    \probof{\sum_{i=1}^n X_i < 1} %
    \tago{=}                      %
    \probof{\sum_{i=1}^n \frac{X_i}{\gamma} < \frac{1}{\gamma}} %
    \tago{\leq}                                                      %
    \exp{\frac{1}{\gamma} + \frac{1}{\gamma} \ln\frac{\mu / \gamma}{1
    / \gamma} - \frac{\mu}{\gamma}} %
    \tago{=}                               %
    \exp{\frac{1}{\gamma}\parof{1 + \ln \mu - \mu}},
  \end{align*}
  as desired.
\end{proof}

\begin{lemma}
  \labellemma{unnormalized-chernoff} Let $X_1,\dots,X_n \in [0,\gamma]$
  be independent random variables for $\gamma > 0$, let
  $\mu = \evof{\sum_{i=1}^n X_i}$, and $\beta \in \preals$. If
  $\beta < \mu$, then
  \begin{align*}
    \probof{\sum_{i=1}^n X_i < \beta} %
    \leq %
    \prac{\mu}{\beta}^{\beta / \gamma} %
    \prod_{i=1}^n \evof{\prac{\beta}{\mu}^{X_i / \gamma}} %
    \leq %
    \exp{\frac{\beta - \mu + \beta \ln{\mu / \beta}}{\gamma}}.
  \end{align*}
\end{lemma}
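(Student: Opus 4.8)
The plan is to reduce to the normalized Chernoff bound \reflemma{normalized-chernoff} by the same rescaling trick already used to pass from \reflemma{normalized-chernoff} to \reflemma{unscaled-chernoff}, only this time keeping a general threshold $\beta$ rather than specializing to $\beta = 1$. First I would set $Y_i = X_i / \gamma$ for each $i$. Since $X_i \in [0,\gamma]$, we have $Y_1,\dots,Y_n \in [0,1]$, and they are independent with $\evof{\sum_{i=1}^n Y_i} = \mu/\gamma =: \mu'$. Writing $\beta' := \beta/\gamma$, the hypothesis $\beta < \mu$ gives $\beta' < \mu'$, so \reflemma{normalized-chernoff} applies to the $Y_i$ with threshold $\beta'$.

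Next I would observe that the event $\sum_{i=1}^n X_i < \beta$ is identical to the event $\sum_{i=1}^n Y_i < \beta'$, so applying \reflemma{normalized-chernoff} to the $Y_i$ yields
\[
  \probof{\sum_{i=1}^n X_i < \beta}
  = \probof{\sum_{i=1}^n Y_i < \beta'}
  \leq \prac{\mu'}{\beta'}^{\beta'} \prod_{i=1}^n \evof{\prac{\beta'}{\mu'}^{Y_i}}
  \leq \exp{\beta' + \beta' \ln\frac{\mu'}{\beta'} - \mu'}.
\]
It then remains only to substitute back $\mu' = \mu/\gamma$, $\beta' = \beta/\gamma$, $Y_i = X_i/\gamma$. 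The ratio $\beta'/\mu' = \beta/\mu$ is unchanged; the leading factor becomes $\prac{\mu}{\beta}^{\beta/\gamma}$; each factor in the product becomes $\evof{\prac{\beta}{\mu}^{X_i/\gamma}}$, which is exactly the claimed middle expression; and the final exponent, after factoring out $1/\gamma$, becomes $\parof{\beta - \mu + \beta \ln(\mu/\beta)}/\gamma$.

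No step here is a genuine obstacle; this lemma is a routine common generalization of \reflemma{normalized-chernoff} and \reflemma{unscaled-chernoff}. The only points requiring a little care are (i) verifying that the rescaled variables lie in $[0,1]$ (immediate from $X_i \in [0,\gamma]$, for any $\gamma > 0$) and that the strict inequality $\beta' < \mu'$ holds so that \reflemma{normalized-chernoff} is legitimately applicable, and (ii) bookkeeping when translating the intermediate product form back to the original variables, so that the exponents $\beta/\gamma$ and the exponents $X_i/\gamma$ are attached to the correct bases.
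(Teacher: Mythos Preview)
Your proposal is correct and follows essentially the same approach as the paper: rescale each $X_i$ by $1/\gamma$ to land in $[0,1]$, then invoke \reflemma{normalized-chernoff} with $\mu$ and $\beta$ scaled accordingly, and simplify. The paper's writeup is terser (it jumps straight to the final exponential bound without explicitly tracking the intermediate product form), but the argument is identical.
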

\begin{proof}
  For each $i$, we have $X_i / \gamma \leq 1$. By \tagr
  \reflemma{normalized-chernoff}, with $\beta$, $\mu$, and the $X_i$'s
  scaled by a factor of $\frac{1}{\gamma}$, we have
  \begin{align*}
    \probof{\sum_{i=1}^n X_i < \beta} %
    = %
    \probof{ %
    \frac{1}{\gamma} \sum_{i=1}^n X_i %
    < %
    \frac{\beta}{\gamma} %
    } %
    \tago{\leq} %
    \exp{\frac{\beta}{\gamma} + \frac{\beta}{\gamma} \ln
    \frac{\mu}{\beta} - \frac{\mu}{\gamma}} %
    =                                       %
    \exp{\frac{1}{\gamma}\parof{\beta + \beta \ln\frac{\mu}{\beta} - \mu}},
  \end{align*}
  as desired.
\end{proof}

\begin{lemma}
  Let $X_1,\dots,X_n \in [0,\gamma]$ be independent with
  $\mu = \evof{\sum_{i=1}^n X_i}$, and $\delta \in (0,1)$. If
  $\mu > \gamma$ and
  \begin{math}
    \frac{\mu}{\gamma} \geq \ln \redelta + \ln \ln \redelta %
    + %
    1 + \frac{1 + \ln \ln{1/\delta}}{\ln{1/\delta} - 1},
  \end{math}
  then
  \begin{math}
    \probof{\sum_{i=1}^n X_i < 1} \leq \delta.
  \end{math}
\end{lemma}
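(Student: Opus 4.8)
The plan is to read the tail bound off \reflemma{unscaled-chernoff} and then check, by an elementary calculus argument, that the hypothesis on $\mu/\gamma$ is exactly calibrated to drive that tail down to $\delta$. By \reflemma{unscaled-chernoff}, $\probof{\sum_{i=1}^n X_i < 1} \le e^{(1 + \ln\mu - \mu)/\gamma}$, so it suffices to prove $\mu - \ln\mu - 1 \ge \gamma \ln(1/\delta)$, equivalently (dividing by $\gamma$) $\frac{\mu}{\gamma} - \frac{1+\ln\mu}{\gamma} \ge \ln(1/\delta)$.

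Write $L := \ln(1/\delta)$, $\Lambda := \ln L$, and let $R := L + \Lambda + 1 + \frac{1+\Lambda}{L-1}$ denote the stated right-hand side, so that the hypothesis reads $\mu/\gamma \ge R$. Since $t \mapsto t - \ln t - 1$ is nonnegative and strictly increasing on $(1,\infty)$, the inequality $\mu - \ln\mu - 1 \ge \gamma L$ only becomes harder as $\mu$ decreases, so it is enough to verify it at the smallest $\mu$ allowed by the hypothesis; after this reduction, using $\mu \ge \gamma R$ together with the standard estimate $1 + \ln\gamma \le \gamma$ to clear the $\gamma$'s, everything collapses to the single-variable claim $R - \ln R - 1 \ge L$.

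That last inequality is the crux, and I expect it to be the main obstacle --- not because it is deep, but because the lower-order terms are tuned so tightly that essentially no slack is available. I would write $R = L\bigl(1 + \frac{1}{L}(\Lambda + 1 + \frac{1+\Lambda}{L-1})\bigr)$ and apply $\ln(1+x) \le x$ to get $\ln R \le \Lambda + \frac{1}{L}\bigl(\Lambda + 1 + \frac{1+\Lambda}{L-1}\bigr)$; substituting this into $R - \ln R - 1 \ge L$ and clearing the denominator $L(L-1)$, the inequality collapses to the identity $1 + \Lambda = 1 + \Lambda$ --- which is precisely why the correction term $\frac{1+\Lambda}{L-1}$ appears in the hypothesis. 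The care required is in making sure the monotonicity reduction and the $\ln(1+x)\le x$ estimate together consume no more than the available slack, and in treating separately the range of $\delta$ near $e^{-1}$, where $L-1$ is small, the correction term dominates, and a slightly coarser bound on $\ln R$ is appropriate.
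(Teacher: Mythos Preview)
Your approach mirrors the paper's: both invoke \reflemma{unscaled-chernoff}, reduce to the single-variable inequality $R - \ln R - 1 \ge L$ where $R = \mu/\gamma$ and $L = \ln(1/\delta)$, and verify it by exploiting that the correction term $(1+\Lambda)/(L-1)$ is chosen precisely so that $R = (1+\eps)L$ with $\eps = (1+\Lambda)/(L-1)$, after which $\ln(1+\eps) \le \eps$ finishes. Your factoring $R = L\bigl(1 + \cdots\bigr)$ followed by $\ln(1+x) \le x$ is exactly that calculation, and your observation that the residual ``collapses to the identity $1+\Lambda = 1+\Lambda$'' is the same cancellation the paper obtains.

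There is, however, a gap in your reduction step. From \reflemma{unscaled-chernoff} the sufficient condition is $(\mu - \ln\mu - 1)/\gamma \ge L$, and your claim that $1 + \ln\gamma \le \gamma$ together with the substitution $\mu = \gamma R$ collapses this to $R - \ln R - 1 \ge L$ does not hold in general: substituting and using $-\ln\gamma \ge 1 - \gamma$ gives $\gamma R - \ln\gamma - \ln R - 1 \ge \gamma(R-1) - \ln R$, so one is left needing $\gamma(R - 1 - L) \ge \ln R$, which for $\gamma < 1$ is strictly stronger than $R - 1 - L \ge \ln R$. The paper does not attempt this intermediate estimate; it writes the sufficient condition directly as $\mu/\gamma - \ln(\mu/\gamma) - 1 \ge L$ --- in effect applying \reflemma{normalized-chernoff} with $\beta = 1$ to the rescaled variables $X_i/\gamma \in [0,1]$ --- so no separate ``clearing of $\gamma$'' is performed. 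Note, though, that this bounds $\probof{\sum_i X_i < \gamma}$ rather than $\probof{\sum_i X_i < 1}$, so the paper's own reduction is likewise only literally valid when $\gamma \ge 1$.
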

\begin{proof}
  We set
  \begin{math}
    \frac{\mu}{\gamma} = \ln \redelta + \ln \ln \redelta + 1 + \eps
  \end{math}
  for a variable $\eps \in \reals$, and find the minimum the choice of
  $\eps$ that obtains the desired tail inequality.  By
  \reflemma{unscaled-chernoff}, it suffices to choose $\eps$ such that
  \begin{math}
    \frac{\mu}{\gamma} - \ln \frac{\mu}{\gamma} - 1 \geq
    \ln{1/\delta}.
  \end{math}
  We have
  \begin{align*}
    \frac{\mu}{\gamma} - \ln \frac{\mu}{\gamma} - 1 %
    &=                                                %
      \ln \redelta + \ln \ln \redelta + 1 + \eps - \ln{\ln \redelta +
      \ln \ln \redelta + 1 + \eps} - 1
    \\
    &=
      \ln \redelta + \ln \ln \redelta + \eps - \ln{\ln \redelta + \ln
      \ln \redelta + 1 + \eps}.
  \end{align*}
  Set $\eps$ such that
  \begin{math}
    1 + \eps + \ln \ln{1/\delta} = \eps \ln{1/\delta};
  \end{math}
  namely, set
  \begin{math}
    \eps = \frac{1 + \ln \ln{1/\delta}}{\ln{1 / \delta} - 1}.
  \end{math}
  (Note that the requirement that $\delta \in (0,1)$ ensures that
  $\ln{1/\delta} > 1$.)  Then, continuing the above, we have
  \begin{align*}
    \frac{\mu}{\gamma} - \ln \frac{\mu}{\gamma} - 1
    &= \cdots =                 %
      \ln \redelta + \ln \ln \redelta + \eps - \ln{(1+\eps)
      \ln{\redelta}}
    \\
    &=                          %
      \ln{\redelta} + \eps - \ln{1 + \eps} %
      \tago{\geq}                                 %
      \ln{\redelta},
  \end{align*}
  where \tagr uses the inequality $1 + \eps \leq \exp{\eps}$.
\end{proof}

\begin{lemma}
  Let $X_1,\dots,X_n \in [0,1]$ be independent with
  $\mu = \evof{\sum_{i=1}^n X_i}$, $\delta \in (0,1)$, and
  $\beta \in [0,\mu)$. If
  \begin{math}
    \frac{\mu}{\gamma} \geq \ln \redelta + \ln \ln \redelta %
    + %
    1 + \frac{1 + \ln \ln{1/\delta}}{\ln{1/\delta} - 1},
  \end{math}
  then
  \begin{math}
    \probof{\sum_{i=1}^n X_i < \beta} \leq \delta.
  \end{math}
\end{lemma}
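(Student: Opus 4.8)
The plan is to reduce the statement to the immediately preceding lemma (the one bounding $\probof{\sum_i X_i < 1}$ for variables bounded by $\gamma$) by a one-step rescaling. First I would set $Y_i = X_i/\beta$ for each $i$. Then the $Y_i$ are independent, lie in $[0,\gamma/\beta]$ --- here $\gamma$ denotes the uniform bound on the $X_i$ appearing in the hypothesis --- and satisfy $\evof{\sum_i Y_i} = \mu/\beta$; moreover, and this is the whole point, the event $\{\sum_i X_i < \beta\}$ is literally the event $\{\sum_i Y_i < 1\}$. So it suffices to bound $\probof{\sum_i Y_i < 1}$.

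Next I would apply the preceding lemma to $Y_1,\dots,Y_n$ with range bound $\gamma/\beta$ in place of $\gamma$. The key remark is that the ratio driving that lemma's conclusion, mean over range, is scale-invariant: $(\mu/\beta)/(\gamma/\beta) = \mu/\gamma$. Consequently the hypothesis $\mu/\gamma \geq \ln\redelta + \ln\ln\redelta + 1 + \frac{1 + \ln\ln(1/\delta)}{\ln(1/\delta) - 1}$ is exactly the main hypothesis of the preceding lemma for the $Y_i$, unchanged, and its remaining side condition --- mean strictly above range --- reads $\mu > \gamma$, which I would read off from the ratio hypothesis itself (in the regime that matters, $\ln(1/\delta) > 1$, the right-hand side already exceeds $1$, hence $\mu/\gamma > 1$). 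The preceding lemma then yields $\probof{\sum_i Y_i < 1} \leq \delta$, which is precisely $\probof{\sum_i X_i < \beta} \leq \delta$.

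I do not expect a real obstacle here: the substance is the single observation that the mean-to-range ratio --- the only quantity that every step of the Chernoff-based argument behind the preceding lemma actually consumes --- is untouched by dividing through by $\beta$. The only items needing a line of care are bookkeeping: reconciling the range written as $[0,1]$ in the statement with the $\gamma$ appearing in the hypothesis (I would simply carry the bound as $\gamma$ everywhere, or, equivalently, bypass the preceding lemma and rerun its two-line estimate from \reflemma{unnormalized-chernoff} with $\beta$ in place of $1$), and checking that the side conditions inherited from the preceding lemma (and, underneath it, from \reflemma{unnormalized-chernoff} and \reflemma{unscaled-chernoff}) are implied by the ratio hypothesis for the relevant values of $\delta$.
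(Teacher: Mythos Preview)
Your approach is correct and differs from the paper's. The paper does not reduce to the preceding lemma by rescaling; instead it restarts from \reflemma{normalized-chernoff} and redoes the optimization with $\beta$ in place of $1$: it sets $\mu/\beta = \ln(1/\delta)/\beta + \ln\bigl(\ln(1/\delta)/\beta\bigr) + 1 + \eps$ and solves for the minimal admissible $\eps$, arriving at a sufficient condition of the shape
\[
\frac{\mu}{\beta} \;\geq\; \frac{\ln(1/\delta)}{\beta} + \ln\frac{\ln(1/\delta)}{\beta} + 1 + \frac{1 + \ln\bigl(\ln(1/\delta)/\beta\bigr)}{\ln(1/\delta)/\beta - 1}.
\]
Note that this is \emph{not} the hypothesis printed in the lemma: the paper's proof never mentions $\gamma$, never closes the loop back to the stated inequality, and in fact derives a $\beta$-dependent threshold rather than the $\beta$-free one in the statement. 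Your rescaling argument, by contrast, delivers exactly the stated hypothesis (reading $\gamma$ as the range bound, which is the only sensible reading) in one stroke, because the mean-to-range ratio is the sole input to the preceding lemma and is invariant under division by $\beta$. So your route is both shorter and more faithful to the lemma as written; the paper's route yields a condition more finely adapted to $\beta$, at the price of repeating the whole computation and not actually matching the printed statement.
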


\begin{proof}
  By \reflemma{normalized-chernoff}, we have
  \begin{math}
    \probof{\sum_{i=1}^n X_i < \beta} \leq \delta
  \end{math}
  if
  \begin{align*}
    \beta + \beta \ln \frac{\mu}{\beta} - \mu \leq - \ln{\redelta} %
    \iff
    \frac{\mu}{\beta} - \ln \frac{\mu}{\beta} - 1 %
    \geq %
    \frac{\ln{1/\delta}}{\beta}.
  \end{align*}
  Consider setting
  \begin{math}
    \frac{\mu}{\beta} %
    = %
    \frac{\ln{1/\delta}}{\beta} + \ln{\frac{\ln{1/\delta}}{\beta}} + 1
    + \eps
  \end{math}
  for a value $\eps$ to be determined. Then
  \begin{align}
    \frac{\mu}{\beta} - \ln \frac{\mu}{\beta} - 1 %
    -                                             %
    \frac{\ln{1/\delta}}{\beta}                   %
    =                                             %
    \ln{\frac{\ln{1/\delta}}{\beta}} + \eps - \ln{
    \frac{\ln{1/\delta}}{\beta} + \ln{\frac{\ln{1/\delta}}{\beta}} + 1
    + \eps
    }.
  \end{align}
  Set $\eps$ such that
  \begin{align*}
    \ln{\frac{\ln{1/\delta}}{\beta}} + 1
    + \eps
    =                           %
    \frac{\eps \ln{1/\delta}}{\beta};
  \end{align*}
  namely,
  \begin{align*}
    \eps =
    \frac{\ln{\ln{1/\delta} / \beta} + 1}{\ln{1/\delta}/\beta - 1}.
  \end{align*}
  Then
  \begin{align*}
    \reflastequation            %
    =                           %
    \ln{\frac{\ln{1/\delta}}{\beta}}
    + \eps                      %
    -                           %
    \ln{\epsmore \frac{\ln{1/\delta}}{\beta}}
    \tago{=}                           %
    \eps - \ln{1 +\eps}         %
    \geq 0,
  \end{align*}
  where \tagr uses the inequality $1 + \eps \leq e^{\eps}$. Plugging
  back in, we have
  \begin{align*}
    \frac{\mu}{\beta}           %
    =                           %
    \ln{\frac{\ln{1/\delta}}{\beta}} + 1 +
    \frac{\ln{\ln{1/\delta}/\beta} + 1}{\ln{1/\delta}/\beta - 1}
    =                           %
    \frac{\ln{1/\delta}/\beta}{\ln{1/\delta}/\beta - 1}
    \parof{\ln{\frac{\ln{1/\delta}}{\beta}} + 1}
  \end{align*}
\end{proof}

%%% Local Variables:
%%% mode: latex
%%% TeX-master: "cip"
%%% End:

\end{document}

%%% Local Variables:
%%% mode: latex
%%% TeX-master: t
%%% End: